\newtheorem{theorem}{Theorem}
\newtheorem{lemma}[theorem]{Lemma}
\newtheorem{corollary}[theorem]{Corollary}
\newtheorem{observation}[theorem]{Observation}
\title{Bounded, minimal, and short representations of unit interval and unit circular-arc graphs}
\author{Francisco J.\ Soulignac\thanks{CONICET and Departamento de Ciencia y Tecnología, Universidad Nacional de Quilmes, Bernal, Argentina.}}
\date{\normalsize\texttt{francisco.soulignac@unq.edu.ar}}
\newcommand{\A}{\ensuremath{\mathcal{A}}}
\newcommand{\C}{\ensuremath{\mathcal{C}}}
\newcommand{\M}{\ensuremath{\mathcal{M}}}
\newcommand{\N}{\ensuremath{\mathcal{N}}}
\newcommand{\MP}{\ensuremath{\mathcal{P}}}
\newcommand{\U}{\ensuremath{\mathcal{U}}}
\newcommand{\W}{\ensuremath{\mathcal{W}}}
\newcommand{\T}{\ensuremath{\mathcal{T}}}
\newcommand{\SEP}{\ensuremath{\textrm{sep}}}
\newcommand{\MAG}{\ensuremath{\textrm{len}}}
\newcommand{\REM}{\ensuremath{\textrm{ext}}}
\newcommand{\CONST}{\ensuremath{\textrm{const}}}
\newcommand{\REACH}{\ensuremath{\textrm{reach}}}
\newcommand{\Noses}{\ensuremath{\nu}}
\newcommand{\Hollows}{\ensuremath{\eta}}
\newcommand{\Steps}{\ensuremath{\sigma}}
\newcommand{\Bounds}{\ensuremath{\beta}}
\newcommand{\Height}{\ensuremath{h}}
\newcommand{\Column}{\ensuremath{c}}
\newcommand{\Len}{\ensuremath{\ell}}
\newcommand{\Circ}{\ensuremath{c}}
\newcommand{\Bound}{\ensuremath{d}}
\newcommand{\LeftBound}{\ensuremath{\Bound_\ell}}
\newcommand{\RightBound}{\ensuremath{\Bound_r}}
\newcommand{\Thres}{\ensuremath{d}}
\newcommand{\BegDist}{\ensuremath{\Bound_s}}
\newcommand{\Unit}{\U}
\newcommand{\Descriptor}{\ensuremath{u}}
\newcommand{\Jump}{\ensuremath{\delta}}
\newcommand{\Ratio}{\ensuremath{r}}
\newcommand{\RATIO}{\ensuremath{R}}
\newcommand{\Extra}{\ensuremath{e}}
\newcommand{\IDist}[1]{\ensuremath{{\rm\bf d}#1}}
\newcommand{\Dist}[1]{\ensuremath{{\rm\bf d}^*#1}}
\newcommand{\Syn}{\ensuremath{\mathcal{S}}}
\newcommand{\BoundedSyn}{\ensuremath{\mathcal{B}}}
\newcommand{\Red}{\ensuremath{\mathcal{R}}}
\newcommand{\RedMitas}{\ensuremath{\mathcal{T}}}
\newcommand{\sg}{\ensuremath{\mathrm{sg}}}
\newcommand{\PROBLEMFORMAT}[1]{\textsc{#1}\xspace}
\newcommand{\Rep}{\PROBLEMFORMAT{Rep}}
\newcommand{\BoundRep}{\PROBLEMFORMAT{BoundRep}}
\newcommand{\IntBoundRep}{\PROBLEMFORMAT{IntBoundRep}}
\newcommand{\BoundRepBoth}{\PROBLEMFORMAT{(Int)BoundRep}}
\newcommand{\uRep}{\PROBLEMFORMAT{$\Descriptor$-Rep}}
\newcommand{\MinUIG}{\PROBLEMFORMAT{MinUIG}}
\newcommand{\MinUCA}{\PROBLEMFORMAT{MinUCA}}
\newcommand{\MinPP}{\PROBLEMFORMAT{MinP$_q^k$}}
\newcommand{\MinPC}{\PROBLEMFORMAT{MinC$_q^k$}}
\newcommand{\itemref}[3]{\ensuremath{\rm (\hyperref[def:#2-#3]{#1}_{\ref{def:#2-#3}})}}
\newcommand{\sepref}[1]{\itemref{\SEP}{sep}{#1}}
\newcommand{\figurepath}{}
\begin{document}
\maketitle

\begin{abstract}
  We consider the unrestricted, minimal, and bounded representation problems for unit interval (UIG) and unit circular-arc (UCA) graphs.  In the unrestricted version, a proper circular-arc (PCA) model $\M$ is given and the goal is to obtain an equivalent UCA model $\Unit$.  We show a linear time algorithm with negative certification that can also be implemented to run in logspace.
  In the bounded version, $\M$ is given together with some lower and upper bounds that the beginning points of $\Unit$ must satisfy.  We develop a linear space $O(n^2)$ time algorithm for this problem.  
  Finally, in the minimal version, the circumference of the circle and the length of the arcs in $\Unit$ must be simultaneously as minimum as possible.  We prove that every UCA graph admits such a minimal model, and give a polynomial time algorithm to find it.  We also consider the minimal representation problem for UIG graphs.  As a bad result, we show that the previous linear time algorithm fails to provide a minimal model for some input graphs.  We fix this algorithm but, unfortunately, it runs in linear space $O(n^2)$ time.  Finally, we apply the minimal representation algorithms so as to find the minimum powers of paths and cycles that contain a given UIG and UCA models, respectively.

  \vspace*{.2\baselineskip} {\bf Keywords:} unit circular-arc graphs, unit interval graphs, recognition problem, bounded representation problem, minimal model, powers of paths and cycles.  
\end{abstract}

\section{Introduction}

In this article we are concerned with some recognition and representation problems for unit interval and unit circular-arc graphs.  A \emph{proper circular-arc (PCA) model} is a pair $\M = (C, \A)$ where $C$ is a circle and $\A$ is a family of inclusion-free arcs of $C$ in which no pair of arcs in $\A$ cover $C$. If some point of $C$ is crossed by no arcs, then $\M$ is a \emph{proper interval (PIG)} model.  \emph{Unit circular-arc (UCA)} and \emph{unit interval (UIG)} models correspond to the PCA and PIG models in which all the arcs have the same length, respectively.  Every PCA model $\M$ is associated with a graph $G(\M)$ that contains a vertex for each of its arcs, where two vertices are adjacent if and only if their corresponding arcs have a nonempty intersection.  A graph $G$ is a \emph{proper circular-arc (PCA)}  graph when it is isomorphic to $G(\M)$ for some PCA model $\M$.  In such a case, $G$ is said to \emph{admit} the model $\M$, while $\M$ is said to \emph{represent} $G$.  \emph{Proper interval (PIG)}, \emph{unit circular-arc (UCA)}, and \emph{unit interval (UIG)} graphs are defined analogously.

The recognition problem is well solved for UIG graphs.  Indeed, Roberts' PIG=UIG Theorem states that every PIG graph admits a UIG model~\cite{Roberts1969}.  Hence, it suffices to determine if $G$ is a PIG graph, a task that can be accomplished in linear time (e.g.~\cite{HellHuangSJDM2004/05}) or logspace~\cite{KoblerKuhnertLaubnerVerbitskySJC2011}.  Moreover, there are \emph{certifying} algorithms that exhibit either a PIG model or a forbidden induced subgraph according to whether the input graph is PIG or not.

Knowing that $G$ is a UIG graph tells us nothing about its UIG models.  In this article we deal with the stronger \emph{(unrestricted) representation} (\Rep) problem in which a UIG model $\Unit$ equivalent to an input PIG model $\M$ is to be found.  By \emph{equivalent} we mean that the extremes of $\Unit$ must appear in the same order as in $\M$.  The representation problem can be generalized to the \emph{partial representation extension} (\textsc{RepExt}) problem in which some arcs of $\M$ are \emph{pre-drawn}, and $\Unit$ must contain these arcs.  \textsc{RepExt} is in turn a special case of the more general \emph{bounded representation} (\BoundRep) problem in which a length $\Len \in \mathbb{Q}$ is given together with lower and upper bounds $\LeftBound(A), \RightBound(A) \in \mathbb{Q}$ for each arc $A$ of $\M$, and the goal is to produce a UIG model $\Unit$ in which all the arcs have length $\Len$ in such a way that $\LeftBound(A) \leq s(A) \leq \RightBound(A)$ for every arc $A$. Here $s(A) \in \mathbb{Q}$ represents the beginning point of $A$.  In this article we consider a further generalization of \BoundRep in which $\Len$, $\LeftBound(A)$, $\RightBound(A)$ are integers, and each beginning point $s(A)$ of $\Unit$ is required to be an integer as well.  We refer to this problem as the \textsc{IntBoundRep}; as far as we know, \textsc{IntBoundRep} has not been considered before.  

\Rep is a classical problem whose research is even older than PIG graphs.  Indeed, \Rep is one of the motivations in the pioneering philosophical work by Goodman~\cite{Goodman1977}, which dates back to the 1940's.  Moreover, Fine and Harrop~\cite{FineHarropJSL1957} developed, in 1957, an effective method to transform a weak mapping of an array (i.e., a PIG model) into a uniform mapping of the same array (i.e., a PIG model of a power of a path); this algorithm is actually the first proof of Robert's PIG=UIG theorem, as far as our knowledge extends.  Linear time algorithms for \Rep are known since more than two decades~\cite{CorneilKimNatarajanOlariuSpragueIPL1995,LinSoulignacSzwarcfiter2009,Mitas1994} and, recently, a logspace implementation has been devised~\cite{KoblerKuhnertVerbitsky2012}.

The research on \textsc{RepExt} and \BoundRep did not begin until recently and, consequently, they are not as studied as \Rep.  We remark that these problems are defined not only for UIG graphs, but for several graph classes with geometric representations.  In the last few years, the partial representation extension and the bounded representation problems were studied for several graph classes~\cite{BalkoKlavikOtachi2013,BlasiusRutter2013,ChaplickDorbecKratochvilMontassierStacho,ChaplickFulekKlavik2013,KlavikKratochvilKrawczykWalczak2012,KlavikKratochvilOtachiRutterSaitohSaumellVyskocil2014,KlavikKratochvilVyskovcil2011}.  Concerning PIG graphs, Balko et al.~\cite{BalkoKlavikOtachi2013} show that the bounded representation problem is solvable in $O(n^2)$ time.  Regarding UIG graphs, Klavík et al.~\cite{KlavikKratochvilOtachiRutterSaitohSaumellVyskocilC2014,KlavikKratochvilOtachiRutterSaitohSaumellVyskocil2014} designed an $O(n^2 + nD)$ time algorithm for \BoundRep, where $D$ is the cost of multiplying large numbers (requiring $r$ bits, where $r$ is the total space consumed by the bounds). As the main open problem, the authors inquire if there exists an algorithm running in less than $O(n^2 + nD)$ time.  In~\cite{KlavikKratochvilOtachiRutterSaitohSaumellVyskocilC2014,KlavikKratochvilOtachiRutterSaitohSaumellVyskocil2014}, a generalization of \BoundRep in which the output UIG model $\Unit$ needs not be equivalent to the input PIG model $\M$ is also considered; what the authors ask is for $G(\Unit)$ to be isomorphic to $G(\M)$.  Whereas \BoundRep is polynomial, this generalization is NP-complete~\cite{KlavikKratochvilOtachiRutterSaitohSaumellVyskocilC2014,KlavikKratochvilOtachiRutterSaitohSaumellVyskocil2014}.   

While introducing their research on \textsc{RepExt}, Klavík et al.\ state that ``specific properties of unit interval representations were \textbf{never} investigated since it is easier to work with combinatorially equivalent proper interval representations''~\cite{KlavikKratochvilOtachiRutterSaitohSaumellVyskocilC2014}.  However, in 1990, Pirlot proved that every PIG graph admits a \emph{minimal} UIG model~\cite{PirlotTaD1990}.  Tough Pirlot's work is not of an algorithmic nature, the main tool he uses is a space efficient representation of PIG models called the \emph{synthetic graph}.  With the aid of an appropriate weighing, this graph reflects the separation constraints that all the equivalent UIG models must satisfy.  As part of his work, Pirlot solves the problem of determining if a PIG graph admits a UIG model in which all the arcs have integer endpoints and a given length $\Len$.  Clearly, this is a specific property of UIG models.  Moreover, Pirlot introduces synthetic graphs to solve the linear program in~\cite[Proposition 5.4]{KlavikKratochvilOtachiRutterSaitohSaumellVyskocil2014} (except for the bound constraints) and, vice versa, the graph used in~\cite[Proposition 5.4]{KlavikKratochvilOtachiRutterSaitohSaumellVyskocil2014} is a synthetic graph (plus two vertices for modeling the bounds).

Similarly as above, in~\cite{GardiDM2007} Gardi claimed that, up to 2007, the algorithm by Corneil et al.~\cite{CorneilKimNatarajanOlariuSpragueIPL1995} was the \textbf{only} one able to solve \Rep in linear time.  Again, by Pirlot's theorem, it makes sense to consider the \emph{minimal UIG representation} (\MinUIG) problem, in which an input PIG model has to be transformed into an equivalent minimal UIG model.  By taking a deeper look to synthetic graphs, Mitas~\cite{Mitas1994} devised a linear time algorithm to solve \MinUIG and, thus, \Rep.  In the present manuscript we show that Mitas' algorithm sometimes fails to find the minimal model.  Yet, her algorithm correctly solves \Rep in linear time. We remark that Mitas' (1994) algorithm is contemporary to the one by Corneil et al.\ (1995).

\MinUIG is implicitly solved in a recent article by Costa et al.~\cite{CostaDantasSankoffXuJBCS2012}, where the authors devise an $O(n^2)$ time and space algorithm to solve the \MinPP problem.  In the \MinPP problem we are given a PIG model $\M$ and the goal is to find a UIG model $\Unit$ representing a power of a path $P_q^k$ in such a way that $\M$ is equivalent to some induced subgraph $\Unit'$ of $\Unit$ and $q,k$ are as minimum as possible.  As proven in~\cite{FineHarropJSL1957,LinRautenbachSoulignacSzwarcfiterDAM2011}, \MinPP is always solvable.  Moreover, $\Unit$ needs not be explicitly constructed, as it is implied by $\Unit'$.  In fact, $\Unit'$ is the solution to \MinUIG, as it follows from~\cite{LinRautenbachSoulignacSzwarcfiterDAM2011} (see also Section~\ref{sec:powers}).  In~\cite[Chapter~9]{Soulignac2010}, Soulignac mentions that Mitas' algorithm can be used to find $\Unit'$ in linear time.  Yet, Costa et al.\ do not mention this fact in~\cite{CostaDantasSankoffXuJBCS2012} although they reference~\cite{Soulignac2010} to explain the strong relation between \Rep and \MinPP.  

In this article we consider the unrestricted, bounded, and minimal representation problems for the broader class of unit circular-arc graphs.  As far as our knowledge extends, only the unrestricted version has been considered, while Lin and Szwarcfiter leave some open problems related to the minimal representation problem~\cite{LinSzwarcfiterSJDM2008}.  

As for PIG graphs, the recognition problem for PCA graphs is solvable in linear time~\cite{KaplanNussbaumDAM2009,SoulignacA2013} or logspace~\cite{KoblerKuhnertVerbitsky2012}.  Again, a PCA model or a forbidden induced subgraph is obtained according to whether the input graph is PCA or not.  We remark, however, that solving the recognition problem for PCA graphs is not enough to solve the recognition problem for UCA graphs, as not every PCA graph is UCA.  In 1974, Tucker showed a characterization by forbidden subgraphs of those PCA graphs that are UCA~\cite{TuckerDM1974}.  His proof yields an effective method to transform a PCA model $\M$ into an equivalent UCA model $\Unit$.  Unfortunately, the extremes of $\Unit$ are not guarantied to be of a polynomial size and, thus, the corresponding representation algorithm cannot be regarded as polynomial.  More than three decades later, in 2006, Durán et al.~\cite{DuranGravanoMcConnellSpinradTuckerJA2006} described how to obtain a forbidden subgraph in $O(n^2)$ time, thus solving the recognition problem.  The representation problem remained unsolved until Lin and Szwarcfiter showed how to transform any PCA model into an equivalent UCA model in linear time~\cite{LinSzwarcfiterSJDM2008}.  Their algorithm, however, does not output a \emph{negative} certificate when the input graph is not UCA.  The problem of finding a forbidden subgraph in linear time was solved by Kaplan and Nussbaum in~\cite{KaplanNussbaumDAM2009}.  Yet, up to this date, there is no \emph{unified} algorithm for solving the transformation problem while providing a negative certificate when the input model has no equivalent UCA models.  In~\cite{KoblerKuhnertVerbitskyC2013}, Köbler et al.\ mention that the representation problem in logspace is still open.

\subsection{Contributions and outline}

Synthetic graphs appeared more than two decades ago, and they are covered in detail in a book by Pirlot and Vincke~\cite[Chapter~4]{PirlotVincke1997}.  Pirlot and Mitas' articles are written in terms of semiorders; their emphasis is on preference modeling and order theory.  This could be, perhaps, the reason why synthetic graphs have gone unnoticed for many researchers in the field of algorithmic graph theory.  In this manuscript we generalize synthetic graphs to PCA models and we apply them to solve \Rep, \BoundRepBoth, \MinUIG (and its generalization \MinUCA), and \MinPP (and its generalization \MinPC) for UCA graphs.  One of our goals is to show that synthetic graphs provide a simpler theoretical ground for understanding PCA models with separation constraints.  For this reason, we re-prove some known theorems or rewrite some known algorithms in terms of synthetic graphs.

The manuscript is organized as follows.  In Section~\ref{sec:preliminaries} we describe the terminology employed.  In Section~\ref{sec:synthetic graph} we introduce synthetic graphs and show how to use them to solve \BoundRep and \textsc{IntBoundRep} in $O(n^2)$ time, improving over the algorithm in~\cite{KlavikKratochvilOtachiRutterSaitohSaumellVyskocilC2014,KlavikKratochvilOtachiRutterSaitohSaumellVyskocil2014} even when restricted to UIG graphs.  In Section~\ref{sec:tucker} we show a new version of Tucker's characterization which implies a linear time representation algorithm with negative certification, thus solving the problem posed in~\cite{KaplanNussbaumDAM2009}.  The implementation of this algorithm appears in Section~\ref{sec:linear algorithm}, while Section~\ref{sec:logspace algorithm} contains a logspace implementation that solves the open problem of~\cite{KoblerKuhnertVerbitskyC2013}.  To apply our algorithm we need to find, as we call it, the ratio of the input model.  This ratio can be computed by invoking the recognition algorithm of~\cite{KaplanNussbaumDAM2009}.  However, we show an alternative implementation in Section~\ref{sec:kaplan and nussbaum}, by taking advantage of synthetic graphs.  The forbidden structure that we employ to characterize UCA graphs is a cycle of the synthetic graph which, a priori, is unrelated to the $(a,b)$-independents and $(x,y)$-circuits employed by Tucker.  In Section~\ref{sec:circuits and independents} we show that, in fact, these structures are strongly related.  In Section~\ref{sec:minimal uca} we extend the concept of minimal models to UCA graphs, and prove that every UCA graph admits a minimal model.  An algorithm to generate such a model in polynomial time is also exhibited.  In Section~\ref{sec:minimal uig}, we consider the \MinUIG problem.  We show that, even though Mitas' algorithm correctly solves \Rep, it sometimes fails to provide a minimal model.  We propose a patch but, unfortunately, the new algorithm runs in $O(n^2)$ time.  In Section~\ref{sec:powers} we show how \MinUIG and \MinUCA can be used so as to solve \MinPP and \MinPC, respectively.  The obtained algorithm for \MinPP runs in $O(n^2)$ time but it consumes only linear space.  Finally, we include some further remarks and open problems in Section~\ref{sec:remarks}.

\subsection{What is linear time for PCA models?}

As discussed in~\cite{Soulignac2010}, every PCA model $\M$ can be encoded with $O(n)$ bits, $n$ being the number of arcs in $\M$.  Thus, in theory, an algorithm on $\M$ is linear when it applies $O(n)$ operations on bits.  However, it is a common practice to assume that $\M$ is implemented with $\Theta(n)$ pointers in such a way that the extremes of an arc can be obtained in $O(1)$ time when the other extreme is given (see~\cite{Soulignac2010}).  Following this tradition, we state that an algorithm is linear when it performs $O(n)$ operations on pointers of size $\Theta(\log n)$.

\section{Preliminaries}
\label{sec:preliminaries}

In this article we consider simple (undirected) graphs and multidigraphs with no loops.  For the sake of simplicity, we refer to the latter as \emph{digraphs} and to its directed edges as \emph{edges}, unless otherwise stated.  For a (di)graph $G$ we write $V(G)$ and $E(G)$ to denote the sets of vertices and bag of edges of $G$, respectively, while we use $n$ and $m$ to denote $|V(G)|$ and $|E(G)|$, respectively.  For any pair $u,v \in V(G)$, we denote the (directed) edge between $u$ and $v$ (\emph{from or starting at} $u$ \emph{to  or ending at} $v$) by $uv$.  This notation is used regardless of whether $uv$ is an edge of $G$ or not.  To avoid confusions, we write $u \to v$ as an equivalent of $uv$ when $G$ is a digraph; the \emph{in-} and \emph{out-degrees} of $v$ are the number of vertices $u$ such that $u \to v$ and $v \to u$, respectively.   

A walk $W$ of a (di)graph $G$ is a sequence of vertices $v_1, \ldots, v_k$ such that $v_iv_{i+1}$ is an edge of $G$, for every $1 \leq i < k$.  Walk $W$ goes \emph{from (or starts at)} $v_1$ \emph{to (or ends at)} $v_k$.  Each walk can be regarded as the bag of edges $\{v_iv_{i+1} \mid 1 \leq i < k\}$.  For the sake of simplicity, we make no distinctions about $W$ begin a sequence of vertices or a bag of edges.  We say that $W$ is a circuit when $v_k = v_1$, that $W$ is a \emph{path} when $v_i \neq v_j$ for every $1 \leq i < j \leq k$, and that $W$ is a \emph{cycle} when it is a circuit and $v_1, \ldots, v_{k-1}$ is a path.  Sometimes we also say that $W$ is a circuit when $v_1 \neq v_k$, to mean that $W, v_1$ is a circuit.  If $G$ contains no cycles, then $G$ is an \emph{acyclic} digraph.

An \emph{edge weighing}, or simply a \emph{weighing}, of a (di)graph $G$ is a function $w\colon E(G) \to \mathbb{R}$.  The value $w(uv)$ is referred to as the \emph{weight} of $uv$ (with respect to $w$).  For any bag of edges, the \emph{weight} of $E$ (with respect to an edge weighing $w$) is $w(E) = \sum_{uv \in E}w(uv)$.  We use two distance measures on a (di)graph $G$ with a weighing $w$.  For $u, v \in V(G)$, we denote by $\Dist{w}(G, u, v)$ the maximum among the weights of the walks from $u$ to $v$, while $\IDist{w}(G, u,v)$ denotes the maximum among the weights of the paths starting at $u$ and ending at $v$.  Note that $\IDist{w}(G,u,v) < \infty$ for every $u,v$, while $\Dist{w}(G, u, v) = \IDist{w}(G, u, v)$ when $G$ contains no cycle of positive weight~\cite{CormenLeisersonRivestStein2009}.  For a weighing $w'$, we write $(\IDist{w} \circ \IDist{w'})(G, u, v) = \max\{w(W) \mid W \text{ is a path from $u$ to $v$ with } w'(W) = \IDist{w'}(G, u, v)\}$.  In other words, $\IDist{w} \circ \IDist{w'}$ measures the $w$-distance from $u$ to $v$ when only those walks that impose the maximum $w'$-distance from $u$ to $v$ are considered.  For the sake of notation, we omit the parameter $G$ when there are no ambiguities.

A \emph{straight plane} (di)graph, or simply a \emph{plane} (di)graph, is a (di)graph whose vertices are coordinates in the plane and whose edges are non-crossing straight lines.  Similarly, a \emph{toroidal} (di)graph is a (di)graph whose vertices and edges can be placed on the surface of a torus in such a way that no pair of edges intersect.

A \emph{proper circular-arc} (PCA) model $\M$ is a pair $(C, \A)$, where $C$ is a circle and $\A$ is a collection of inclusion-free arcs of $C$ such that no pair of arcs in $\A$ cover $C$.  When traversing the circle $C$, we always choose the clockwise direction.  If $s, t$ are points of $C$, we write $(s, t)$ to mean the arc of $C$ defined by traversing the circle from $s$ to $t$; $s$ and $t$ are the \emph{extremes} of $(s, t)$, while $s$ is the \emph{beginning point} and $t$ the \emph{ending point}.  For $A \in \A$, we write $A = (s(A), t(A))$. The \emph{extremes} of $\A$ are those of all arcs in $\A$, and two extremes $s_1s_2$ of $\M$ are \emph{consecutive} when there is no extreme $s \in (s_1, s_2)$ (note that $s_2s_1$ is not consecutive in this case).  We assume $C$ has a special point $0$ that is used for describing the bounds on the extremes (cf.\ below).  This point is only denotational for the unbounded case.  For every pair of points $p_1, p_2$, we write $p_1 < p_2$ to indicate that $p_1$ appears before $p_2$ in a traversal of $C$ from $0$.  Similarly, we write $A_1 < A_2$ to mean that $s(A_1) < s(A_2)$ for any pair of arcs $A_1, A_2$ on $C$.  

A \emph{unit circular-arc} (UCA) model is a circular-arc model $\M$ in which all the arcs have the same length.  Let $A_1 < \ldots < A_n$ be the arcs of $\M = (C, \A)$, $\Circ, \Len \in \mathbb{Q}_{> 0}$, $\Thres, \BegDist \in \mathbb{Q}_{\geq 0}$, and $\LeftBound, \RightBound \colon \A \to \mathbb{Q}_{\geq0}$.  We say that $\M$ is a \emph{$(\Circ, \Len, \Thres, \BegDist, \LeftBound, \RightBound)$-CA model} when:
\begin{enumerate}[({unit}$_1$)]
  \item $C$ has circumference $\Circ$, \label{def:unit-1}
  \item all the arcs of $\A$ have length $\Len$, \label{def:unit-2}
  \item $(p_1, p_2)$ has length at least $\Thres$ for every pair of consecutive extremes $p_1p_2$, \label{def:unit-3}
  \item $(s_1, s_2)$ has length at least $\Thres+\BegDist$ for any pair of beginning points $s_1, s_2$, and \label{def:unit-3s}
  \item $\LeftBound(A_i) \leq s(A_i) \leq \Circ - \RightBound(A_i)$ for every $1 \leq i \leq n$. \label{def:unit-4}
\end{enumerate}
\newcommand{\unitref}[1]{\itemref{unit}{unit}{#1}}
Intuitively, $\M$ is a UCA model in which the extremes are separated by at least $\Thres$ space, the beginning points are separated by $\Thres + \BegDist$ space, and $\LeftBound(A_i)$ and $\RightBound(A_i)$ are lower bounds of the separation from $0$ to $s(A_i)$ and from $s(A_i)$ to $0$, respectively.  We simply write that $\M$ is a \emph{$(\Circ, \Len, \Thres, \BegDist)$-CA model} to indicate that $\LeftBound = \RightBound = 0$, and that $\M$ is a $(\Circ, \Len)$-CA model to mean that $\M$ is a $(\Circ, \Len, 1, 0)$-CA model.  To further simplify the notation, we refer to the tuple $\Descriptor = (\Circ, \Len, \Thres, \BegDist, \LeftBound, \RightBound)$ as a \emph{UCA descriptor}, and we say that $\Descriptor$ is \emph{integer} when $\Circ$, $\Len$, $\Thres$, $\BegDist$, $\LeftBound$, and $\RightBound$ are integers.  Similarly, a $\Descriptor$-CA model $\M$ is \emph{integer} when $\Circ$, $\Len$ and all the extremes of $\M$ are integers.  

A \emph{proper interval} (PIG) model is a PCA model $\M$ in which no arc crosses $0$; if $\M$ is also UCA, then $\M$ is a \emph{unit} interval (UIG) model.  Any UIG model $\M$ is a $\Descriptor$-CA model for some large enough $\Circ$; for simplicity, we just write $\Circ = \infty$ in this case.  For this reason, we say that $\M$ is an $(\Len, \Thres, \BegDist, \LeftBound, \RightBound)$-IG (resp.\ $(\Len, \Thres, \BegDist)$, $\Len$-IG) model when $\M$ is a $(\infty, \Len, \Thres, \BegDist, \LeftBound, \RightBound)$-CA (resp.\ $(\infty, \Len, \Thres, \BegDist)$, $(\infty, \Len)$-CA) model.  That is, $\M$ is an $(\Len, \Thres, \BegDist, \LeftBound, \RightBound)$-IG model when all the arcs have length $\ell$, every pair of consecutive extremes is separated by $\Thres$ space, every pair of beginning points is separated by $\Thres+\BegDist$ space, and $\LeftBound$ and $\RightBound$ impose lower and upper bounds on the beginning points of $\M$.

Each PCA model $\M$ \emph{represents} a \emph{proper circular-arc} graph $G(\M)$ that contains a vertex for each arc of $\M$ where two vertices are adjacent if and only if their corresponding arcs have nonempty intersection. Conversely, we say that a graph $G$ \emph{admits} a PCA model $\M$ to mean that $G$ is isomorphic to $G(\M)$.  If $\M$ is UCA, then $G(\M)$ is a \emph{unit circular-arc (UCA)} graph, while if $\M$ is PIG (resp.\ UIG), then $G(\M)$ is a \emph{proper interval (PIG)} (resp.\ \emph{unit interval}; \emph{UIG}) graph.  

Clearly, two PCA models $\M_1 = (C_1, \A_1)$ and $\M_2 = (C_2, \A_2)$ are equal when $C_1 = C_2$ and $\M_1 = \M_2$.  We say that $\M_1$ is \emph{equivalent} to $\M_2$ when the extremes of $\M_1$ appear in the same order as in $\M_2$.  Formally, $\M_1$ and $\M_2$ are equivalent if there exists $f\colon\A_1 \to \A_2$ such that $e(f(A))e'(f(B))$ are consecutive if and only if $e(A)e'(B)$ are consecutive, for $e, e' \in \{s,t\}$. By definition, $\M_1$ and $\M_2$ are equivalent whenever they are equal.

In this manuscript we consider several related recognition problems.  In the \emph{representation} (\Rep) problem a UCA model equivalent to an input PCA model $\M$ must be generated.  Of course, \Rep is unsolvable when $\M$ is equivalent to no UCA model, a \emph{negative certificate} is desired in such a case.  In the \uRep problem, a (an integer) UCA descriptor $\Descriptor$ is given together with $\M$, and the goal is to build a (an integer) $\Descriptor$-CA model $\Unit$.  We remark that an integer $\Unit$ equivalent to $\M$ exists whenever $\Descriptor$ is integer and \uRep is solvable.  The \emph{bounded representation} (\BoundRep) is a slight variation of \uRep in which a feasible $\Thres > 0$ must be found by the algorithm, as it is not given as input.  That is, we are given a PCA model $\M = (C, \A)$ together with $\Circ, \Len \in \mathbb{Q}_{>0}$, $\BegDist \in \mathbb{Q}_{\geq 0}$ and $\LeftBound, \RightBound \colon \A \to \mathbb{Q}_{\geq 0}$, and we ought to find a $\Descriptor$-CA model equivalent to $\M$ for some UCA descriptor $\Descriptor = (\Circ, \Len, \Thres, \BegDist, \LeftBound, \RightBound)$ with $\Thres \in \mathbb{Q}_{> 0}$.  The \emph{integer bounded representation} (\IntBoundRep) problem is a generalization of \BoundRep in which all the input values are integers and the output model must be integer as well.  We also study the \MinUIG, \MinUCA, \MinPP, and \MinPC problems that are related to \emph{minimal} models.  We postpone their definitions to Sections \ref{sec:minimal uca}~and~\ref{sec:powers}.

\subsection{Restrictions on the input models}

As it is customary in the literature, in this article we assume that all the arcs of a PCA model $\M$ are open and no two extremes of $\M$ coincide.  The reason behind these assumptions is that $\M$ can always be transformed into an equivalent model $\M'$ that satisfies these properties.  A word of caution is required, though, as in this article we deal with the lengths of the arcs.  If we allow coincidences in the extremes of $\M$, for instance, it is possible to shrink the length of the arcs or the circle of some UCA models.  We emphasize, nevertheless, that all the arguments in this article, with the obvious adjustments, work equally well without these assumptions.  In particular, note that the articles by Klavík et al., Mitas, and Pirlot allow coincident extremes~\cite{KlavikKratochvilOtachiRutterSaitohSaumellVyskocilC2014,Mitas1994,PirlotTaD1990,PirlotDAM1991}.  

By definition, for us PCA models cannot have two arcs covering the circle.  This is a somehow artificial restriction that we impose for the sake of simplicity.  In general, this class of models is said to be \emph{normal}.  However, it is well known that every non-normal PCA model can be transformed into a normal PCA model in linear time or logspace (see e.g.~\cite{KaplanNussbaumDAM2009}). Moreover, note that if two arcs in a UIG model cover the circle, then such a model represents a complete graph.  The complete graph on $n$ vertices admits the \emph{minimal} UIG model $\{(i, i+n+1) \mid 1 \leq i \leq n\}$, thus we do not lose much by excluding these non-normal models when dealing with \Rep, \MinUCA, and \MinUIG.  In turn, the fact that $\M$ is normal is not used in Theorem~\ref{thm:separation constraints}, thus \BoundRepBoth is also solvable for non-normal models.

Finally, we require two additional restrictions on the input PCA models for technical reasons.  We say that a PCA model $\M$ with arcs $A_1 < \ldots < A_n$ is \emph{trivial} when either 
\begin{enumerate}
  \item $s(A_n) < t(A_1)$, or 
  \item $s(A_i)t(A_i)$ are consecutive for some $1 \leq i \leq n$.
\end{enumerate}
If 1.\ holds, then we cannot claim that $h(\M) \geq 1$ in Section~\ref{sec:synthetic graph:separation}.  However, in this case $\M$ represents a complete graph and $\{(i, i+n+1) \mid 1 \leq i \leq n\}$ is the unique minimal and integer UCA model equivalent to $\M$.  Thus all the considered problems are trivial in this case.  If 2.\ is true, then $A_i \to A_i$ is a loop of the digraph $\BoundedSyn(\M)$ defined in Section~\ref{sec:synthetic graph}.  We can certainly allow the existence of such a loop in $\BoundedSyn$.  However, this edge plays no role in the considered problems as $\SEP(A_i \to A_i) < 0$ by \sepref{3}.

\section{The synthetic graph of a PCA model}
\label{sec:synthetic graph}

Pirlot introduced the synthetic graph of a PIG model~\cite{PirlotTaD1990,PirlotDAM1991} to represent the separation constraints of its extremes in any equivalent UIG representation.  In this section we extend them to PCA models and we show that they correctly reflect the separation constraints in any equivalent UCA model.

Let $\M = (C, \A)$ be a PCA model with arcs $A_1 < \ldots < A_n$.  The \emph{bounded synthetic graph} of $\M$ is the digraph $\BoundedSyn(\M)$ (see Figure~\ref{fig:synthetic graph}) that has a vertex $v(A_i)$ for each $A_i \in \A$ and a vertex $A_0$, and whose edge set is $E_\Steps \cup E_\Noses \cup E_\Hollows \cup E_\Bounds$, where:
\begin{itemize}
 \item $E_\Steps = \{v(A_i) \to v(A_{i+1}) \mid 1 \leq i \leq n,\ A_{n+1} = A_1\}$,
 \item $E_\Noses = \{v(A_i) \to v(A_j) \mid t(A_i)s(A_j) \text{ are consecutive in } \M\}$,
 \item $E_\Hollows = \{v(A_i) \to v(A_j) \mid s(A_i)t(A_j) \text{ are consecutive in } \M\}$, and
 \item $E_\Bounds = \{A_0 \to v(A_i), v(A_i) \to A_0 \mid 1\leq i \leq n\}$.
\end{itemize}
The edges in $E_\Steps$, $E_\Noses$, $E_\Hollows$, and $E_\Bounds$ are said to be the \emph{steps}, \emph{noses}, \emph{hollows}, and \emph{bounds} of $\BoundedSyn(\M)$, respectively.  (Note that $E_\Steps$, $E_\Noses$ and $E_\Hollows$ could have a nonempty intersection, even if this is not the common case.  However, $\BoundedSyn(\M)$ has no loops as $\M$ is not trivial.)  For the sake of simplicity, we usually drop the parameter $\M$ from $\BoundedSyn(\M)$ when no ambiguities are possible.  Moreover, we regard the arcs of $\M$ as being the vertices of $\BoundedSyn$, thus we may say that $A_i \to A_j$ is a nose instead of writing that $v(A_i) \to v(A_j)$ is a nose.  

\newlength{\VerticalCentering}
\newlength{\WidthA}
\newlength{\WidthB}
\settoheight{\VerticalCentering}{\includegraphics{\figurepath fig-synthetic-graph}}
\settowidth{\WidthA}{\includegraphics{\figurepath fig-pca-model}}
\settowidth{\WidthB}{\includegraphics{\figurepath fig-synthetic-graph}}

\begin{figure}[th]%
  \hfil
  \begin{minipage}[c]{\WidthA}
    \begin{minipage}[c]{0pt}
      \rule{0pt}{\VerticalCentering}
    \end{minipage}%
    \begin{minipage}[c]{\linewidth}
      \includegraphics{\figurepath fig-pca-model}
    \end{minipage}%
    \subcaption{A PCA model $\M$.}%
  \end{minipage}%
  \hfil
  \begin{minipage}[c]{\WidthB}
    \includegraphics{\figurepath fig-synthetic-graph}
    \subcaption{The synthetic graph $\Syn(\M)$.}
  \end{minipage}%
  \hfil
  \caption{\small Synthetic graph $\BoundedSyn \setminus A_0$ of a PCA model. Each gray vertex corresponds to a black vertex (we separate them for the sake of exposition) and each edge is drawn only once.  The edges are labeled with $\Noses$, $\Hollows$, and $\Steps$ according to whether they are noses, hollows, and steps, respectively.  The height of $\M$ is $\Height = 3$ and each vertex is drawn in a row that corresponds to its height; the height is indicated to the left.  Note that there are $1$-, $(-\Height)$-, and $(1-\Height)$-noses, $0$-, $(-1)$-, $\Height$-, and $(\Height-1)$-hollows, and $0$-, $1$-, and $(-\Height)$-steps.}\label{fig:synthetic graph}
\end{figure}

We classify the edges of $\BoundedSyn$ in two classes according to the positions of their arcs.  We say a step (resp.\ nose) $A_i \to A_j$ is \emph{internal} when $i < j$, while a hollow is \emph{internal} when $i > j$.  Non-internal edges are referred to as \emph{external}; in particular, all the bounds are external.  Observe that every step is internal except $A_n \to A_1$.  Similarly, a nose $A_i \to A_j$ is internal if and only if the arc $(s(A_i), s(A_j))$ does not cross $0$, while a hollow $A_i \to A_j$ is internal if and only if $(s(A_j), s(A_i))$ does not cross $0$.  Since the purpose of $A_0$ is to represent the point $0$ of $\M$, we can say, in short, that $A_i \to A_j$ is internal when $0$ is not crossed in the traversal of the extremes involved in the definition of $A_i \to A_j$.

We define a special edge weighing $\SEP_\Descriptor$ of $\BoundedSyn$ whose purpose is to indicate how far or close must $s(A_i)$ and $s(A_j)$ be in any $\Descriptor$-CA model equivalent to $\M$, for every edge $A_i \to A_j$ of $\BoundedSyn$.  For a UCA descriptor $\Descriptor$, the edge weighing \emph{$\SEP_\Descriptor$} is such that:
\begin{enumerate}[($\SEP_1$)]
 \item $\SEP_\Descriptor(A_i \to A_j) = \Thres+\BegDist-\Circ q$ if $A_i \to A_j$ is a step, \label{def:sep-1}
 \item $\SEP_\Descriptor(A_i \to A_j) = \Thres+\Len-\Circ q$ if $A_i \to A_j$ is a nose,\label{def:sep-2}
 \item $\SEP_\Descriptor(A_i \to A_j) = \Thres + \Circ q-\Len$ if $A_i \to A_j$ is a hollow, and\label{def:sep-3}
 \item $\SEP_\Descriptor(A_0 \to A_i) = \LeftBound(A_i)$ and $\SEP(A_i \to A_0) = \RightBound(A_i) - \Circ$ for every $1 \leq i \leq n$,\label{def:sep-4}
\end{enumerate}
where $q \in \{0,1\}$ equals $0$ if and only if $A_i \to A_j$ is internal.  For the sake of notation, we omit the subscript $\Descriptor$ from $\SEP$ when no ambiguities are possible.  Suppose for a moment that $\M$ is a $\Descriptor$-CA model.  By definition, $s(A_{j}) \geq s(A_{i})+\Len+\Thres$ when $A_i \to A_j$ is a internal nose of $\BoundedSyn$, while $s(A_{j}) \geq s(A_{i}) + \Len + \Thres - \Circ$ when $A_i \to A_j$ is an external nose of $\BoundedSyn$.  Thus, equation \sepref{2} models the non-intersection constraints imposed by the noses of $\BoundedSyn$.  A similar analysis shows that \sepref{1} indicates that all the beginning points must be at distance at least $\Thres+\BegDist$, \sepref{3} models the intersection constraints imposed by the hollows of $\BoundedSyn$, and \sepref{4} models the bound constraints, assuming that $A_0$ represents $0$ in $\M$.  

As we shall see in Theorem~\ref{thm:separation constraints}, a $\Descriptor$-CA model equivalent to $\M$ exists when the longest path problem with weight $\SEP$ has a feasible solution on $\BoundedSyn$.  In such case, a $\Descriptor$-CA model can be generated by observing the distances from $A_0$.  With this in mind, we define $\Unit(\M, \Descriptor)$ to be the $\Descriptor$-CA model with arcs $U_1, \ldots, U_n$ such that $s(U_i) = \IDist{\SEP}(A_0, A_i)$, for every $1 \leq i \leq n$ (we assume arithmetic modulo $\Circ$).  For simplicity, we omit $\M$ and $\Descriptor$ from $\Unit$ as usual. 

\begin{theorem}\label{thm:separation constraints}
 The following statements are equivalent for a PCA model $\M$ with arcs $A_1 < \ldots < A_n$ and a (an integer) UCA descriptor $\Descriptor$:
 \begin{enumerate}[(i)]
  \item $\M$ is equivalent to a $\Descriptor$-CA model.\label{thm:separation constraints:equivalence}
  \item $\SEP(\W) \leq 0$ for every cycle $\W$ of $\BoundedSyn$. \label{thm:separation constraints:cycles}
  \item $\Unit$ is a (an integer) $\Descriptor$-CA model equivalent to $\M$.\label{thm:separation constraints:model}
 \end{enumerate}
\end{theorem}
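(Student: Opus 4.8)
The plan is to establish the cycle of implications $\itemref{iii}{}{} \Rightarrow \itemref{i}{}{} \Rightarrow \itemref{ii}{}{} \Rightarrow \itemref{iii}{}{}$, since $\itemref{iii}{}{}$ trivially implies $\itemref{i}{}{}$ and the remaining two implications carry the content.

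\begin{proof}[Proof plan]
The proof will go by the chain $\itemref{iii}{}{} \Rightarrow \itemref{i}{}{} \Rightarrow \itemref{ii}{}{} \Rightarrow \itemref{iii}{}{}$; the first implication is immediate from the definition of $\Unit$, so the work lies in the last two.

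First I would prove $\itemref{i}{}{} \Rightarrow \itemref{ii}{}{}$. Suppose $\M$ is equivalent to a $\Descriptor$-CA model $\M' = (C', \A')$ with arcs $A'_1 < \ldots < A'_n$; by the discussion preceding the theorem, for every edge $A_i \to A_j$ of $\BoundedSyn$ the beginning points satisfy $s(A'_j) - s(A'_i) \geq \SEP(A_i \to A_j)$ \emph{up to the correction term} $\Circ q$: when the edge is internal the inequality holds with the true coordinates on the circle cut at $0$, and when it is external one crosses $0$ exactly once so the inequality holds after adding $\Circ$. The key bookkeeping observation is that along any cycle $\W = A_{i_1} \to \cdots \to A_{i_k} \to A_{i_1}$ of $\BoundedSyn$, summing these inequalities makes the coordinate differences telescope to $0$, while the number of external edges in $\W$ equals the number of times $0$ is wound around, so the total $\Circ q$ correction is exactly $\Circ$ times the winding number; hence $0 = \sum (s(A'_{i_{t+1}}) - s(A'_{i_t})) \geq \SEP(\W) - \Circ\cdot(\text{winding}) \cdot(\text{sign adjustments})$, and a careful accounting yields $\SEP(\W) \leq 0$. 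The one subtlety here is making precise why crossing $0$ and the external/internal dichotomy line up perfectly with the $q$ in $\sepref{1}$–$\sepref{3}$ and with the $-\Circ$ term in the bound edge $\sepref{4}$; I would handle the bound edges by noting that a cycle through $A_0$ uses one edge $A_0 \to A_i$ and one edge $A_j \to A_0$, contributing $\LeftBound(A_i) + \RightBound(A_j) - \Circ$, which is exactly the slack available since $s(A'_i) \geq \LeftBound(A_i)$ and $\Circ - s(A'_j) \geq \RightBound(A_j)$.

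Next, $\itemref{ii}{}{} \Rightarrow \itemref{iii}{}{}$. Assuming $\SEP(\W) \leq 0$ for every cycle $\W$, the digraph $\BoundedSyn$ has no positive-weight cycle, so by the standard shortest/longest path theory cited in the paper $\IDist{\SEP}(A_0, A_i) = \Dist{\SEP}(A_0, A_i)$ is finite and well defined, and the triangle-type inequality $\IDist{\SEP}(A_0, A_j) \geq \IDist{\SEP}(A_0, A_i) + \SEP(A_i \to A_j)$ holds for every edge. Setting $s(U_i) = \IDist{\SEP}(A_0, A_i) \bmod \Circ$ and arcs of length $\Len$, I must verify three things: (a) the resulting arcs satisfy \unitref{1}–\unitref{4}, which follows edge by edge — steps give \unitref{3s}, noses give the non-intersection spacing of \unitref{3}, hollows give the intersection spacing, and the two bound edges through $A_0$ give \unitref{4}; (b) $\Unit$ is \emph{equivalent} to $\M$, i.e.\ the cyclic order of the $4n$ extremes is preserved and consecutiveness is respected — this is where the $\IDist{\SEP} \circ$ structure and the fact that steps force the $s(U_i)$ into the correct cyclic order around $C$ matters, and one checks that a nose/hollow edge being present forces the corresponding pair of extremes to be consecutive with nothing squeezed between them; and (c) in the integer case, all the data defining $\SEP$ are integers, longest paths in an integer-weighted digraph have integer length, so all $s(U_i)$ and hence all extremes are integers, giving an integer $\Descriptor$-CA model.

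The main obstacle I expect is part (b) of the last implication: proving that the coordinates read off from the longest-path distances actually reproduce the combinatorial structure of $\M$ (same circular order and same consecutiveness of extremes), rather than merely satisfying the separation inequalities. This requires showing that the $\SEP$ weights are \emph{tight enough} — that consecutive extremes in $\M$ cannot drift apart past an intervening extreme and non-consecutive ones cannot collapse — which typically uses that $\BoundedSyn$ encodes \emph{both} a lower bound (via the direct edge) and, through a short opposing path of noses/hollows/steps, effectively an upper bound on each gap; pinning down that every relevant upper bound is witnessed by such a path, including the wrap-around through $A_0$, is the delicate part. Everything else is routine edge-by-edge verification and the standard no-positive-cycle $\Rightarrow$ well-defined-longest-paths fact.
\end{proof}
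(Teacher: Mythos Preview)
Your plan follows the same chain of implications as the paper and is correct in outline, but you are overcomplicating both nontrivial steps.

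For $\text{(i)}\Rightarrow\text{(ii)}$: the $\Circ q$ correction is already built into $\SEP$ via $\sepref{1}$--$\sepref{4}$, so there is no separate winding-number bookkeeping to do. Setting $s(A_0'):=0$, one simply has $s(A_j')\geq s(A_i')+\SEP(A_i\to A_j)$ for \emph{every} edge of $\BoundedSyn$, including bounds and external edges; summing around any cycle $\W$ telescopes to $0\geq\SEP(\W)$ in one line. Your paragraph about ``sign adjustments'' and ``winding'' is unnecessary noise.

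For $\text{(ii)}\Rightarrow\text{(iii)}$: the obstacle you flag --- recovering the full combinatorial equivalence, not just the separation inequalities --- is real, but the resolution is not via ``opposing paths witnessing upper bounds''. The paper simply verifies, edge by edge, that for every consecutive pair of extremes $p_1p_2$ in $\M$ the corresponding extremes of $\Unit$ satisfy $p_1+\Thres\leq p_2$ (with the $\Circ$ shift when the pair is external): steps handle $ss$ pairs, hollows handle $st$ pairs, noses handle $ts$ pairs, and $tt$ pairs follow from $ss$ since all arcs have length~$\Len$. That already covers all $2n$ consecutive-extreme constraints. The ``upper bound'' on each gap that you worry about is then automatic: the $2n$ directed gaps around the circle sum to exactly~$\Circ$, so if each is at least $\Thres>0$ then none can exceed $\Circ-(2n-1)\Thres$, and in particular no extra extreme can slip between a pair that is supposed to be consecutive. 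This pigeonhole observation replaces the machinery you anticipate needing.
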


\begin{proof}
 (\ref{thm:separation constraints:equivalence}) $\Rightarrow$ (\ref{thm:separation constraints:cycles}).  Suppose $\M$ is equivalent to a $\Descriptor$-CA model $\M'$ with arcs $A_1', \ldots, A_n'$ such that $A_i'$ corresponds to $A_i$ for $1 \leq i \leq n$.  Write $s(A_0')$ to mean the point $0$ of $\M'$.  Then, it is not hard to see (cf.\ above) that $s(A'_{j}) \geq s(A_i') + \SEP(A_i \to A_{j})$ for every edge $A_i \to A_j$ of $\BoundedSyn$.  Hence, by induction, $s(A_i') \geq s(A_i') + \SEP(\W)$ for every cycle $\W$ of $\BoundedSyn$ that contains $A_i$.
 
 (\ref{thm:separation constraints:cycles}) $\Rightarrow$ (\ref{thm:separation constraints:model}).  Let $U_1 < \ldots < U_n$ be the arcs of $\Unit$, $U_{n+1} = U_1$, $A_{n+1} = A_1$, and note that $\Dist{\SEP}(A_i, A_j) = \IDist{\SEP}(A_i, A_j)$ for every $0 \leq i, j \leq n$ as $\BoundedSyn$ has no cycles of positive length.  Thus, by \sepref{4}, $\Unit$ satisfies \unitref{4} as $s(U_i) = \IDist{\SEP}(A_0, A_i) \geq \LeftBound(A_i)$ and $s(U_i) + \RightBound(A_i) - \Circ = \IDist{\SEP}(A_0, A_i) + \RightBound(A_i) - \Circ \leq \IDist{\SEP}(A_0, A_0) = 0$.  Since $\Unit$ satisfies \unitref{1}--\unitref{2} by definition, it follows that $\Unit$ is a $(\Circ, \Len, \Thres', \BegDist', \LeftBound, \RightBound)$-CA model for some $\Thres', \BegDist'$.  To prove that $\Unit$ is a $\Descriptor$-CA model equivalent to $\M$, it suffices to see that (a) $s(U_{i}) + \Thres + \BegDist \leq s(U_{i+1})$ for every $1 \leq i \leq n$, (b)  $s(U_i) + \Thres \leq t(U_j)$ when $s(A_i)t(A_j)$ are consecutive in $\M$, and (c) $t(U_i) + \Thres \leq s(U_j)$ when $t(A_i)s(A_j)$ are consecutive in $\M$.
 \begin{description}
  \item[(a)]  $A_{i} \to A_{i+1}$ is a step, thus $s(U_{i}) + \Thres + \BegDist= \IDist{\SEP}(A_0, A_{i}) + \Thres + \BegDist \leq \IDist{\SEP}(A_0, A_{i+1}) = s(U_{i+1})$.
  \item[(b)]  $A_i \to A_j$ is a hollow of $\BoundedSyn$; let $q \in \{0,1\}$ be $1$ if and only if $(s(A_j), s(A_i))$ crosses $0$.  Note that, equivalently, $q = 1$ if and only if $A_i \to A_j$ is external.  Thus, $t(U_j) = s(U_j) + \Len - \Circ q = \Dist{\SEP}(A_0, A_j) + \Len - \Circ q \geq \Dist{\SEP}(A_0, A_i) + \SEP(A_i \to A_j) + \Len - \Circ q = s(U_i) + \Thres$.  
  \item[(c)]  $A_i \to A_j$ is a nose of $\BoundedSyn$; if $q \in \{0,1\}$ equals $1$ when $(s(A_i), s(A_j))$ crosses $0$, then $s(U_j) \geq \Dist{\SEP}(A_0, A_i) + 1 + \Len- \Circ q \geq t(U_i) + \Thres$.
 \end{description}
 
 (\ref{thm:separation constraints:model})  $\Rightarrow$ (\ref{thm:separation constraints:equivalence}). Trivial.
\end{proof}

When restricted to PIG models, Theorem~\ref{thm:separation constraints} is a somehow alternative formulation of Proposition 2.5 in~\cite{PirlotTaD1990}; see also Proposition~5.4 in~\cite{KlavikKratochvilOtachiRutterSaitohSaumellVyskocilC2014}.

\subsection{The bounded representation problem}
\label{sec:synthetic graph:bounded algorithm}

Though simple enough, Theorem~\ref{thm:separation constraints} allows us to solve \uRep as follows.  First, we build the digraph $\BoundedSyn$ in which every edge $A_i \to A_j$ is weighed with $s_{ij} = \SEP_\Descriptor(A_i, A_j)$.  Then, we invoke the Bellman-Ford shortest path algorithm~\cite{CormenLeisersonRivestStein2009} on $\BoundedSyn$ to obtain $s_{i} = \Dist{\SEP}(A_0, A_i)$ for every $0 \leq i \leq n$.  If Bellman-Ford ends in success, then we output $\Unit(\M, \Descriptor)$; otherwise, we output the cycle of positive weight found as the negative certificate.  

Bellman-Ford computes each value $s_i = s_i^n$ in an iterative manner.  At iteration $k$, the value of $s_i$ is updated to $s_i^k = \max\{s_j^{k-1} + s_{ji} \mid A_j \to A_i \in E(\BoundedSyn)\}$ for every $0 \leq i \leq n$.  As $\BoundedSyn$ has $O(n)$ edges, a total of $O(n^2)$ arithmetic operations are performed.  By \sepref{1}--\sepref{4}, $s_i^k \in \mathbb{N}$ when $\Descriptor$ is integer, thus $O(1)$ time is required by each operation.  However, as it was noted by Klavík et al.~\cite{KlavikKratochvilOtachiRutterSaitohSaumellVyskocilC2014}, we cannot assume $O(1)$ time per operation when $\Descriptor$ is non-integer.  The inconvenient is that to compare two fractional values $a_1/b_1$ and $a_2/b_2$ we have to multiply them with a common multiple of $b_1$ and $b_2$.  Thus, a priori, the number of bits used to represent $s_j^{k-1}$ could be large, and the operations required to compute $s_i^k$ could take more than constant time.  

It turns out that we can represent $s_i$ with $O(1)$ bits in a simple manner.  The idea is to use a \emph{distance tuple} $t = \langle b, [\Circ], [\Len], [\Thres], [\BegDist] \rangle$ with $b \in \{\LeftBound(A), \RightBound(A) \mid A \in \A\} \cup \{-\infty\}$, $[\Circ], [\Len] \in \mathbb{Z}$, and $[\Thres], [\BegDist] \in \mathbb{N}$, in order to represent the rational
\[[ t ] = b + [\Circ]\Circ + [\Len]\Len + [\Thres]\Thres + [\BegDist]\BegDist.\]
So, for instance, we can represent $s_{ij} = \SEP(A_i \to A_j)$ as in the following table, where $q \in \{0,1\}$ equals $0$ if and only if $A_i \to A_j$ is internal.
\begin{center}
\begin{tabular}{l|c|c|c|c|c}
  Type          & $b$                  & $[\Circ]$    & $[\Len]$    & $[\Thres]$    & $[\BegDist]$    \\\hline
  Step          & 0                    & $-q$         & 0           & $1$           & $1$             \\
  Nose          & $0$                  & $-q$         & $1$         & $1$           & $0$             \\
  Hollow        & $0$                  & $q$          & $-1$        & $1$           & $0$             \\
  Bound ($i=0$) & $\LeftBound(A_i)$    & $0$          & $0$         & $0$           & $0$             \\
  Bound ($j=0$) & $\RightBound(A_i)$   & $-1$         & $0$         & $0$           & $0$             \\
\end{tabular}
\end{center}
Analogously, we implement $s_i = \Dist{\SEP}(A_0, A_i)$ using a distance tuple for every $1 \leq i \leq n$.  Just note that if $s_0$ is ever updated, then $\BoundedSyn$ has a cycle of positive weight, thus we can immediately halt Bellman-Ford in failure.  By doing so we observe, by invariant, that $s_i^k = [t]$ for every iteration $k$ and some distance tuple $t = \langle [\Circ], [\Len], [\Thres], [\BegDist], b \rangle$  in which $-n \leq [\Circ], [\Len] \leq n$, $0 \leq [\Thres], [\BegDist] \leq n$.

With the above implementation, each arithmetic operation performed by Bellman-Ford costs $O(1)$ time, as it involves only $O(1)$ of the input values.  We conclude, therefore, that \uRep can be solved in $O(n^2)$ time, even when $\Descriptor$ is non-integer.  As far as our knowledge extends, this is the first polynomial algorithm to solve this problem.

By definition, \BoundRepBoth is solvable if and only if \uRep is solvable for some (integer) UCA descriptor $\Descriptor$.  The main difference between both problems is that $\Thres$ is an input of \uRep whereas a feasible $\Thres$ must be found by \BoundRep.  A simple solution for \BoundRepBoth is to invoke the above algorithm with a small enough value of $\Thres$.  For instance, if $a_1/b_1, \ldots, a_k/b_k$ are the bounds of $\LeftBound$ and $\RightBound$, then we can take $\Thres = \frac{1}{B}$ where $B = \prod_{i=1}^k b_i$.  In other words, we transform every weight of $\SEP$ into an integer before invoking Bellman-Ford in the algorithm above.  This algorithm is efficient when $\Thres$ consumes $O(1)$ bits, e.g., when $\Descriptor$ is integer.  But, it is not efficient in the general case as to compare two distance tuples we need to operate with $\Thres$.

An alternative solution for \BoundRep is to find any $\Thres < \Thres_*$, where $\Thres_*$ is the maximum such that $\M$ is equivalent to a $(\Circ,\Len,\Thres_*,\BegDist,\LeftBound,\RightBound)$-CA model. To find $\Thres$ we invoke the algorithm for \uRep, but instead giving an input number for $\Thres$, we just think of $\Thres$ as a placeholder for a value lower than or equal to $\Thres_*$.  As before, $s_{ij} = \SEP(A_i \to A_j)$ and $s_i = \Dist{\SEP}(A_0, A_i)$ are encoded with distance tuples $[t_{ij}]$ and $[t_i]$, respectively.  However, we re-implement the comparison operator to cope with the fact that $\Thres$ is an indeterminate value.  

Let $[\Thres_i]$ and $[\Thres_{ij}]$ be the coefficients of $t_{ij}$ and $t_i$ that multiply $\Thres$, respectively.  For a distance tuple $t =  \langle b, [\Circ], [\Len], [\Thres], [\BegDist] \rangle$, let
\[\lVert t \rVert = [t] - [\Thres]\Thres = b + [\Circ]\Circ + [\Len]\Len + [\BegDist]\BegDist.\]
The main observation is that $\Dist{\SEP}(A_0, A_i) < \Dist{\SEP}(A_0, A_j) + \SEP(A_j \to A_i)$ if and only if 
\begin{itemize}
  \item $\lVert s_i \rVert < \lVert s_{ji} \rVert + \lVert s_j \rVert$, or
  \item $\lVert s_i \rVert = \lVert s_{ji} \rVert + \lVert s_j \rVert$ and $[\Thres_i] < [\Thres_{j}] + [\Thres_{ji}]$.
\end{itemize}
Then, every arithmetic operation costs $O(1)$ time as it involves only $O(1)$ input values.

If Bellman-Ford ends in success, then $\Thres = \min\{\Dist{\SEP}(A_0, A_j) - \Dist{\SEP}(A_0, A_i) \mid A_i \to A_j \in E(\BoundedSyn)\} = \min\{[t_i] - [t_j] \mid A_i \to A_j \in E(\BoundedSyn)\} > 0$ can be obtained in $O(n)$ time.  By Theorem~\ref{thm:separation constraints}, the algorithm is correct as $\SEP(\W) \leq 0$ for every cycle $\W$ of $\BoundedSyn$.  As for the certification problem, note that $\Thres$ consumes $O(1)$ bits, thus we can output $\Unit(\M)$ in $O(n)$ time.  Moreover, any cycle of positive weight found by Bellman-Ford can be used for negative certification. 

\begin{theorem}
  \BoundRep, \textsc{IntBoundRep}, and \uRep can be solved in $O(n^2)$ time and $O(n)$ space.
\end{theorem}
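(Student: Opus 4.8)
The plan is to reduce all three problems to a single application of Theorem~\ref{thm:separation constraints} run through the Bellman--Ford algorithm, and then to show that the arithmetic can be performed in constant time per operation. Given $\M$ and a UCA descriptor $\Descriptor$, I would build $\BoundedSyn(\M)$ --- which has $n+1$ vertices and $O(n)$ edges --- weigh each edge $A_i\to A_j$ by $\SEP_\Descriptor(A_i\to A_j)$ via \sepref{1}--\sepref{4}, and run Bellman--Ford from $A_0$ to compute $s_i=\Dist{\SEP}(A_0,A_i)$ for $0\le i\le n$, halting in failure the moment $s_0$ would be relaxed (which by itself exposes a positive-weight cycle). On success I output $\Unit(\M,\Descriptor)$, whose arc $U_i$ begins at $s_i$ (arithmetic modulo $\Circ$); Theorem~\ref{thm:separation constraints} guarantees this is a $\Descriptor$-CA model equivalent to $\M$. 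On failure, the detected positive-weight cycle $\W$ witnesses that condition~(\ref{thm:separation constraints:cycles}) of Theorem~\ref{thm:separation constraints} fails, hence is a valid negative certificate. Bellman--Ford does $O(n)$ rounds of $O(n)$ relaxations, i.e.\ $O(n^2)$ arithmetic operations, and keeps $O(n)$ values in memory, so the whole question reduces to bounding the cost of a single operation. When $\Descriptor$ is integer every weight, hence every $s_i$, is an integer (polynomially bounded, by the invariant below), each operation costs $O(1)$, and \IntBoundRep is done.

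For a general rational $\Descriptor$ the danger is that common-denominator arithmetic blows up the bit length, so I would represent every value by a \emph{distance tuple} $t=\langle b,[\Circ],[\Len],[\Thres],[\BegDist]\rangle$ standing for $[t]=b+[\Circ]\Circ+[\Len]\Len+[\Thres]\Thres+[\BegDist]\BegDist$, with $b$ ranging over the finitely many input bounds together with $-\infty$. Each weight $\SEP_\Descriptor(A_i\to A_j)$ has such a representation with tiny coefficients (the table in the text), and I would prove by induction on the relaxation rounds that the tuple computed for $s_i$ keeps $-n\le[\Circ],[\Len]\le n$ and $0\le[\Thres],[\BegDist]\le n$ --- the early halt at $s_0$ being exactly what prevents this invariant from degenerating. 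Hence each tuple fits in $O(\log n)$ bits, each addition or comparison touches only $O(1)$ of the input numbers, and so costs $O(1)$ time; this yields the $O(n^2)$-time, $O(n)$-space bound and settles \uRep.

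For \BoundRep the value $\Thres>0$ is not supplied, so I would run the same procedure treating $\Thres$ as a symbolic indeterminate and reading every comparison as ``holds for all sufficiently small $\Thres>0$''. Concretely, writing $\lVert t\rVert=[t]-[\Thres]\Thres$, I compare tuples first by $\lVert\cdot\rVert$ and break ties by the $\Thres$-coefficient $[\Thres]$. If Bellman--Ford succeeds, then $\SEP$ has no positive-weight cycle for every small enough $\Thres$, so I can extract a concrete feasible value $\Thres=\min\{[t_i]-[t_j]\mid A_i\to A_j\in E(\BoundedSyn)\}>0$ in $O(n)$ extra time (a rational of $O(1)$ bit length) and output $\Unit(\M,(\Circ,\Len,\Thres,\BegDist,\LeftBound,\RightBound))$; if it fails, the positive-weight cycle again certifies infeasibility through Theorem~\ref{thm:separation constraints}.

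The step I expect to be the main obstacle is the correctness of this constant-time symbolic arithmetic: one must check that a comparison between two tuples, being linear in $\Thres$, is decided near $\Thres=0^+$ exactly by the lexicographic $(\lVert\cdot\rVert,[\Thres])$ test; that only finitely many such comparisons are ever performed, so a single threshold (in particular any $\Thres$ below the optimum $\Thres_*$) makes all of them simultaneously valid; and that the coefficient-boundedness invariant survives the modified comparison. Everything else --- the reduction to Theorem~\ref{thm:separation constraints}, the $O(n^2)$ operation count, and the $O(n)$ space --- then follows directly.
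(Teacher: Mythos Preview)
Your proposal is correct and follows essentially the same approach as the paper: reduce to Theorem~\ref{thm:separation constraints} via Bellman--Ford on $\BoundedSyn$, encode all values as distance tuples with $O(\log n)$-bit coefficients (kept bounded by the early halt at $s_0$), and for \BoundRep treat $\Thres$ as a symbolic indeterminate compared lexicographically by $(\lVert\cdot\rVert,[\Thres])$ before extracting a concrete value at the end. The ``main obstacle'' you single out---the correctness of the lexicographic comparison near $\Thres=0^+$---is exactly the point the paper handles in the same way.
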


\subsection{The separation of a boundless walk}
\label{sec:synthetic graph:separation}

The cycles of $\BoundedSyn$ with maximum $\SEP$-values play a fundamental role when deciding if $\M$ admits an equivalent $\Descriptor$-CA model, as shown in Theorem~\ref{thm:separation constraints}.  The purpose of this section is to analyze how do these separations look like in the boundless synthetic graph.  The \emph{(boundless) synthetic graph} of $\M$ is just $\Syn(\M) = \BoundedSyn(\M) \setminus A_0$; for the sake of simplicity, we drop the parameter $\M$ as usual.  The main tool that we apply is a pictorial description of $\Syn$, that generalizes the work of Mitas~\cite{Mitas1994} on PIG models (see Section~\ref{sec:minimal uig}).  Roughly speaking, Mitas arranges the vertices of $\Syn$ into a matrix, where the row and column of $A_i$ correspond to its height (cf.\ below) and the number of internal hollows of some paths from $A_1$, respectively.  

Let $\M = (C, \A)$ be a PCA model with arcs $A_1 < \ldots < A_n$.  The \emph{height} $\Height(A_i)$ of $A_i$ ($1 \leq i \leq n$) is recursively defined as follows:
\[
 \Height(A_i) =
  \begin{cases}
   0 & \text{if } s(A_i) < t(A_1) \\
   1 + \Height(A_j)       & \text{otherwise, where } A_j \text{ is the maximum such that } t(A_j) < s(A_i).
  \end{cases}
\]
The \emph{height} of $\M$ is defined as $\Height(\M) = \Height(A_n)$; note that $h(\M) \geq 1$ (because $\M$ is not trivial).  For the sake of notation, we drop the parameter $\M$ as usual.  In Figure~\ref{fig:synthetic graph}, the vertices are drawn in levels according to their height.

It is important to note that internal noses and steps ``jump'' to higher or equal levels, while external noses and steps ``jump'' to lower levels.  Similarly, internal hollows ``jump'' to equal or lower levels while external hollows ``jump'' to higher levels.  We need a more explicit description of how does the height change when an edge is traversed.  In general, we say that $A_i \to A_j$ \emph{has \Jump\ jump} for $\Jump = \Height(A_j) - \Height(A_i)$.  For the sake of notation, we refer to noses (resp.\ steps, hollows) with \Jump\ jump simply as \Jump-noses (resp.\ \Jump-steps, \Jump-hollows).

It is not hard to see (check Figure~\ref{fig:synthetic graph}) that $\Syn$ has three kinds of noses, namely $1$-, $(-\Height)$-, and $(1-\Height)$-noses.  Moreover, if $A_i \to A_j$ is either a $(-\Height)$- or $(1-\Height)$-nose, then $\Height(A_j) = 0$.  Similarly, there are three kinds of steps, namely $0$-, $1$-, and $(-\Height)$-steps, and four kinds of hollows, namely $0$-, $(-1)$-, $\Height$-, and external $(\Height-1)$-hollows.  Note that we need to differentiate between internal $0$-hollows and external $(\Height-1)$-hollows when $\Height = 1$.  For the sake of simplicity, we will refer to $A_i \to A_j$ as an \emph{$(\Height-1)$-hollow} to mean that $A_i \to A_j$ is an \textbf{external} $(\Height-1)$-hollow.  We emphasize that no confusions are possible because $\M$ has no external $0$-hollows; otherwise $A_i$ and $A_j$ would cover the circle of $\M$.  Observe that, as it happens with noses, $\Height(A_i) = 0$ for every $\Height$- or $(\Height-1)$-hollow $A_i \to A_j$, while $\Height(A_1) = 0$ for the unique $(-\Height)$-step $A_n \to A_1$.  Obviously, the jump of a walk $\W$ depends exclusively on the number of different kinds of noses, hollows and steps that it contains.  We write $\Noses_\Jump(\W)$, $\Hollows_\Jump(\W)$, and $\Steps_\Jump(\W)$ to indicate the number of $\delta$-noses, $\delta$-hollows, and $\delta$-steps of $\W$, respectively.  As usual, we do not write the parameter $\W$ when it is clear from context.  The following observation describes the jump of $\W$.

\begin{observation}
 If $\W$ is a walk from $A_i$ to $A_j$ in $\Syn$, then 
 \begin{equation}
  \Height(A_j) - \Height(A_i) = \Noses_1 + \Steps_1 - \Hollows_{-1} + \Height(\Hollows_\Height - \Noses_{-\Height} - \Steps_{-\Height}) + (\Height-1)(\Hollows_{\Height-1} - \Noses_{1-\Height}) \label{eq:walk jump}
 \end{equation}
\end{observation}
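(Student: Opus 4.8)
The plan is to prove the identity by tracking how the height changes along each edge of the walk, and then summing. The key point is that the jump $\Height(A_j)-\Height(A_i)$ of a single edge $A_i\to A_j$ depends only on what \emph{kind} of edge it is, and the excerpt has already enumerated all possible kinds: the noses are $1$-, $(-\Height)$-, and $(1-\Height)$-noses; the steps are $0$-, $1$-, and $(-\Height)$-steps; the hollows are $0$-, $(-1)$-, $\Height$-, and (external) $(\Height-1)$-hollows. So first I would record, for each of these ten edge types, its jump value: a $1$-nose contributes $+1$, a $(-\Height)$-nose contributes $-\Height$, a $(1-\Height)$-nose contributes $1-\Height$, and likewise for the steps ($0$, $1$, $-\Height$) and hollows ($0$, $-1$, $\Height$, $\Height-1$). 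These are exactly the superscripts used to name the edge types, so this step is essentially a tautology given the classification already established before the observation.

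Next I would use the telescoping of heights along the walk. Writing $\W$ as the edge sequence $A_i=B_0\to B_1\to\cdots\to B_k=A_j$, we have
\[
 \Height(A_j)-\Height(A_i) \;=\; \sum_{\ell=1}^{k}\bigl(\Height(B_\ell)-\Height(B_{\ell-1})\bigr)
 \;=\; \sum_{\ell=1}^{k}\Jump(B_{\ell-1}\to B_\ell),
\]
and then I would group the terms of this sum according to edge type. Grouping the $1$-noses and $1$-steps gives $+\,(\Noses_1+\Steps_1)$; the $(-1)$-hollows give $-\Hollows_{-1}$; the $(-\Height)$-noses, $(-\Height)$-steps and $\Height$-hollows each contribute $\pm\Height$, yielding $\Height(\Hollows_\Height-\Noses_{-\Height}-\Steps_{-\Height})$; and the $(1-\Height)$-noses and $(\Height-1)$-hollows each contribute $\pm(\Height-1)$, yielding $(\Height-1)(\Hollows_{\Height-1}-\Noses_{1-\Height})$. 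The $0$-noses\,---\,there are none\,---\,and the $0$-steps and $0$-hollows contribute nothing. Collecting these grouped sums reproduces the right-hand side of \eqref{eq:walk jump} verbatim, since $\Noses_\Jump(\W)$, $\Hollows_\Jump(\W)$, $\Steps_\Jump(\W)$ were defined precisely as the counts of the respective edge types in $\W$.

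The only genuine content, and hence the main obstacle, is verifying the per-edge jump values\,---\,that is, confirming the classification of edge types by jump that the paragraph preceding the observation asserts. Concretely one must check, from the definition of $\Height$ and of the edges of $\Syn$, facts such as: a nose $A_i\to A_j$ (where $t(A_i)s(A_j)$ are consecutive) satisfies $\Height(A_j)=\Height(A_i)+1$ when it is internal, and $\Height(A_j)=0$ (so the jump is $-\Height(A_i)\in\{-\Height,1-\Height\}$ depending on whether $\Height(A_i)$ is $\Height$ or $\Height-1$) when it is external; and the analogous statements for steps and hollows, using that the extremes crossing $0$ are exactly those of arcs $A_i$ with $\Height(A_i)\in\{\Height-1,\Height\}$. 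Since the excerpt explicitly tells the reader to ``check Figure~\ref{fig:synthetic graph}'' for this classification and states it as already known, I would treat it as established and present the proof of the observation as the short telescoping-and-grouping argument above, which is then purely bookkeeping.
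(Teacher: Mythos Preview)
Your proposal is correct and matches the paper's treatment: the paper states this as an ``Observation'' without proof, relying on exactly the classification of edge types by jump that you describe and the evident telescoping of heights along the walk. Your telescoping-and-grouping argument is the natural (and essentially only) way to fill in the details.
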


We now define two kinds of walks that are of particular interest for us.  These walks correspond to what Tucker calls by the names of \emph{$(a,b)$-independent} and \emph{$(x,y)$-circuits} of a PCA model (see~\cite{TuckerDM1974} and Section~\ref{sec:circuits and independents}).  We say that a walk of $\Syn$ is a \emph{nose walk} when it contains no hollows, while it is a \emph{hollow walk} when it contains no noses and $\Hollows_h + \Hollows_{h-1} \geq \Steps_{-h}$.  Note that a walk is both a nose and a hollow walk only if all its edges are steps; in general, walks that contain only steps are referred to as \emph{step walks}.  Nose and hollow walks are important because they impose lower and upper bounds for the circumference of the circle in a UCA model.

By Theorem~\ref{thm:separation constraints}, if $\M$ is equivalent to a $\Descriptor$-CA model, then $\SEP(\W_N) \leq 0$ for every nose cycle $\W_N$.  By definition,
\begin{equation*}
 \SEP(\W_N) = \Noses_1(\Len+\Thres)  + (\Noses_{-\Height} + \Noses_{1-\Height})(\Len+\Thres-\Circ) + (\Thres+\BegDist)(\Steps_0 + \Steps_1) + (\Thres+\BegDist-\Circ)\Steps_{-\Height}
\end{equation*}
while by \eqref{eq:walk jump}
\begin{equation*}
 \Noses_1 = -\Steps_1 + \Height(\Noses_{-\Height} + \Steps_{-\Height}) + (\Height-1)\Noses_{1-\Height}
\end{equation*}
thus
\begin{equation}
 \Circ \geq (\Len+\Thres)\left(\Height + \frac{\Noses_{-\Height} - \Steps_1}{\Noses_{1-\Height} + \Noses_{-\Height} + \Steps_{-\Height}}\right). \label{eq:nose ratio}
\end{equation}
For any nose walk $\W_N$, the value $\Ratio(\W_N) = \frac{\Noses_{-\Height} - \Steps_1}{\Noses_{1-\Height} + \Noses_{-\Height} + \Steps_{-\Height}}$ is referred to as the \emph{ratio} of $\W_N$, while the \emph{nose ratio} of $\M$ is $\Ratio(\M) = \max\{\Ratio(\W_N) \mid \W_N \text{ is a nose cycle of } \M\}$.  

A similar analysis is enough to conclude (assuming $x / 0 = \infty$) that
\begin{equation}
 \Circ \leq (\Len-\Thres)\left(\Height + \frac{\Hollows_0 + \Hollows_\Height + \Steps_1}{\Hollows_\Height + \Hollows_{\Height-1} - \Steps_{-\Height}}\right) \label{eq:hollow ratio}
\end{equation}
for any hollow cycle $\W_H$ (this is the reason why hollow cycles are restricted to $\Hollows_h + \Hollows_{h-1} \geq \Steps_{-h}$ by definition).  This time, for any hollow walk $\W_H$, the value $\RATIO(\W_H) = \frac{\Hollows_0 + \Hollows_\Height + \Steps_1}{\Hollows_\Height + \Hollows_{\Height-1} - \Steps_{-\Height}}$ is said to be the \emph{ratio} of $\W_H$, while $\RATIO(\M) = \min\{\RATIO(\W_H) \mid \W_H \text{ is a hollow cycle of } \M\}$ is the \emph{hollow ratio} of $\M$.   The following observation sums up equations \eqref{eq:nose ratio}~and~\eqref{eq:hollow ratio}; note that, as usual, we omit the parameter $\M$ from $\Ratio$ and $\RATIO$.

\begin{observation}
 For every $\Descriptor$-CA model, 
 \begin{equation}
    (\Len+\Thres)(\Height + \Ratio) \leq \Circ \leq (\Len-\Thres)(\Height + \RATIO) \label{eq:ratios}
 \end{equation}
\end{observation}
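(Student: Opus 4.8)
The plan is to establish the two inequalities of \eqref{eq:ratios} separately, each one by taking the appropriate extremum, over the relevant family of cycles, of the per-cycle bound already obtained above.

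For the left-hand inequality I would first record two elementary facts: $\Len+\Thres>0$ (since $\Len\in\mathbb{Q}_{>0}$ and $\Thres\in\mathbb{Q}_{\ge0}$), and $\M$ admits at least one nose cycle, namely the step cycle $A_1\to A_2\to\cdots\to A_n\to A_1$, which has no hollows. Hence the maximum that defines $\Ratio(\M)$ ranges over a nonempty finite set and is attained by some nose cycle $\W_N$; in particular $\Ratio(\M)\ne-\infty$, so the bound is not vacuous. Since $\M$ is equivalent to a $\Descriptor$-CA model, Theorem~\ref{thm:separation constraints} gives $\SEP(\W_N)\le0$, and therefore \eqref{eq:nose ratio} applies to $\W_N$. (Inequality \eqref{eq:nose ratio} is meaningful for every nose cycle because the denominator $\Noses_{1-\Height}+\Noses_{-\Height}+\Steps_{-\Height}$ of its ratio is strictly positive: a nose cycle cannot consist of internal edges only, as internal noses and steps have jump $0$ or $1$ and a cycle with no hollows to come back down would need all its jumps to be $0$, which is impossible since $0$-noses and $0$-steps both increase the index.) Thus $\Circ\ge(\Len+\Thres)\bigl(\Height+\Ratio(\W_N)\bigr)=(\Len+\Thres)\bigl(\Height+\Ratio(\M)\bigr)$.

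For the right-hand inequality I would first check that $\Len-\Thres>0$. Because $\M$ is non-trivial, no arc $A$ has $s(A)t(A)$ consecutive, so at least one extreme lies strictly inside $(s(A),t(A))$; splitting $(s(A),t(A))$ at the first and last such interior extremes and using that consecutive extremes are separated by at least $\Thres$ shows $\Len\ge 2\Thres$, hence $\Len-\Thres\ge\Thres\ge0$, and together with $\Len>0$ this gives $\Len-\Thres>0$ even if $\Thres=0$. If $\M$ has no hollow cycle, then $\RATIO(\M)=\min\emptyset=+\infty$ and the right inequality is vacuous. Otherwise, inequality \eqref{eq:hollow ratio} holds for each hollow cycle $\W_H$, the convention $x/0=\infty$ absorbing those $\W_H$ whose denominator $\Hollows_\Height+\Hollows_{\Height-1}-\Steps_{-\Height}$ vanishes; this denominator is never negative precisely because the definition of hollow cycle demands $\Hollows_\Height+\Hollows_{\Height-1}\ge\Steps_{-\Height}$. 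Dividing \eqref{eq:hollow ratio} by $\Len-\Thres>0$ yields $\Circ/(\Len-\Thres)\le\Height+\RATIO(\W_H)$ for every hollow cycle $\W_H$; the left side is independent of $\W_H$, so taking the minimum over all hollow cycles gives $\Circ/(\Len-\Thres)\le\Height+\RATIO(\M)$, i.e.\ $\Circ\le(\Len-\Thres)\bigl(\Height+\RATIO(\M)\bigr)$.

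I do not anticipate a genuine obstacle: the statement is essentially a repackaging of \eqref{eq:nose ratio} and \eqref{eq:hollow ratio} via the definitions of $\Ratio(\M)$ and $\RATIO(\M)$. The only points that need a sentence of care are the sign conditions $\Len+\Thres>0$ and $\Len-\Thres>0$ (so that the inequalities may be scaled without reversing them), the well-definedness of the maximum defining $\Ratio(\M)$ (secured by the step cycle and by the fact that the ratio's denominator never vanishes for a nose cycle), and the degenerate conventions $\min\emptyset=+\infty$ and $x/0=+\infty$, which render the relevant bound vacuous in exactly the cases where $\M$ has no constraining cycle of the corresponding type.
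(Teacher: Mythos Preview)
Your proposal is correct and follows the same route as the paper: the observation is nothing more than \eqref{eq:nose ratio} and \eqref{eq:hollow ratio} specialized to cycles attaining the extrema that define $\Ratio(\M)$ and $\RATIO(\M)$. You supply more care than the paper does (sign of $\Len\pm\Thres$, nonemptiness of the set of nose cycles, positivity of the denominator, the degenerate conventions), which is fine; one tiny slip is the mention of ``$0$-noses'' in your parenthetical argument---there are none in $\Syn$ (noses are $1$-, $(-\Height)$-, or $(1-\Height)$-noses), so the correct phrasing is that all-zero-jump forces the cycle to consist solely of $0$-steps, which is impossible as indices strictly increase.
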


By~\eqref{eq:ratios}, $\M$ is equivalent to a $\Descriptor$-CA model only if $\Circ = (\Len+\Thres)(\Height+\Ratio) + \Extra$ for some $\Extra \geq 0$.  The factor $(\Len+\Thres)(\Height + \Ratio)$ is required for each nose cycle to fit in the model when considered in isolation, while the \emph{extra space} \Extra\ serves to accommodate the interactions between all the arcs.  Note that, in general, $\M$ needs not be equivalent to a $\Descriptor$-CA model.  This is not important, though, as we can always write $\Circ$ as $(\Len+\Thres)(\Height+\Ratio) + \Extra$; just observe that $\Extra$ could be negative in some cases.  

We find it convenient to express $\SEP(\W)$ as a function of $\Len$ and $\Extra$, for every walk $\W$ of $\Syn$.  With this in mind, observe that the value of $\SEP(\W)$ for a walk $\W$ is, by definition
\begin{align*}
&\Noses_1(\Len+\Thres) + (\Noses_{-\Height} + \Noses_{1-\Height})(\Len+\Thres-\Circ) + \phantom{x}\\& (\Hollows_0 + \Hollows_{-1})(\Thres-\Len) + (\Hollows_{\Height} + \Hollows_{\Height-1})(\Circ+\Thres-\Len) + \phantom{x}\\& (\Thres+\BegDist)(\Steps_0 + \Steps_1) + (\Thres+\BegDist-\Circ)\Steps_{-\Height}.
\end{align*}
Applying Equation~\eqref{eq:walk jump} and some algebraic manipulation, we conclude that 
\begin{equation}
\SEP(\W) = (\Len+\Thres)(\Height(A_j) - \Height(A_i) + \MAG(\W)) + e\REM(\W) + \CONST(\W)\label{eq:sep}
\end{equation}
where:
\begin{align}
  \MAG(\W)   &= \Noses_{-\Height} - \Hollows_\Height -\Steps_{1} - \Hollows_0 + \Ratio\REM(\W) \label{eq:mag}\\
  \REM(\W)   &= \Hollows_\Height + \Hollows_{\Height-1} - \Noses_{-\Height} - \Noses_{1-\Height} - \Steps_{-\Height} \label{eq:rem}\\
  \CONST(\W, \Thres, \BegDist) &= 2\Thres(\Hollows_{-1} + \Hollows_{0} + \Hollows_{\Height} + \Hollows_{\Height-1}) + (\Thres+\BegDist)(\Steps_1 + \Steps_0 + \Steps_{-\Height})\label{eq:const}
\end{align}
 
The values $\MAG(\W)$, $\REM(\W)$, and $\CONST(\W, \Thres, \BegDist)$ are the \emph{length}, \emph{extra}, and \emph{constant factors} of $\W$, and $\W$, $\Thres$, and $\BegDist$ are omitted as usual.  We emphasize that the triplet $(\MAG, \REM, \CONST)$ is, in some sort, a generalization to what Mitas takes as the column in his pictorial representation (see~\cite{Mitas1994} and Section~\ref{sec:minimal uig}).  The main difference is that the external edges can be disregarded from $\Syn$ when $\M$ is a PIG model.  Mitas also discards the $0$-hollows and the steps to define the column, thus $(\MAG, \REM, \CONST)$ gets reduced to $(0,0, 2\Hollows_{-1})$.

There are at least two advantages of expressing $\SEP$ as a polynomial with indeterminates $\Len$ and $\Extra$ and coefficients $\MAG, \REM$ and $\CONST$.  First, by Theorem~\ref{thm:separation constraints}, we can see at first sight that $\M$ is equivalent to a $(\Circ, \Len)$-CA model whenever either $\MAG(\W) < 0$ or $\MAG(\W) = 0$ and $\REM(\W) < 0$ for every cycle $\W$.  Just take large enough values for $\Len$ and $\Extra$.  In particular, observe that $\MAG(\W) < 0$ for every cycle when $\M$ is a PIG model; thus, this is just one more proof of the fact that every PIG model is equivalent to an UIG model.  The second advantage is that, obviously, the factors depend only on the structure of $\Syn$ and not on the weighing function $\SEP$.  In fact, we can compute $\REM(\W)$ by means of the edge weighing $\REM$ (the overloaded notation is intentional) of $\Syn$ such that 
\[
 \REM(A_i \to A_j) =
  \begin{cases}
   1 & \text{if $A_i \to A_j$ is an $h$-hollow or $(h-1)$-hollow} \\
   -1 & \text{if $A_i \to A_j$ is a ($-h$)-nose, a $(1-h)$-nose, or a $(-h)$-step} \\
   0 & \text{otherwise}
  \end{cases}
\]
We can compute $\MAG(\W)$ and $\CONST(\W, \Thres, \BegDist)$ in a similar fashion with the corresponding edge weighings $\MAG$ and $\CONST_{\Thres, \BegDist}$.

\section{Efficient Tucker's characterization}
\label{sec:tucker}

In this section we give an alternative proof of Tucker's characterization, taking advantage of the framework of synthetic graphs.  In short, Tucker's theorem states that $\M$ is equivalent to some UCA model if and only if $a/b < x/y$ for every $(a,b)$-independent and every $(x,y)$-circuit of $\M$~\cite{TuckerDM1974}.  As already mentioned (and proven in Section~\ref{sec:circuits and independents}) the nose and hollow cycles of $\Syn$ are the equivalents of the $(a,b)$-independents and $(x,y)$-circuits of $\M$.  Moreover, the maximal and minimal values of $a/b$ and $x/y$ are somehow related to $\Ratio$ and $\RATIO$, respectively.  Thus, intuition tells us that we should be able to prove that $\M$ is equivalent to a UCA model if and only if $\Ratio < \RATIO$.  This is equivalence (\ref{thm:tucker:equivalence}) $\Leftrightarrow$ (\ref{thm:tucker:bound}) of Theorem~\ref{thm:tucker} below.

Though equivalence (\ref{thm:tucker:equivalence}) $\Leftrightarrow$ (\ref{thm:tucker:bound}) is not new, our proof of this fact is new and somehow simple.  One of the main features about Theorem~\ref{thm:tucker} is that it exhibits new characterizations that can be used for positive and negative certification.  In particular, it shows how to obtain an integer $(\Circ, \Len)$-CA model equivalent to $\M$ with $\Circ$ and $\Len$ polynomial in $n$.  The existence of such models was questioned by Durán et al.\ in~\cite{DuranGravanoMcConnellSpinradTuckerJA2006} and proved by Lin and Szwarcfiter in~\cite{LinSzwarcfiterSJDM2008} by means of feasible circulations. 

Before stating Theorem~\ref{thm:tucker}, we study the relation between $\SEP$ and the ratios of $\M$.  Recall that the $\SEP$-values of nose and hollow cycles impose the lower and upper bounds described by \eqref{eq:ratios}, respectively.  The reason to consider only nose and hollow cycles is that they have the largest $\SEP$-values when $\Circ$ and $\Len$ are large, as it follows from \eqref{eq:sep} and the next lemma.

\begin{lemma}\label{lem:nose or hollow cycle}
 For any walk $\W$ of $\Syn$ there exists either a nose or hollow walk $\W'$ of $\Syn$ starting and ending at the same vertices as $\W$ such that $\MAG(\W) \leq \MAG(\W')$ and $\REM(\W) \leq \REM(\W')$.
\end{lemma}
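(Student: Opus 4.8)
The plan is to convert $\W$ into a nose or hollow walk by an exchange argument that greedily removes the "wrong" kind of edge — hollows if we aim for a nose walk, noses if we aim for a hollow walk — while keeping $\MAG$ and $\REM$ from decreasing. First I would look at the signs of the coefficients in \eqref{eq:mag} and \eqref{eq:rem}. In $\REM(\W) = \Hollows_\Height + \Hollows_{\Height-1} - \Noses_{-\Height} - \Noses_{1-\Height} - \Steps_{-\Height}$, hollows of jump $\Height$ or $\Height-1$ contribute positively while the "wraparound" noses and the $(-\Height)$-step contribute negatively; in $\MAG(\W) = \Noses_{-\Height} - \Hollows_\Height - \Steps_1 - \Hollows_0 + \Ratio\,\REM(\W)$ the roles are subtler because $\Noses_{-\Height}$ now helps and $\Hollows_0, \Hollows_\Height$ hurt, but the $\Ratio\,\REM$ term couples them. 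The key structural fact I would exploit is the one highlighted right after the definition of heights: every $(-\Height)$- or $(1-\Height)$-nose ends at a vertex of height $0$, and every $\Height$- or $(\Height-1)$-hollow starts at a vertex of height $0$ (and the unique $(-\Height)$-step goes $A_n \to A_1$ with $\Height(A_1)=0$). So all the "level-crossing" edges pass through height-$0$ vertices, which means between any two consecutive such crossings the walk stays within the internal (monotone-in-height) part of $\Syn$, and there it is a mix of $1$-noses, $0$/$1$-steps, $0$-hollows and $(-1)$-hollows.

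The main step is a local surgery: suppose $\W$ contains both a nose and a hollow. I would argue that one can find a place where a nose is "cancelled" against a hollow — using the height bookkeeping of Observation \eqref{eq:walk jump} to show that a subwalk returning to its starting height either can be rerouted through pure steps, or can be rerouted so as to drop one offending edge without decreasing either factor. Concretely: if $\W$ is not already a nose walk, it has a hollow; I want to reroute so that $\Hollows_0$ and $\Hollows_{-1}$ (and eventually $\Hollows_\Height, \Hollows_{\Height-1}$) are eliminated. Removing an internal $0$- or $(-1)$-hollow and replacing the affected stretch by steps keeps the endpoints and heights fixed, does not change $\REM$ (steps of jump $0,1$ and $0$/$(-1)$-hollows all have $\REM=0$) and does not decrease $\MAG$ since we are deleting a $-\Hollows_0$ or a neutral term and only adding steps with $\MAG$-contribution $-\Steps_1 \le 0$... here I have to be careful: adding a $1$-step costs $\MAG$, so the rerouting must be chosen to add $0$-steps, which is possible precisely because the internal part between two height-$0$ vertices can be traversed by the "bottom" step-path. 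The analogous move builds a hollow walk when the tradeoff goes the other way, and the constraint $\Hollows_\Height + \Hollows_{\Height-1} \ge \Steps_{-\Height}$ in the definition of hollow walk is exactly what survives this process (it is $\REM(\W') \ge 0$ rewritten). Since each surgery strictly decreases $|E(\W)|$ (or a suitable potential counting off-type edges), the process terminates at a nose or hollow walk $\W'$ with the same endpoints and $\MAG(\W) \le \MAG(\W')$, $\REM(\W) \le \REM(\W')$.

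The hard part will be making the case split clean: deciding, from the mixture of edge types in $\W$, whether to push toward a nose walk or a hollow walk, and then verifying in each elementary rerouting that neither $\MAG$ nor $\REM$ drops — the $\Ratio\,\REM$ cross-term in \eqref{eq:mag} means a move that increases $\REM$ automatically helps $\MAG$ through that term but may hurt through the explicit $-\Hollows_\Height$, so I expect to need a small invariant (something like: the rerouting never creates $\Hollows_0$ or $1$-steps, only removes them) to keep the accounting monotone. I would organize the argument so that all the height-$0$ "pivot" vertices are identified first, the walk is decomposed at them into internal blocks plus crossing edges, and then each internal block is replaced by its canonical step-path; after that only crossing edges remain and a final, simpler exchange on those (using $\Height(A_j)-\Height(A_i)$ forced to be a multiple governed by \eqref{eq:walk jump}) yields either a pure nose cycle-type walk or a pure hollow one. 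I do not expect to need anything beyond Observation \eqref{eq:walk jump}, equations \eqref{eq:mag}--\eqref{eq:rem}, and the stated height properties of the crossing edges.
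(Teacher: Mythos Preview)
Your high-level instinct --- find a nose and a hollow that ``cancel'' and replace the block by a step path --- is the right one, and it is exactly what the paper does. But two aspects of your plan create real difficulties that the paper's proof avoids.

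First, you frame the argument as \emph{choosing a target} (nose walk or hollow walk) and then eliminating the wrong type of edge. This forces you into the awkward situation you yourself flag: removing a single internal hollow $A_i \to A_j$ (with $i>j$) and ``replacing the affected stretch by steps'' is not possible without using the external step $A_n\to A_1$, which changes $\REM$; dually, replacing a single nose by steps can introduce many $1$-steps and hurt $\MAG$. The paper never commits to a target. Instead it locates a \emph{nose followed by a step path followed by a hollow} --- a subwalk $B_1 \to B_2 \to \cdots \to B_{i-1} \to B_i$ with the first edge a nose and the last a hollow --- and replaces the whole block by the step path from $B_1$ to $B_i$. Because one nose and one hollow disappear simultaneously, the induction parameter $\Noses(\W)\cdot\Hollows(\W)$ strictly drops, and whichever factor reaches zero first determines whether the limit is a nose walk or a hollow walk. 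No choice is needed.

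Second, your ``neither factor drops'' bookkeeping is sketched at too coarse a level to be convincing. The paper does not try to argue monotonicity via the coefficients in \eqref{eq:mag}--\eqref{eq:rem} abstractly; it shows that for the specific nose--steps--hollow block and its replacement step path one has $\MAG(\W_{1i}) = \MAG(\W_{1i}')$ and $\REM(\W_{1i}) = \REM(\W_{1i}')$ \emph{exactly}, by a short case analysis on the heights of $B_1,B_2,B_{i-1},B_i$ (five cases, recorded in a table). A separate easy case handles when the block is not a path (it then contains a step-heavy cycle with $\MAG\le 0$ and $\REM\le 0$ that can simply be excised). Your height-$0$ decomposition and the ``canonical step-path on internal blocks'' idea are unnecessary detours once you have the right local move; I would drop them and carry out the nose--steps--hollow replacement with the explicit case check.
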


\begin{proof}
 The proof is by induction on $\Noses(\W) \times \Hollows(\W)$ and $\Steps(\W) + 1$, the base case of which is trivial.  Suppose, then, that $\W$ has at least one nose and one hollow.  So, $\W$ must have a subwalk $\W_{1i} = B_1, \ldots, B_i$ such that $B_1 \to B_2$ is a nose, $B_2, \ldots, B_{i-1}$ is a step walk, and $B_{i-1} \to B_i$ is a hollow.  Observe that $i \geq 3$, because $t(B_1)s(B_2)$ are consecutive and thus $B_2 \to B_1$ is not a hollow.
 
 Consider first the case in which $\W_{1i}$ is not a path, thus it contains a cycle $\W_{jk}$ $=$ $B_j$, $\ldots$, $B_k = B_j$ ($j < k$) that must have at least one $1$-step or $0$-hollow.  Note that $\W_{jk} \neq \W_{1i}$ because otherwise $B_2, \ldots, B_{i-1}$ would pass through $B_1 = B_i$ contradicting the fact that $\W_{jk}$ is a cycle.  Hence, $\W_{jk}$ does not contains both a nose and a hollow and $\MAG(\W_{jk}) \leq 0$; recall \eqref{eq:mag}.  Moreover, if $\W_{jk}$ has an external hollow (which must be $B_{i-1} \to B_i$), then it must contain the unique external step of $\Syn$.  Therefore, $\REM(\W_{ji}) \leq 0$ by \eqref{eq:rem}, and the proof follows by induction on $\W \setminus \W_{ji}$.  

 Consider now the case in which $\W_{1i}$ is a path and let $\W_{1i}'$ be the step path from $B_1$ to $B_i$.  We claim that $\MAG(\W_{1i}) = \MAG(\W_{1i}')$ and $\REM(\W_{1i}) = \REM(\W'_{1i})$, in which case the proof follows by induction on $(\W \setminus \W_{1i}) \cup \W'_{1i}$.  Since $\W_{1i}$ is a path, it follows that either $B_2 > B_i$ or $B_i > B_1$, which leaves us with only five possible combinations for the heights of $B_1$, $B_2$, $B_{i-1}$, and $B_{i}$, all of which are analyzed in the table below.  The claim is therefore true.

 \begin{center}
  \begin{tabular}{|c|c|c|c|c|c|c|c|}
    \hline
    $\Height(B_1)$ & $\Height(B_2)$ & $\Height(B_{i-1})$ & $\Height(B_{i})$ & $\MAG(\W_{1i})$ & $\MAG(\W'_{1i})$ & $\REM(\W_{1i})$ & $\REM(\W_{1i})$ \\\hline
    $x$ & $y$ & $y$ & $x$ & $0$ & $0$ & $0$ & $0$ \\
    $\Height$ & $0$ & $0$ or $1$ & $0$       & $-\Ratio$ & $-\Ratio$ & $-1$ & $-1$ \\
    $\Height - 1$ & $0$ & $0$ or $1$ & $0$   & $-1 - \Ratio$ & $-1 - \Ratio$ & $-1$ & $-1$ \\
    $\Height - 1$ & $\Height$ or $0$ & $0$ & $\Height$    & $-1$ & $-1$ & $0$ & $0$ \\
    $\Height - 1$ & $\Height$ & $0$ & $\Height-1$    & $0$ & $0$ & $0$ & $0$ \\\hline
  \end{tabular}
 \end{center}
\end{proof}

The above lemma brings us closer to Tucker's characterization, as it shows that any cycle with $\SEP > 0$ can be transformed into a hollow or nose cycle; thus, the existence of a UCA model equivalent to $\M$ is reduced to how its ratios look like.  One of the salient features of our proof is that it builds an efficient UCA model $\Unit$ equivalent to $\M$.  The idea is to take $\Unit = \Unit(\M, \Circ, \Len)$ as in Theorem~\ref{thm:separation constraints} for some appropriate values of $\Circ$ and $\Len$.  Observe that $\LeftBound = \RightBound = 0$ in this case, hence we can replace $\BoundedSyn$ and $A_0$ with $\Syn$ and $A_1$ in the definition of $\Unit$.  In principle, $O(n^2)$ time is required to compute all the values of $\IDist{\SEP}(A_1, A_i)$, because $\Syn$ is not an acyclic graph.  By taking some appropriate values for $\Circ$ (or $\Extra$) and $\Len$, we can remove all these cycles so as to reduce the time complexity to $O(n)$.  With this in mind, we say that an edge $A_i \to A_j$ of $\Syn$ is \emph{redundant} when either 
\begin{enumerate}[($\rm red_1$)]
  \item $\IDist{\MAG}(A_1, A_j) > \IDist{\MAG}(A_1, A_i) + \MAG(A_i \to A_j)$, or  \label{def:red-1}
  \item $\IDist{\MAG}(A_1, A_j) = \IDist{\MAG}(A_1, A_i) + \MAG(A_i \to A_j)$ and\\ 
        $(\IDist{\REM} \circ \IDist{\MAG})(A_1, A_j) > (\IDist{\REM} \circ \IDist{\MAG})(A_1, A_i) + \MAG(A_i \to A_j)$. \label{def:red-2}
\end{enumerate}
\newcommand{\redref}[1]{\itemref{red}{red}{#1}}
Roughly speaking, $A_i \to A_j$ is redundant when it plays no role on the separation between $s(A_1) = 0$ and $s(A_j)$ for large values of $\Len$ and not-so-large values of $\Extra$.  (Recall that $(\IDist{\REM} \circ \IDist{\MAG})$ is the $\REM$-distance restricted only to those paths with maximum $\MAG$-distance.) The reduction of $\Syn(\M)$ is the digraph $\Red(\M)$ obtained after removing all the redundant edges of $\Syn(\M)$; as usual, we omit the parameter $\M$.  Theorem~\ref{thm:tucker} includes Tucker's characterization as equivalence (\ref{thm:tucker:equivalence}) $\Leftrightarrow$ (\ref{thm:tucker:bound}).

\begin{theorem}\label{thm:tucker}
  Let $\M$ be a PCA model with arcs $A_1 < \ldots < A_n$.  Then, the following statements are equivalent:

  \begin{enumerate}[(i)]
    \item $\M$ is equivalent to a UCA model.\label{thm:tucker:equivalence}
    \item $\Ratio < \RATIO$.\label{thm:tucker:bound}
    \item $\MAG(\W) < 0$ for every hollow cycle $\W$ of $\Syn$.\label{thm:tucker:hollow-cycles}
    \item either $\MAG(\W) < 0$ or $\MAG(\W) = 0$ and $\REM(\W) < 0$, for every cycle $\W$ of $\Syn$.\label{thm:tucker:cycles}
    \item $\Red$ is acyclic.\label{thm:tucker:negative-certificate}
    \item $\Dist{\SEP_{(\Circ, \Len)}}(\Syn,A_1, A_i) = \IDist{\SEP_{(\Circ, \Len)}}(\Red, A_1, A_i)$ for every $1 \leq i \leq n$, where $\Circ = (\Len+1)(\Height+\Ratio) + \Extra$, $(\Len+1) = \Ratio_2\Extra^2$, $\Extra = 4n$ and $\Ratio = \Ratio_1/\Ratio_2$ for $\Ratio_1, \Ratio_2 \in\mathbb{N}$.\label{thm:tucker:positive-certificate}
    \item $\Unit(\M, \Circ, \Len)$ is an integer $(\Circ, \Len)$-CA model equivalent to $\M$ for $\Circ$ and $\Len$ as in~(\ref{thm:tucker:positive-certificate}).\label{thm:tucker:model}
  \end{enumerate}
\end{theorem}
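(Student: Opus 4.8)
The plan is to establish the cycle of implications $(\ref{thm:tucker:equivalence})\Rightarrow(\ref{thm:tucker:bound})\Rightarrow(\ref{thm:tucker:cycles})\Rightarrow(\ref{thm:tucker:model})\Rightarrow(\ref{thm:tucker:equivalence})$ and to insert $(\ref{thm:tucker:hollow-cycles})$, $(\ref{thm:tucker:negative-certificate})$, $(\ref{thm:tucker:positive-certificate})$ as further equivalences to $(\ref{thm:tucker:cycles})$. Everything rests on two observations. Since $\LeftBound=\RightBound=0$ here, Theorem~\ref{thm:separation constraints} says that $\M$ is equivalent to a $(\Circ,\Len)$-CA model if and only if $\SEP_{(\Circ,\Len)}(\W)\le 0$ for every cycle $\W$ of $\Syn$, and in that case $\Unit(\M,\Circ,\Len)$ is such a model. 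And, by \eqref{eq:sep}, on a cycle $\W$ we have $\SEP_{(\Circ,\Len)}(\W)=(\Len+1)\MAG(\W)+\Extra\REM(\W)+\CONST(\W)$ with $\CONST(\W)\ge 0$; if moreover $\CONST(\W)=0$ then $\W$ is a pure nose cycle, which by \eqref{eq:rem} forces $\REM(\W)<0$, since a nose cycle cannot consist only of $1$-noses (these strictly raise the height).

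For $(\ref{thm:tucker:equivalence})\Rightarrow(\ref{thm:tucker:bound})$ I would feed a UCA model of $\M$ — a $(\Circ',\Len',\Thres',0,0,0)$-CA model, for its minimum extreme separation $\Thres'>0$ — into \eqref{eq:ratios}, getting $(\Len'+\Thres')(\Height+\Ratio)\le\Circ'\le(\Len'-\Thres')(\Height+\RATIO)$, which, as $\Thres'>0$ and $\Height+\RATIO\ge 1$, is impossible unless $\Ratio<\RATIO$. The block $(\ref{thm:tucker:bound})\Rightarrow(\ref{thm:tucker:cycles})\Rightarrow(\ref{thm:tucker:hollow-cycles})\Rightarrow(\ref{thm:tucker:bound})$ is driven by two identities read off \eqref{eq:mag} and \eqref{eq:rem}: on a nose cycle, $\MAG(\W)=\REM(\W)(\Ratio-\Ratio(\W))$ with $\REM(\W)<0$ and $\Ratio(\W)\le\Ratio$, so $\MAG(\W)\le 0$; on a hollow cycle, $\MAG(\W)=\REM(\W)(\Ratio-\RATIO(\W))$ with $\REM(\W)\ge 0$ and, when $\REM(\W)>0$, $\RATIO(\W)\ge\RATIO$, while the degenerate case $\REM(\W)=0$ is settled by a short count from \eqref{eq:walk jump} showing that such a cycle still carries a leftward hollow or $0$-step, hence $\MAG(\W)<0$ anyway. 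Combining these with Lemma~\ref{lem:nose or hollow cycle} (which replaces an arbitrary cycle by a nose or hollow cycle without lowering $\MAG$ or $\REM$) yields $(\ref{thm:tucker:bound})\Rightarrow(\ref{thm:tucker:cycles})$; $(\ref{thm:tucker:cycles})\Rightarrow(\ref{thm:tucker:hollow-cycles})$ is immediate (hollow cycles have $\REM\ge 0$, so the second alternative cannot occur); and $(\ref{thm:tucker:hollow-cycles})\Rightarrow(\ref{thm:tucker:bound})$ because $\RATIO=\RATIO(\M)$ is attained by some hollow cycle, on which $\MAG<0$ forces $\RATIO(\W)>\Ratio$.

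The constructive core is $(\ref{thm:tucker:cycles})\Rightarrow(\ref{thm:tucker:model})$ together with $(\ref{thm:tucker:cycles})\Leftrightarrow(\ref{thm:tucker:positive-certificate})$. With $\Extra=4n$, $\Len+1=\Ratio_2\Extra^2$, $\Circ=(\Len+1)(\Height+\Ratio)+\Extra$, one first checks that $\Circ,\Len\in\mathbb{N}$, that $\Circ>\Len>0$ (using $\Ratio\ge 1-\Height$, which holds because $\Syn$ always has a nose and the cycle made of it plus the step path back has ratio $\ge 1-\Height$), and that both are polynomial in $n$ (since $\Ratio(\M)$ is attained on a simple nose cycle, $\Ratio_2=O(n)$). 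Then on every \emph{simple} cycle $\W$ (at most $n$ edges) one bounds $\SEP_{(\Circ,\Len)}(\W)$: when $\MAG(\W)<0$, as $\MAG(\W)$ is a multiple of $1/\Ratio_2$, $(\Len+1)\MAG(\W)\le-\Extra^2=-16n^2$ dominates $\Extra\REM(\W)+\CONST(\W)\le 4n^2+2n$; when $\MAG(\W)=0$ and $\REM(\W)\le-1$, $\SEP_{(\Circ,\Len)}(\W)=\Extra\REM(\W)+\CONST(\W)\le-4n+2n$. Since every closed walk is an edge-disjoint union of simple cycles and $\SEP$ is edge-additive, $\SEP_{(\Circ,\Len)}(\W)\le 0$ for \emph{every} cycle, so Theorem~\ref{thm:separation constraints} gives $(\ref{thm:tucker:model})$, and $(\ref{thm:tucker:model})\Rightarrow(\ref{thm:tucker:equivalence})$ is trivial. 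The same magnitude estimates, read the other way, show that for this $(\Circ,\Len)$ a $\SEP$-maximal walk from $A_1$ to $A_i$ first maximises $\MAG$ and then $\REM$; using that $(\ref{thm:tucker:cycles})$ rules out positive $\MAG$-cycles and positive $\REM$-cycles among $\MAG$-null cycles — so walk- and path-distances coincide and prefixes of an optimal walk are optimal — this maximal walk is a path whose edges are all non-redundant, i.e.\ a path of $\Red$, proving $(\ref{thm:tucker:positive-certificate})$; conversely, finiteness of $\Dist{\SEP_{(\Circ,\Len)}}(\Syn,A_1,A_i)$ forces $(\ref{thm:tucker:cycles})$ by the same estimates. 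Finally $(\ref{thm:tucker:cycles})\Leftrightarrow(\ref{thm:tucker:negative-certificate})$: telescoping the redundancy (in)equalities around a cycle of $\Red$ forces $\MAG\ge 0$ and, when $\MAG=0$, $\REM\ge 0$, contradicting $(\ref{thm:tucker:cycles})$; conversely, if $(\ref{thm:tucker:cycles})$ fails, a cycle lexicographically maximal in $(\MAG,\REM)$ has only non-redundant edges and so survives in $\Red$.

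I expect the main obstacle to be the quantitative bookkeeping of the last paragraph: calibrating $\Extra$ and $\Len$ so that $\SEP_{(\Circ,\Len)}$ faithfully realises the lexicographic order $(\MAG,\REM,\CONST)$ on all simple cycles and on all $A_1$-rooted paths while keeping $\Circ,\Len$ polynomial, together with the careful handling of the degenerate cycles — hollow cycles with $\REM(\W)=0$, and cycles with $\MAG(\W)=\REM(\W)=0<\CONST(\W)$ — that the clean identities above do not cover directly. Everything else is bookkeeping on top of Theorem~\ref{thm:separation constraints} and Lemma~\ref{lem:nose or hollow cycle}.
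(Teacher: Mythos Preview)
Your plan is close to the paper's and most steps are sound. The identities $\MAG(\W)=\REM(\W)\,(\Ratio-\Ratio(\W))$ on nose cycles and $\MAG(\W)=\REM(\W)\,(\Ratio-\RATIO(\W))$ on hollow cycles are a clean way to organise (\ref{thm:tucker:bound})$\Leftrightarrow$(\ref{thm:tucker:hollow-cycles})$\Leftrightarrow$(\ref{thm:tucker:cycles}); your direct (\ref{thm:tucker:cycles})$\Rightarrow$(\ref{thm:tucker:model}) via the magnitude estimates on simple cycles (plus cycle-decomposition of closed walks) is a legitimate shortcut over the paper's route through (\ref{thm:tucker:negative-certificate}) and (\ref{thm:tucker:positive-certificate}); and your (\ref{thm:tucker:cycles})$\Rightarrow$(\ref{thm:tucker:positive-certificate}) --- once (\ref{thm:tucker:cycles}) kills positive $\SEP$-cycles, prefix-optimality forces every prefix of a $\SEP$-maximal path to be lex-optimal in $(\MAG,\REM)$, hence every edge non-redundant --- is also correct and in fact slicker than the paper's induction.

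The gap is in your (\ref{thm:tucker:negative-certificate})$\Rightarrow$(\ref{thm:tucker:cycles}). You assert that if (\ref{thm:tucker:cycles}) fails then a cycle $\W$ lex-maximal in $(\MAG,\REM)$ has only non-redundant edges and so lies in $\Red$. But non-redundancy of an edge $B_j\to B_{j+1}$ is the condition $\IDist{\MAG}(A_1,B_{j+1})\le\IDist{\MAG}(A_1,B_j)+\MAG(B_j\to B_{j+1})$ (with the $\REM$-tiebreak), a statement about \emph{path}-distances from the fixed vertex $A_1$; lex-maximality of $\W$ among cycles says nothing about these potentials. Telescoping the redundancy inequalities around $\W$ only shows that not \emph{all} edges can be redundant when $\MAG(\W)\ge 0$ --- it does not show that \emph{none} are --- and when (\ref{thm:tucker:cycles}) fails there are positive-$\MAG$ cycles, so the potentials $\IDist{\MAG}(A_1,\cdot)$ lose the very subpath-optimality you would need to reroute the offending edge into a larger cycle. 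The paper avoids this by proving (\ref{thm:tucker:negative-certificate})$\Rightarrow$(\ref{thm:tucker:positive-certificate}) directly, via an induction on walk length showing that any walk of $\Syn$ from $A_1$ has $\SEP$ at most that of some path of $\Red$ (the choices $\Len+1=\Ratio_2\Extra^2$ and $\Extra=4n$ are exactly what let the two redundancy cases absorb the residual $\Extra\cdot\REM$ and $\CONST$ terms). Since you already have (\ref{thm:tucker:positive-certificate})$\Rightarrow$(\ref{thm:tucker:cycles}), replacing your lex-max argument by that induction closes the chain.
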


\begin{proof}
  (\ref{thm:tucker:equivalence}) $\Rightarrow$ (\ref{thm:tucker:bound}).  This is direct consequence of \eqref{eq:ratios}.

  (\ref{thm:tucker:bound}) $\Rightarrow$ (\ref{thm:tucker:hollow-cycles}).  If $\W_H$ is a hollow cycle with a nonnegative length factor, then
    \[0 \leq -\Steps_1 - \Hollows_0 - \Hollows_{\Height} + \Ratio(\Hollows_{\Height} + \Hollows_{\Height - 1} - \Steps_{-\Height})\]
    implying (recall \eqref{eq:hollow ratio} observing that $\Hollows_{\Height} + \Hollows_{\Height - 1} \geq \Steps_{-\Height}$)
    \[\Ratio \geq \frac{\Hollows_\Height+\Hollows_0+\Steps_1}{\Hollows_\Height + \Hollows_{\Height-1} - \Steps_{-\Height}} = \RATIO(\W_H) \geq \RATIO.\]
  
  (\ref{thm:tucker:hollow-cycles}) $\Rightarrow$ (\ref{thm:tucker:cycles}).  Suppose either $\MAG(\W) > 0$ or $\MAG(\W) = 0$ and $\REM(\W) \geq 0$ for some cycle $\W$ of $\Syn$.  If $\REM(\W) \geq 0$, then the statement follows as there is a hollow cycle with a nonnegative length factor by Lemma~\ref{lem:nose or hollow cycle}.  Otherwise, there is a nose cycle $\W_N$ with positive length factor by Lemma~\ref{lem:nose or hollow cycle}, so
    \[0 < \Noses_{-\Height} -\Steps_1 - \Ratio(\Noses_{-\Height} + \Noses_{1-\Height} + \Steps_r)\]
    implying (recall \eqref{eq:nose ratio})
    \[\Ratio < \frac{\Noses_{-\Height} - \Steps_1}{\Noses_{-\Height} + \Noses_{1-\Height} + \Steps_{-\Height}} = \Ratio(\W_N) \leq \Ratio,\]
    which is impossible.

  (\ref{thm:tucker:cycles}) $\Rightarrow$ (\ref{thm:tucker:negative-certificate}).  Suppose $\Red$ has some cycle $\W = B_1, \ldots, B_k$ with $B_1 = B_{k}$.  By \redref{1}, 
  \begin{equation*}
    \IDist{\MAG}(A_1, B_{i+1}) \leq \IDist{\MAG}(A_1, B_i) + \MAG(B_i \to B_{i+1}).  \tag{a} 
  \end{equation*}
  Then, by induction, 
  \[\IDist{\MAG}(A_1, B_1) = \IDist{\MAG}(A_1, B_{k}) \leq \IDist{\MAG}(A_1, B_1) + \MAG(\W),\]
  which implies that $\MAG(\W) \geq 0$.  Moreover, $\MAG(\W) = 0$ only if (a) holds by equality for every $1 \leq i \leq k$, thus 
  \[(\IDist{\REM} \circ \IDist{\MAG})(A_1, B_{i+1}) \leq (\IDist{\REM} \circ \IDist{\MAG})(A_1, B_i) + \REM(B_i \to B_{i+1})\] 
  by \redref{2}, implying $\REM(\W) \geq 0$ by induction.

  (\ref{thm:tucker:negative-certificate}) $\Rightarrow$ (\ref{thm:tucker:positive-certificate}).  Taking into account that $\Red$ is acyclic and every walk of $\Red$ is also a walk of $\Syn$, it follows that $\Dist{\SEP}(\Syn, A_1, A_i) \geq \IDist{\SEP}(\Red, A_1, A_i)$ for every $1 \leq i \leq n$.  
  
  For the remaining inequality suppose, by induction, that $\SEP(\W) \leq \IDist{\SEP}(\Red, A_1, A_i)$ for every walk $\W$ of $\Syn$ that goes from $A_1$ to $A_i$ whose length is at most $k-1 < n$.  Consider any walk $\W = B_1, \ldots, B_{k+1}$ of $\Syn$ from $A_1 = B_1$, and let
  \begin{itemize}
    \item $\W^\Red_j$ be a walk of $\Red$ from $B_1$ to $B_j$ with $\SEP(W^\Red_j) = \IDist{\SEP}(\Red, B_1, B_j)$, and
    \item $\W^\Syn_{k+1}$ be the walk obtained by traversing $B_{k} \to B_{k+1}$ after $\W^\Red_{k}$.
  \end{itemize}
  By inductive hypothesis, $\SEP(\W) \leq \SEP(\W^\Red_{k}) + \SEP(B_{k} \to B_{k+1}) = \SEP(\W^\Syn_{k+1})$, thus $\SEP(\W) \leq \SEP(\W^\Red_{k+1})$ when $B_{k} \to B_{k+1}$ is an edge of $\Red$.  Suppose, then, that $B_{k} \to B_{k+1}$ is redundant in $\Syn$, and consider the two possibilities according to \redref{1} and \redref{2}.
  \begin{description}
    \item [Case 1:] \redref{1} is true.  Note that since no edge of $\W_{k+1}^\Red$ is redundant, it follows by induction that $\MAG(\W_j^\Red) = \IDist{\MAG}(A_1, B_j)$ for every $1 \leq j \leq k+1$.  Hence, 
    \[\MAG(\W_{k+1}^\Red) = \IDist{\MAG}(A_1, B_{k+1}) > \IDist{\MAG}(A_1, B_{k}) + \MAG(B_{k} \to B_{k+1}) = \MAG(\W_{k+1}^\Syn).\]
    Now, taking into account that every term of the length factor is a multiple of either $1$ or $\Ratio = \Ratio_1/\Ratio_2$ in \eqref{eq:mag}, we obtain that 
    \[(\Len+1)(\MAG(\W_{k+1}^\Red) - \MAG(\W^\Syn_{k+1})) = \Ratio_2\Extra^2(\MAG(\W^\Red_{k+1}) - \MAG(\W^\Syn_{k+1})) \geq \Extra^2.\]  By \eqref{eq:sep}, we obtain that
    \begin{align*}
      \SEP(\W_{k+1}^\Red) \geq& (\Len+1)(\Height(B_{k+1}) + \MAG(\W_{k+1}^\Red)) - \Extra n \\
      \SEP(\W^\Syn_{k+1}) \leq& (\Len+1)(\Height(B_{k+1}) + \MAG(\W_{k+1}^\Syn)) + \Extra k + 2k,
    \end{align*}
    thus,
    \[\SEP(\W_{k+1}^\Red) - \SEP(\W^\Syn_{k+1}) \geq \Extra^2 - \Extra(k+n) - 2k \geq 8n - 2k \geq 2.\]
    \item [Case 2:] \redref{1} is false, thus \redref{2} holds.  As before, we observe by induction that $\MAG(\W_j^\Red) = \IDist{\MAG}(A_1, B_j)$ and, thus, $\REM(\W_j^\Red) = (\IDist{\REM} \circ \IDist{\MAG})(A_1, B_j)$.  Consequently, by \redref{2},
    \[\REM(\W_{k+1}^\Red) > (\IDist{\REM} \circ \IDist{\MAG})(A_1, B_{k}) + \REM(B_{k} \to B_{k+1}) = \REM(\W_{k+1}^\Syn).\]
    Since \redref{1} is true, it follows that $\MAG(\W_{k+1}^\Syn) = \IDist{\MAG}(A_1, B_{k+1})$.  Then, by \eqref{eq:sep}, 
    \[\SEP(\W_{k+1}^\Red) - \SEP(\W_{k+1}^\Syn) \geq \Extra(\REM(\W_{k+1}^\Red) - \REM(\W^\Syn_{k+1})) - \CONST(\W^\Syn_{k+1}) \geq 4n - 2k \geq 2.\]
  \end{description}
  We conclude, therefore, that $\SEP(\W) \leq \IDist{\SEP}(\Red, A_1, A_i)$ for every walk $\W$ of $\Syn$ that goes from $A_1$ to $A_i$ whose length is at most $n$.  By induction, this implies that $\SEP(\W) \leq \IDist{\SEP}(\Red, A_1, A_i)$ whatever the length of $\W$ is, thus $\Dist{\SEP}(\Syn, A_1, A_i) = \IDist{\SEP}(\Red, A_1, A_i)$.
  
  (\ref{thm:tucker:positive-certificate}) $\Rightarrow$ (\ref{thm:tucker:model}) Since $\Dist{\SEP}(\Syn, A_1, A_i) = \IDist{\SEP}(\Red, A_1, A_i)$ for every $1 \leq i \leq n$ and $\Syn$ is strongly connected, it follows that $\Dist{\SEP}(\Syn, A_i, A_j) = \IDist{\SEP}(\Syn, A_i, A_j)$ for every $1 \leq i,j \leq n$.  Hence $\SEP(\W) \leq 0$ for every cycle $\W$ of $\Syn$ and the implication follows by Theorem~\ref{thm:separation constraints} (note that $\Circ$ and $\Len$ are integer values).

  (\ref{thm:tucker:model}) $\Rightarrow$ (\ref{thm:tucker:equivalence}). Trivial.
\end{proof}

Theorem~\ref{thm:tucker} has some nice algorithmic consequences \Rep when combined with Theorem~\ref{thm:separation constraints}.  For any input PCA model $\M$ we solve \uRep for the UCA descriptor $\Descriptor$ implied by statement~(\ref{thm:tucker:positive-certificate}).  As a byproduct, we either obtain a UCA model equivalent to $\M$ or a cycle of $\Syn$ that can be used for negative certification.  The algorithm costs $O(n^2)$ time, plus the time and space required so as to compute $\Ratio(\M)$.  In Section~\ref{sec:linear algorithm} we show a not-so-hard $O(n)$ time variation of this algorithm, taking advantage of the reduction of $\Syn$.  However, we first discuss how $\Ratio$ can be found.

\section{The recognition algorithm by Kaplan and Nussbaum}
\label{sec:kaplan and nussbaum}

Translated to synthetic graphs, Tucker's characterization (equivalence (\ref{thm:tucker:equivalence}) $\Leftrightarrow$ (\ref{thm:tucker:bound}) of Theorem~\ref{thm:tucker}) states that $\M$ is equivalent to no UCA model if only if $\Syn$ has nose and hollow cycles $\W_N$ and $\W_H$ such that $\Ratio(W_N) \geq \RATIO(W_H)$. The original proof by Tucker does not show how to obtain such cycles.  More than thirty years later, in~\cite{DuranGravanoMcConnellSpinradTuckerJA2006}, Durán et al.\ described the first polynomial algorithm to obtain such cycles with a rather complex implementation.  A few years later, in~\cite{KaplanNussbaumDAM2009}, Kaplan and Nussbaum improved this algorithm so as to run in $O(n)$ time while simplifying the implementation.  The purpose of this section is to translate the algorithm by Kaplan and Nussbaum in terms of the synthetic graph.  The proof of correctness is simple, short, and rather intuitive, while the implementation of the algorithm is quite similar to the one given by Kaplan and Nussbaum.

The main concept of this section is that of greedy cycles.  For any nose (resp.\ hollow) walk $\W_N = B_1, \ldots, B_k$ of $\Syn$, we say that $B_i$ is \emph{greedy (in $W_N$)} when either $B_i \to B_{i+1}$ is nose (resp.\ hollow) or no nose (resp.\ hollow) of $\Syn$ starts at $B_i$.  A nose (resp.\ hollow) cycle is \emph{greedy} when all its vertices are greedy.  In other words, $W_N$ is greedy when noses (resp.\ hollow) are preferred over steps.  The main idea of Durán et al., which was somehow implicit in~\cite{TuckerDM1974}, is to observe that $\Syn$ contains a greedy nose (resp.\ hollow) cycle of highest (resp.\ lowest) ratio.  Then, they compute the unique greedy nose (resp.\ hollow) cycle starting at a vertex $A$, for every $A \in V(\Syn)$, and keep the one with highest (resp.\ lowest) ratio.  Note that each greedy nose (resp.\ hollow) cycle $B_1, \ldots, B_k$ is found $k$ times, once for each starting vertex $B_i$.  Kaplan and Nussbaum, instead, compute each greed nose (resp.\ hollow) cycle only once by taking only one vertex as the starting point.  

The next lemma has a new proof that $\Syn$ contains a greedy nose cycle of highest ratio.

\begin{lemma}[see also~\cite{DuranGravanoMcConnellSpinradTuckerJA2006,KaplanNussbaumDAM2009,TuckerDM1974}]\label{lem:greedy nose cycle}
 For any nose cycle $\W_N$ of $\Syn$ there exists a greedy nose cycle $\W_N'$ of $\Syn$ such that $\Ratio(\W_N) \leq \Ratio(\W_N')$.
\end{lemma}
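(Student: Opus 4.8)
The plan is to prove the statement directly by a local-replacement (surgery) argument, in the spirit of the inductive proof of Lemma~\ref{lem:nose or hollow cycle}. First I would record the elementary bookkeeping. The four quantities $\Noses_{-\Height}$, $\Noses_{1-\Height}$, $\Steps_1$, and $\Steps_{-\Height}$ are additive under concatenation of walks, so, writing $a(\W) = \Noses_{-\Height}(\W) - \Steps_1(\W)$ and $b(\W) = \Noses_{1-\Height}(\W) + \Noses_{-\Height}(\W) + \Steps_{-\Height}(\W)$, one has $\Ratio(\W) = a(\W)/b(\W)$ with both numerator and denominator additive. I would also check that $b(\W_N) > 0$ for every nose cycle $\W_N$: among the edges of a nose cycle (which has no hollows), the only ones of $b$-value $0$ are the $1$-noses, the $0$-steps, and the $1$-steps, and each of these changes the height by $0$ or $+1$; since a closed walk has total height change $0$, a nose cycle with $b = 0$ would consist of $0$-steps only, which is impossible because a closed step walk must use the external step $A_n \to A_1$. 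Hence $\Ratio$ is a genuine ratio with positive denominator on every nose cycle (this is also what makes $\Ratio(\M)$ well-defined).

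Next I would set up the induction. For a nose cycle $\W_N$, call a step edge \emph{bad} if it is issued at a vertex that carries a nose; thus $\W_N$ is greedy exactly when it has no bad steps, and the induction is on the number of bad steps. If $\W_N$ has a bad step $B \to B^{+}$, let $B \to C$ be the nose at $B$, and reroute: follow from $B$ the greedy walk that leaves through the nose $B \to C$ and keeps choosing noses, stopping the first time it lands again on a vertex of $\W_N$, and splice this greedy detour into $\W_N$ in place of the segment it short-cuts. The resulting closed nose walk $\W_N'$ has, for the replaced segment, the same pair of endpoints, hence (by Equation~\eqref{eq:walk jump}) the same height balance. The geometric fact that powers this is the monotonicity of noses: if $A_i < A_j$ both carry a nose, then the target of $A_i$ is not strictly after the target of $A_j$ in the cyclic order; this is what guarantees that the greedy detour returns to $\W_N$, and does so early enough to be spliced without manufacturing a hollow, and it pins down which of the finitely many edge types ($1$-, $(-\Height)$-, $(1-\Height)$-noses and $0$-, $1$-, $(-\Height)$-steps) are destroyed and created.

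Then I would check that the ratio does not drop. The segment removed from $\W_N$ starts with the bad step, and the greedy detour replacing it has the same height displacement; a finite case analysis on the heights of the endpoints of the edges touched by the splice — run exactly in the style of the table in the proof of Lemma~\ref{lem:nose or hollow cycle} — shows that $a(\W_N')\, b(\W_N) \ge a(\W_N)\, b(\W_N')$, i.e., $\Ratio(\W_N') \ge \Ratio(\W_N)$. Since the splice strictly lowers the number of bad steps, iterating it terminates at a greedy nose cycle of ratio at least $\Ratio(\W_N)$.

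The part I expect to be the real obstacle is making the detour canonical: proving that the greedy walk launched from $B$ through its nose actually comes back onto $\W_N$ — rather than peeling off into a greedy cycle disjoint from $\W_N$ — and that splicing it in never costs ratio. Both points rest on the monotonicity of noses together with a careful treatment of the wrap-around situations ($\Height = 1$, and detours that sweep across the point $0$, where the $(-\Height)$- versus $(1-\Height)$- distinctions matter). Should the splice turn out to be too delicate to control directly, a linear-programming-duality variant is available: set $\rho = \max\{\Ratio(\W') \mid \W' \text{ a greedy nose cycle}\}$ and $w = a - \rho\, b$; build a potential $\pi$ on $V(\Syn)$ from the greedy functional graph (tight on greedy edges, with the slack of sub-optimal greedy cycles parked on a single edge of each) and prove the local inequality $\pi(u) - \pi(v) \ge w(v \to u)$ for every nose and every step, which once more reduces to the monotonicity of noses; telescoping $\pi$ around $\W_N$ then yields $w(\W_N) \le 0$, i.e., $\Ratio(\W_N) \le \rho$.
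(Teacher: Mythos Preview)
Your overall strategy --- local surgery at a non-greedy vertex, iterated until the cycle becomes greedy --- is exactly the paper's. The difference, and the source of the obstacle you flag, is the detour you splice in. You propose to follow the full greedy walk from the nose $B\to C$ until it re-enters $\W_N$, and you rightly worry that this walk may peel off into a greedy cycle disjoint from $\W_N$. The paper sidesteps this entirely with a much shorter detour. At the first non-greedy vertex $B_1$ (where a step is taken although a nose $B_1\to B$ is available), let $\W = B_1,\ldots,B_i$ be the segment of $\W_N$ up to and including the \emph{next} nose $B_{i-1}\to B_i$ of $\W_N$ (or up to $B$ itself if the step segment reaches it first). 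Replace $\W$ by the single nose $B_1\to B$ followed by the step path from $B$ to $B_i$. The nose-monotonicity you already invoked gives that $B_1, B_{i-1}, B, B_i$ appear in this order in the step traversal from $B_1$, so this step path exists and the splice rejoins $\W_N$ automatically --- there is no greedy tail to chase, and the obstacle you anticipated simply does not arise.

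With this one-nose-then-steps replacement the ratio bookkeeping is also lighter than the cross-multiplication $a(\W_N')\,b(\W_N)\ge a(\W_N)\,b(\W_N')$ you aim for: the paper shows the componentwise inequalities $a(\W')\ge a(\W)$ and $b(\W')\le b(\W)$ on the replaced segment (in fact equality in every case except the degenerate one $B_i=B$), and since $a$ and $b$ are additive these transfer to $\W_N'$ directly. The finite case analysis you anticipate then collapses to a four-line table on the heights of $B_1,B_{i-1},B,B_i$. Your LP-duality fallback would also go through, but it is considerably more machinery than this lemma needs.
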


\begin{proof}
 The proof is trivial when $\W_N$ is greedy.  When $\W_N$ is not greedy, we can transform it into a greedy nose cycle by traversing $\W_N$ from any vertex while applying the following operation when a non-greedy vertex $B_1$ is found, until no more non-greedy vertices remain.  Let $B_1 \to B$ be the nose from $B_1$ and $\W = B_1, \ldots, B_i$ be the shortest subpath of $\W_N$ such that either $B_i = B$ or $B_{i-1} \to B_{i}$ is a nose.  The operation transforms $\W_N$ into $\W_N'$ by replacing $\W$ with $\W'$, where $\W'$ is the path formed by the nose $B_1 \to  B$ followed by step path from $B$ to $B_i$.  Thus, it suffices to prove that $\Ratio(\W) \leq \Ratio(\W')$.  Moreover, if we write $\Ratio_1$ and $\Ratio_2$ to denote the numerator and denominator of $\Ratio$ as in~\eqref{eq:nose ratio}, then $\Ratio_j(\W_N') = \Ratio_j(\W_N) - \Ratio_j(\W) + \Ratio_j(\W')$ for $j \in \{1,2\}$.  So, it is enough to show that $\Ratio_1(\W') \geq \Ratio_1(\W)$ and $\Ratio_2(\W') \leq \Ratio_2(\W)$.
 
 If $B = B_i$ then $\W$ has at least one $1$-step or $(-h)$-step, while $\W' = B_1, B$.  Thus $\Ratio_1(\W) \leq 0 \leq \Ratio_1(\W')$ and $\Ratio_2(\W') \leq \Ratio_2(\W)$, and the lemma follows.  Otherwise, if $B \not\in \W$, then $B_1, B_{i-1}, B, B_{i}$ appear in this order in $\Syn$ when its steps are traversed from $B_1$.  It is not hard to see that $\Ratio_i(\W) = \Ratio_i(\W')$ when $B_1 \to B$ and $B_{i-1} \to B_{i}$ have equal jumps.  This leaves us with only four possible combinations for the heights of $B_1$, $B_{i-1}$, $B_i$, and $B$ when the jumps differ, and in all such cases the lemma is true (see the table below).
 
 \begin{center}
  \begin{tabular}{|c|c|c|c|c|c|c|c|}
    \hline
    $\Height(B_1)$ & $\Height(B_{i-1})$ & $\Height(B)$ & $\Height(B_{i})$ & $\Ratio_1(\W)$ & $\Ratio_1(\W')$ & $\Ratio_2(\W)$ & $\Ratio_2(W')$ \\\hline
    $\Height-2$ & $\Height-1$ & $\Height-1$        & $0$        & $-1$ & $-1$ & $1$ & $1$ \\
    $\Height-1$ & $\Height-1$ & $\Height$          & $0$        & $0$  & $0$  & $1$ & $1$ \\
    $\Height-1$ & $\Height$   & $\Height$ or $0$   & $0$        & $0$  & $0$  & $1$ & $1$ \\
    $\Height$   & $0$         & $0$                & $1$        & $0$ & $0$  & $1$  & $1$ \\\hline
  \end{tabular}
 \end{center}
\end{proof}

The proof that $\Syn$ contains a greedy hollow cycle with lowest ratio is similar and we omit it as it not required by our algorithm.  Moreover, an analogous proof is given in Lemma~\ref{lem:greedy walks}, while Section~\ref{sec:circuits and independents} shows the equivalence between hollow cycles and $(x,y)$-circuits.

\begin{lemma}[see~\cite{KaplanNussbaumDAM2009} and Section~\ref{sec:circuits and independents}]\label{lem:greedy hollow cycle}
 For any hollow cycle $\W_H$ of $\Syn$ there exists a greedy hollow cycle $\W_H'$ of $\Syn$ such that $\RATIO(\W_H) \geq \RATIO (\W_H')$.
\end{lemma}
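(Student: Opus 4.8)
The plan is to prove Lemma~\ref{lem:greedy hollow cycle} by mirroring the proof of Lemma~\ref{lem:greedy nose cycle}, interchanging the roles of noses and hollows (and of ``highest'' and ``lowest'' ratio). If $\W_H$ is already greedy there is nothing to do, so assume it is not. As in the nose case, I would traverse $\W_H$ and, each time a non-greedy vertex is encountered, apply a local rerouting; iterating until no non-greedy vertex remains produces a greedy hollow cycle $\W_H'$, and it suffices to describe one rerouting step and to argue that it does not increase the ratio.

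Suppose $B_1$ is a non-greedy vertex of $\W_H$, so some hollow $B_1 \to B$ exists in $\Syn$ while $\W_H$ leaves $B_1$ through a step. Let $\W = B_1, \ldots, B_i$ be the shortest subpath of $\W_H$ with either $B_i = B$ or $B_{i-1} \to B_i$ a hollow; then $B_1 \to B_2$ is a step, $B_2, \ldots, B_{i-1}$ is a step walk, and $i \ge 3$ (if $B_1 \to B_2$ were simultaneously a step and a hollow, $B_1$ would already be greedy). I would replace $\W$ by the walk $\W'$ formed by the hollow $B_1 \to B$ followed by the step path from $B$ to $B_i$. A position and height bookkeeping symmetric to the one in Lemma~\ref{lem:greedy nose cycle} --- based on the facts that every internal hollow decreases the position while every external hollow (an $\Height$- or $(\Height-1)$-hollow) starts at height $0$ --- shows that $B$ lies between $B_{i-1}$ and $B_i$ on the step cycle, so $\W'$ is a well-defined walk from $B_1$ to $B_i$ containing no nose; thus substituting $\W'$ for $\W$ in $\W_H$ yields a hollow cycle $\W_H'$, the defining inequality $\Hollows_\Height + \Hollows_{\Height-1} \ge \Steps_{-\Height}$ being preserved since we only deleted steps and inserted one hollow.

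Write $\RATIO_1 = \Hollows_0 + \Hollows_\Height + \Steps_1$ and $\RATIO_2 = \Hollows_\Height + \Hollows_{\Height-1} - \Steps_{-\Height}$ for the numerator and denominator of $\RATIO$ in \eqref{eq:hollow ratio}. Since $\RATIO_j(\W_H') = \RATIO_j(\W_H) - \RATIO_j(\W) + \RATIO_j(\W')$ for $j \in \{1,2\}$, and since $\RATIO_1 \ge 0$ always while $\RATIO_2(\W_H') \ge \RATIO_2(\W_H) \ge 0$, it is enough to establish $\RATIO_1(\W') \le \RATIO_1(\W)$ and $\RATIO_2(\W') \ge \RATIO_2(\W)$; this gives $\RATIO(\W_H') \le \RATIO(\W_H)$ even under the convention $x/0 = \infty$. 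If $B = B_i$ then $\W$ is a genuine step detour containing at least one $1$-step or $(-\Height)$-step whereas $\W' = B_1, B$, and both inequalities are immediate. If $B \ne B_i$, then when $B_1 \to B$ and $B_{i-1} \to B_i$ have the same jump the contributions of $\W$ and $\W'$ to $\RATIO_1$ and $\RATIO_2$ agree, and the remaining cases reduce to a short table of height combinations for $B_1, B_{i-1}, B, B_i$, checked exactly as in Lemma~\ref{lem:greedy nose cycle}.

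I expect the main obstacle to be the bookkeeping that makes the rerouting legitimate: verifying that the step path from $B$ to $B_i$ does not wrap past $B_1$, handling the external hollows correctly, and checking that the operation does not create new non-greedy vertices so that the iteration terminates (the vertices of $\W'$ other than $B_1$ are, by minimality of $\W$, already consistent with $\W_H'$, exactly as in the nose case). Once these configurations are pinned down, the height case analysis and the arithmetic with $\RATIO_1$ and $\RATIO_2$ are routine.
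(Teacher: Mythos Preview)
Your proposal is correct and follows exactly the route the paper indicates: the paper omits the proof of Lemma~\ref{lem:greedy hollow cycle}, saying it is ``similar'' to Lemma~\ref{lem:greedy nose cycle} and pointing to Case~2 of Lemma~\ref{lem:greedy walks} for the analogous bookkeeping; your rerouting of a non-greedy vertex through the hollow $B_1\to B$ followed by the step path to $B_i$, together with the additivity argument on the numerator $\RATIO_1=\Hollows_0+\Hollows_\Height+\Steps_1$ and denominator $\RATIO_2=\Hollows_\Height+\Hollows_{\Height-1}-\Steps_{-\Height}$, is precisely that template. The only detail to be careful with is the cyclic order---for hollows it is $B,\,B_i,\,B_1,\,B_{i-1}$ (cf.\ Lemma~\ref{lem:greedy walks}), which indeed places $B$ between $B_{i-1}$ and $B_i$ as you claim---and the small table of height combinations when $B\neq B_i$; both are routine once written out.
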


The algorithm to compute a nose (resp.\ hollow) cycle with highest (resp.\ lowest) ratio follows easily from Lemma~\ref{lem:greedy nose cycle} (resp.\ Lemma~\ref{lem:greedy hollow cycle}).  Just note that if an edge $A_i \to A_j$ belongs to a greedy nose (resp.\ hollow) cycle, then either $A_i \to A_j$ is a nose (resp.\ hollow), or there are no noses (resp.\ hollow) from $A_i$ in $\Syn$.  Then, $\W$ is a greedy nose (resp.\ hollow) cycle of $\Syn$ if and only if $\W$ is a cycle of the digraph $\Syn_N$ that is obtained by keeping only the noses (resp.\ hollows) of $\Syn$ and the steps that go from vertices with no noses (resp.\ hollows).  Since all the vertices in $\Syn_N$ have out-degree $1$, we can obtain all the greedy cycles in $O(n)$ time.  Then, by Lemmas \ref{lem:greedy nose cycle}~and~\ref{lem:greedy hollow cycle}, $\Ratio$ and $\RATIO$ can be computed in $O(n)$ time.  Furthermore, if $\Ratio > \RATIO$, then a nose and a hollow cycles $\W_N$ and $\W_H$ with $\Ratio(\W_N) = \Ratio$ and $\RATIO(\W_H) = \RATIO$ are obtained as a byproduct.

\section{Efficient construction of UCA models}
\label{sec:linear algorithm}

In~\cite{DuranGravanoMcConnellSpinradTuckerJA2006}, Durán et al.\ ask if there exists an integer $(\Circ, \Len)$-CA model equivalent to a PCA model $\M$ such that $\Circ$ and $\Len$ are bounded by a polynomial in $n$.  If affirmative, they also inquire whether such a model can be found in $O(n)$ time.  Both questions were affirmatively answered by Lin and Szwarcfiter in~\cite{LinSzwarcfiterSJDM2008}, who showed how to reduce the problem of finding such a $(\Circ, \Len)$-CA model to a circulation problem.  Their algorithm and its correctness have nothing to do with Theorem~\ref{thm:tucker} and it is not easy to see how a forbidden subgraph can be obtained for certification (that is, without invoking a second recognition algorithm, as the one by Kaplan and Nussbaum).  Kaplan and Nussbaum ask for a \emph{unified} certification algorithm.  We provide such an algorithm in this section.

Our algorithm is based on the equivalence (\ref{thm:tucker:equivalence}) $\Leftrightarrow$ (\ref{thm:tucker:negative-certificate}) $\Leftrightarrow$ (\ref{thm:tucker:positive-certificate}) of Theorem~\ref{thm:tucker}.  That is, the algorithm just checks whether $\Red(\M)$ is acyclic.  If affirmative, then $\Unit(\Circ, \Len)$ can be taken as the positive certificate by Theorem~\ref{thm:separation constraints}, where $\Circ$ and $\Len$ are defined as in Theorem~\ref{thm:tucker}(\ref{thm:tucker:positive-certificate}).  Otherwise, a nose cycle with ratio $\Ratio(\M)$ combined with a cycle of $\Red(\M)$ form the negative certificate.  Of course, testing if $\Red$ is acyclic and finding a cycle in $\Red$ both cost $O(n)$ time.  When $\Red$ is acyclic, we can compute $\IDist{\SEP}$ in $O(n)$ time using $\Red$ so as to build $\Unit(\Circ, \Len)$.  Hence, the difficulty of the algorithm is on finding $\Red$.  

By definition, $\Red$ is obtained by removing the redundant edges of $\Syn$; this can be done in $O(n)$ time once the redundant edges of $\Syn$ are found.  In turn, recall that an edge $A_i \to A_j$ is redundant if and only if 
\begin{enumerate}[($\rm red_1$)]
  \item[\redref{1}] $\IDist{\MAG}(A_1, A_j) > \IDist{\MAG}(A_1, A_i) + \MAG(A_i \to A_j)$, or
  \item[\redref{2}] $\IDist{\MAG}(A_1, A_j) > \IDist{\MAG}(A_1, A_i) + \MAG(A_i \to A_j)$ and\\ 
                    $(\IDist{\MAG} \circ \IDist{\REM})(A_1, A_j) > (\IDist{\MAG} \circ \IDist{\REM})(A_1, A_i) + \REM(A_i \to A_j)$.
\end{enumerate}
Thus, to locate the redundant edges we should find a path $\W_i$ from $A_1$ to $A_i$ such that $\MAG(\W_i) = \IDist{\MAG}(A_1, A_i)$ and $\REM(\W_i) = (\IDist{\MAG} \circ \IDist{\REM})(A_1, A_i)$ for every $1 \leq i \leq n$.  By Lemma~\ref{lem:nose or hollow cycle}, $\W_i$ is either a nose or hollow cycle.  That is, $\W_i \in \{\W_{N,i}, \W_{H,i}\}$, where $\W_{N,i}$ is the nose cycle such that
\begin{enumerate}[(gn$_1$)]
  \item $\MAG(\W_{N, i}) = \max\{\MAG(\W) \mid \W \text{ is a nose path from } A_1 \text{ to } A_i\}$, and \label{def:gn-1} 
  \item $\REM(\W_{N,i}) = \max\{\REM(\W) \mid \W \text{ is a nose path from } A_1 \text{ to } A_i \text{ with } \MAG(\W) = \MAG(\W_{N,i})\}$, \label{def:gn-2}
\end{enumerate}
\newcommand{\gnref}[1]{\itemref{gn}{gn}{#1}}
while $\W_{H,i}$ is defined analogously by replacing noses with hollows.  The remainder of this section is devoted to the problems of finding $\W_{N,i}$ and $\W_{H,i}$.

Say that a nose (resp.\ hollow) walk $\W = B_1, \ldots, B_k$ is \emph{greedy} when there exists $B_i \in \W$ such that $B_j$ is greedy for every $1 \leq j \leq i$, while $B_i, \ldots, B_k$ is a step walk.  In other words, $\W$ is greedy when noses (resp.\ hollows) are preferred until some point in which only steps follow.  It turns out that $\W_{N,i}$ and $\W_{H,i}$ are greedy paths.  The proof of this fact is analogous to those of Lemmas \ref{lem:greedy nose cycle}~and~\ref{lem:greedy hollow cycle}, yet we include it for the sake of completeness.

\begin{lemma}\label{lem:greedy walks}
  For any nose (resp.\ hollow) walk $\W_N$ of $\Syn$ there exists a greedy nose (resp.\ hollow) walk $\W_N'$ of $\Syn$ joining the same vertices such that either $\MAG(\W_N) < \MAG(\W_N')$ or $\MAG(\W_N) = \MAG(\W_N')$ and $\REM(\W_N) \leq \REM(\W_N')$.
\end{lemma}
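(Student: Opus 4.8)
The plan is to adapt, from cycles to walks, the local-surgery argument used for Lemma~\ref{lem:greedy nose cycle} (and, in the hollow case, for Lemma~\ref{lem:greedy hollow cycle}); I describe the nose case, the hollow case being symmetric. The only genuine novelty over the cyclic setting is that a walk may ``run out'' before a greedy detour can close up, so the segment we rewrite is allowed to extend to the very end of the walk; everything else is the same computation.

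Concretely, I would iterate the following step while $\W_N = B_1,\dots,B_k$ is not greedy. Let $B_\ell$ be the first non-greedy vertex of $\W_N$; then $\Syn$ contains a nose $B_\ell\to B$ while the edge $B_\ell\to B_{\ell+1}$ of $\W_N$ is a step. Let $\W = B_\ell,\dots,B_m$ be the shortest subwalk of $\W_N$ starting at $B_\ell$ for which $B_m = B$, or $B_{m-1}\to B_m$ is a nose, or $m = k$; by minimality $B_\ell\to\dots\to B_{m-1}$ is a step walk, and it is a path (a step walk that wrapped all the way around the arcs would already contain $B$). Replace $\W$ inside $\W_N$ by the walk $\W'$ that first traverses the nose $B_\ell\to B$ and then the (possibly wrapping) step path from $B$ to $B_m$; since the steps of $\Syn$ form a single spanning directed cycle on the arcs, $\W'$ is well defined, has the same endpoints as $\W$, and makes $B_\ell$ greedy in the new walk.

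Since $\W_N$ and the rewritten walk agree outside the replaced segment, it suffices to prove that $\MAG(\W)\le\MAG(\W')$, and that $\REM(\W)\le\REM(\W')$ when the two $\MAG$'s are equal; and this is exactly the bookkeeping done for Lemma~\ref{lem:greedy nose cycle}. If $B$ does not occur in $\W$, then $B_\ell,B_{m-1},B,B_m$ appear in this cyclic order along the steps, and the heights of these four arcs fall into the same short list of admissible patterns tabulated for Lemma~\ref{lem:greedy nose cycle} — the unique external step $A_n\to A_1$ being handled as there — in each of which $\MAG(\W)=\MAG(\W')$ and $\REM(\W)=\REM(\W')$. If $B=B_m$ or $m=k$, then $\W$ is a step walk whose net height change forces it to contain a $1$-step or a $(-\Height)$-step, while $\W'$ is a single nose followed by steps; a termwise comparison via \eqref{eq:mag}--\eqref{eq:rem} then gives $\MAG(\W)\le\MAG(\W')$ and, in the case of equality, $\REM(\W)\le\REM(\W')$, exactly as in the $B=B_i$ case of Lemma~\ref{lem:greedy nose cycle}. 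Composing the surgeries yields the desired greedy nose walk $\W_N'$.

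The step I expect to be the real obstacle is not any of these finite height checks — the table is mechanical and literally the one from Lemma~\ref{lem:greedy nose cycle} — but the termination of the iteration: one must pin down a quantity that strictly improves at each surgery. The natural candidate is the length of the maximal greedy prefix of the current walk, which the surgery enlarges by at least the vertex $B_\ell$ and which is never shortened afterwards (earlier vertices are left untouched and stay greedy); the remaining point is to argue, as in Lemma~\ref{lem:greedy nose cycle}, that this prefix cannot grow indefinitely, i.e.\ that the process is eventually forced into an all-steps tail with no further non-greedy vertex.
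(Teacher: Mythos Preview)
Your approach is the same local-surgery argument the paper uses, and your identification of termination as the delicate point is well placed (the paper handles it by a double induction on $|\W_N|-p$ and $|\W_N|$, which is exactly ``the greedy prefix grows''). The substantive gap is your treatment of the case $m=k$.

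The paper never performs a surgery that runs off the end of the walk. Its induction is set up so that the non-base case is precisely ``the first non-greedy vertex $B_\ell$ appears \emph{before} the last nose of $\W_N$''; hence there is always a later nose in the walk, and the segment $\W$ terminates at that nose (or at $B$) without ever needing your $m=k$ option. In the complementary base case, no surgery is attempted. You instead allow $m=k$ and claim the bookkeeping still goes through ``exactly as in the $B=B_i$ case of Lemma~\ref{lem:greedy nose cycle}'', but that argument relies on the net height change of the step walk $\W$ equalling the jump of the nose $B_\ell\to B$, which is what forces a $1$-step or $(-\Height)$-step. When $m=k$ and $B_k\neq B$, the net height change from $B_\ell$ to $B_k$ is arbitrary. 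For a concrete failure, take $B_\ell$ and $B_k$ of the same height joined by $0$-steps: then $\MAG(\W)=\REM(\W)=0$, whereas the replacement $\W'$ (the nose $B_\ell\to B$ followed by the step path from $B$ back to $B_k$) must use at least one $1$-step or $(-\Height)$-step and has $\MAG(\W')<0$. So the inequality goes the wrong way and the iteration is not monotone.

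Two smaller mismatches. First, the hollow case is not literally symmetric: in the paper's Case~2 the subwalk $\W$ is defined to end at the next \emph{hollow} (not at $B$), and the proof explicitly handles the possibility that this subwalk revisits a vertex, excising the resulting cycle rather than replacing it; this is why the second induction parameter $|\W_N|$ is needed. Second, the table from Lemma~\ref{lem:greedy nose cycle} computes $\Ratio_1,\Ratio_2$, not $\MAG,\REM$; the admissible height patterns are the same, but the entries have to be recomputed for the new weighings, as the paper does.
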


\begin{proof}
  The proof is by induction on $|W_N| - p$ and $|W_N|$, where $p$ is the position of the first non-greedy vertex of $\W_N$.  The base case in which $p$ is greater than the position last nose (resp.\ hollow) of $\W_N$ is trivial. Suppose, then, that $B_1$ is the first non-greedy vertex of $\W_N$ and that $B_1$ appears before the last nose (resp.\ hollow) of $\W_N$.  Let $B_1 \to B$ the the nose (resp.\ hollow) from $B_1$ and consider the following cases.   

  \begin{description}
    \item[Case 1:] $\W_N$ is a nose walk.  Let $\W = B_1, \ldots, B_i$ be the shortest subpath of $\W_N$ such that either $B_{i-1} \to B_i$ is a nose or $B_i = B$, $\W'$ be the path formed by the nose $B_1 \to B$ followed by the step path from $B$ to $B_i$, and $\W_N'$ be the nose walk obtained from $\W_N$ by replacing $\W$ by $\W'$.  Clearly, the position of the first non-greedy vertex of $\W_N'$ is greater than $p$, thus, by induction, it suffices to show that $w(\W') \geq w(\W)$ for $w \in \{\MAG, \REM\}$ because $w(\W_N') = w(\W_N) - w(\W) + w(\W')$.  If $B_i = B$, then either $\W$ contains at least one $1$-step or $B_1 \to B_i$ is a $(-\Height)$-nose, thus $\MAG(\W') > \MAG(\W)$.  Otherwise, $B_1, B_{i-1}, B, B_{i}$ appear in this order in $\Syn$ when the steps are traversed from $B_1$.  As in Lemma~\ref{lem:greedy nose cycle}, it is not hard to see that $w(\W') = w(\W)$ when $B_1 \to B$ and $B_{i-1} \to B_i$ have equal jumps, while, as in Lemma~\ref{lem:greedy nose cycle}, only four cases remain otherwise.  All these cases are examined in the table below.

 \begin{center}
  \begin{tabular}{|c|c|c|c|c|c|c|c|}
    \hline
    $\Height(B_1)$ & $\Height(B_{i-1})$ & $\Height(B)$ & $\Height(B_{i})$ & $\MAG(\W)$ & $\MAG(\W')$ & $\REM(\W)$ & $\REM(\W')$ \\\hline
    $\Height-1$ & $\Height-1$ or $\Height$ & $\Height$   & $0$ & $-\Ratio$ & $-\Ratio$ & $-1$ & $-1$ \\
    $\Height-2$ & $\Height-1$              & $\Height-1$ & $0$ & $-1-\Ratio$ & $-1-\Ratio$ & $-1$ & $-1$  \\
    $\Height-1$ & $\Height$                & $0$         & $0$ & $-\Ratio$ & $-\Ratio$ & $-1$ & $-1$ \\
    $\Height$   & $0$                      & $0$         & $1$ & $-\Ratio$ & $-\Ratio$ & $-1$ & $-1$ \\\hline
  \end{tabular}
 \end{center}
 
    \item[Case 2:] $\W_N$ is a hollow walk.  Let $\W = B_1, \ldots, B_i$ be the shortest subpath of $\W_N$ such that $B_{i-1} \to B_i$ is a hollow.  If $B_i = B_j$ for some $1 \leq j < i$, then $\W_{ji} = B_j, \ldots, B_{i}$ is a cycle with exactly one $1$-step or $0$-hollow, thus $\MAG(\W_{ji}) < 0$.  So, the proof follows by induction on the hollow walk $\W_N' = \W_N \setminus \W_{ji}$ because $\MAG(\W_N') = \MAG(\W_j) + \MAG(\W_{ji})$ and the position of the first non-greedy hollow is at least $p$.  If $B_i \not\in \{B_1, \ldots, B_{i-1}\}$, then $B, B_{i}, B_1, B_{i-1}$ appear in this order in a traversal of the steps of $\Syn$ from $B$.  Let $\W'$ be the hollow path formed by the hollow $B_i \to B$ followed by the step path from $B$ to $B_i$ and observe that, as in Case~1, it suffices to prove that $w(\W') = w(\W)$ for $w \in \{\MAG, \REM\}$.  Moreover, both equalities hold when $B_1 \to B$ and $B_{i-1} \to B_{i}$ have equal jumps.  The equalities hold also when either $B_1 \to B$ or $B_{i-1} \to B_i$ is a $0$-hollow.  Indeed, if $B_1 \to B$ is a $0$-hollow, then $\Height(B) = \Height(B_{i}) = \Height(B_1)$ and $\Height(B_{i-1}) \neq \Height(B)$, while if $B_{i-1} \to B_i$ is a $0$-hollow, then $\Height(B_i) = \Height(B_1) = \Height(B_{i-1})$ and $\Height(B) \neq \Height(B_{i-1})$.  In both of these cases, $\MAG(\W) = \MAG(\W') = -1$ and $\REM(\W) = \REM(\W') = 0$.  Finally, when neither $B_1 \to B$ nor $B_{i-1} \to B_i$ are $0$-hollows and $B_1 \to B$ and $B_{i-1} \to B_i$ have different jumps, we are left with only six cases, as in the table below.

 \begin{center}
    \begin{tabular}{|c|c|c|c|c|c|c|c|}
      \hline
      $\Height(B_1)$ & $\Height(B_{i-1})$ & $\Height(B)$                 & $\Height(B_i)$ & $\MAG(\W)$   & $\MAG(\W')$  & $\REM(\W)$ & $\REM(\W')$ \\\hline
      $0$            & $1$                & $\Height$ or $\Height - 1$   & $0$            & $-1$        & $-1$        & $0$       & $0$        \\
      $0$            & $0$                & $\Height-1$                  & $\Height$      & $-1+\Ratio$ & $-1+\Ratio$ & $1$       & $1$        \\
      $\Height$      & $0$                & $\Height-1$                  & $\Height$      & $-1$        & $-1$        & $0$       & $0$        \\
      $\Height$      & $0$                & $\Height-1$                  & $\Height-1$    & $0$         & $0$         & $0$       & $0$        \\
      $\Height-1$    & $0$                & $\Height-2$                  & $\Height-1$    & $-1$        & $-1$        & $0$       & $0$        \\\hline
    \end{tabular}
  \end{center}
 \end{description} 
\end{proof}

Recall that our problem is to find the $\MAG$ and $\REM$ values for the path $\W_{N,i}$ satisfying \gnref{1}~and~\gnref{2}, for every $1 \leq i \leq n$.  We solve this problem in two phases.  Clearly, there exists a unique greedy nose path $\N$ beginning at $A_1$ that is maximal.  The first phase consist of traversing $\N = B_1, \ldots, B_k$ while $p(B_i) = (\MAG(B_1, \ldots, B_i), \REM(B_1, \ldots, B_i))$ is computed and stored for every $1 \leq i \leq k$.  In the second phase each step $A_{i-1} \to A_i$ of $\Syn$ is traversed while $q(A_i) = (\MAG(\W_{N,i}), \REM(\W_{N,i}))$ is computed and stored. By Lemma~\ref{lem:greedy walks}, $\W_{N,i}$ is a greedy nose path starting at $A_1$, thus $\W_{N,i}$ is equal to a subpath of $\N$ plus a (possibly empty) step path.  Consequently, there are only two possibilities for the last edge of $\W_{N,i}$ according to whether $A_i \in \N$ or not.  If $A_{i} \not\in \N$, then the last edge of $\W_{N,i}$ must be $A_{i-1} \to A_i$, thus $q(A_i) = q(A_{i-1}) + (\MAG(A_{i-1} \to A_i), \REM(A_{i-1} \to A_i))$.  Otherwise, the last edge could be $A_{i-1} \to A_i$ or the unique nose $A_j \to A_i$.  In the latter case $\W_{N,i}$ is a subpath of $\N$.  Thus, we can compute $\W_{N,i}$ by simply comparing the values $p(A_i)$ and $q(A_{i-1}) + (\MAG(A_{i-1} \to A_i), \REM(A_{i-1} \to A_i))$.  Note that both traversals cost $O(n)$ time.  The problem of finding $\W_{H,i}$ is analogous and it also costs $O(n)$ time.  

\begin{theorem}
  There is a unified certified algorithm that solves \Rep in $O(n)$ time.  
\end{theorem}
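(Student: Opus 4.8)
The plan is to assemble the ingredients of Sections~\ref{sec:kaplan and nussbaum}--\ref{sec:linear algorithm} into a single sweep over $\M$. The algorithm will build $\Syn(\M)$, compute $\Ratio(\M)$ together with a witnessing nose cycle, form the reduction $\Red(\M)$, and then decide whether $\Red$ is acyclic. By the equivalence (\ref{thm:tucker:equivalence}) $\Leftrightarrow$ (\ref{thm:tucker:negative-certificate}) $\Leftrightarrow$ (\ref{thm:tucker:positive-certificate}) of Theorem~\ref{thm:tucker}, acyclicity of $\Red$ decides whether $\M$ is equivalent to a UCA model; in the positive case Theorems~\ref{thm:separation constraints} and~\ref{thm:tucker}(\ref{thm:tucker:model}) supply an explicit integer $(\Circ,\Len)$-CA model, and in the negative case a cycle of $\Red$ together with the witnessing nose cycle is the forbidden structure. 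The only thing left to verify is that every phase runs in $O(n)$ time on integers of $O(\log n)$ bits.

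First I would build $\Syn(\M)$ and the heights $\Height(A_i)$ in $O(n)$ time, and compute $\Ratio = \Ratio_1/\Ratio_2$ and a nose cycle $\W_N$ with $\Ratio(\W_N) = \Ratio$ exactly as in Section~\ref{sec:kaplan and nussbaum}: in $\Syn_N$ every vertex has out-degree $1$, so all greedy nose cycles are traced in $O(n)$ time, and the maximum-ratio one is optimal by Lemma~\ref{lem:greedy nose cycle}. I would then fix $\Extra = 4n$, $\Len+1 = \Ratio_2\Extra^2$ and $\Circ = (\Len+1)(\Height+\Ratio)+\Extra$ as in Theorem~\ref{thm:tucker}(\ref{thm:tucker:positive-certificate}); since $\Ratio_2, \Height \le n$, these are integers of magnitude $O(n^4)$ and hence of $O(\log n)$ bits. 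Next, for every $1 \le i \le n$ I need the pair $(\IDist{\MAG}(A_1,A_i),\, (\IDist{\REM} \circ \IDist{\MAG})(A_1,A_i))$; by Lemma~\ref{lem:nose or hollow cycle} this value is realized by one of $\W_{N,i}$ (optimal under \gnref{1} then \gnref{2}) or $\W_{H,i}$, and by Lemma~\ref{lem:greedy walks} both of these walks are ``greedy then step''. So I would run the two-phase traversal described after Lemma~\ref{lem:greedy walks}: sweep the maximal greedy nose path $\N$ from $A_1$ recording $(\MAG,\REM)$ of every prefix, then sweep the steps $A_{i-1}\to A_i$ of $\Syn$, deriving $(\MAG(\W_{N,i}),\REM(\W_{N,i}))$ for each $i$ by comparing the extension of $A_{i-1}$'s value through that step with the prefix of $\N$ ending at $A_i$ (the latter only when $A_i \in \N$); the hollow case is symmetric. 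Representing a $\MAG$-value as $a + b\Ratio$ with $a,b \in \mathbb{Z}$ of magnitude $O(n)$, every comparison and addition costs $O(1)$, so this entire phase is $O(n)$.

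With these values available, each edge $A_i \to A_j$ of $\Syn$ can be tested against \redref{1}--\redref{2} by an $O(1)$ local computation, so removing the redundant edges yields $\Red$ in $O(n)$ time. I would then run a linear-time topological sort on $\Red$ (which has $O(n)$ vertices and edges). If it exposes a cycle $\W$, I output the pair $(\W,\W_N)$: by Theorem~\ref{thm:tucker}, $\M$ is equivalent to no UCA model, and $\W$ can be turned, via Lemma~\ref{lem:nose or hollow cycle} and the argument of (\ref{thm:tucker:cycles}) $\Rightarrow$ (\ref{thm:tucker:negative-certificate}), into a hollow cycle with nonnegative length factor, which together with $\W_N$ witnesses $\Ratio \ge \RATIO$. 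If $\Red$ is acyclic, I process its vertices in topological order to obtain $s(U_i) = \IDist{\SEP_{(\Circ,\Len)}}(\Red, A_1, A_i)$ for all $i$ in $O(n)$ time; by Theorem~\ref{thm:tucker}(\ref{thm:tucker:positive-certificate}) this equals $\Dist{\SEP_{(\Circ,\Len)}}(\Syn, A_1, A_i)$, so by Theorem~\ref{thm:separation constraints} the model $\Unit(\M, \Circ, \Len)$ with these beginning points is an integer $(\Circ,\Len)$-CA model equivalent to $\M$, which I output as the positive certificate. Since $\Circ, \Len = O(n^4)$, all arithmetic stays on $O(\log n)$-bit integers, so the total running time is $O(n)$.

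The hard part is the second phase: simultaneously computing $\W_{N,i}$ and $\W_{H,i}$ for all $i$ within a single linear sweep, instead of recomputing an optimal prefix separately for each $i$. This is exactly what Lemma~\ref{lem:greedy walks} buys us --- it forces every optimal prefix to follow noses (resp.\ hollows) greedily and then only steps --- combined with the observation that a vertex $A_i \notin \N$ can only be reached optimally by appending the step $A_{i-1}\to A_i$ to the optimal prefix of $A_{i-1}$; amortizing these step-extensions over the sweep is what keeps the computation linear and, in turn, makes the redundancy test of the reduction a constant-time local check.
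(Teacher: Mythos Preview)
Your proposal is correct and follows essentially the same approach as the paper: compute $\Ratio$ via greedy nose cycles (Section~\ref{sec:kaplan and nussbaum}), build $\Red$ by computing the pairs $(\MAG(\W_{N,i}),\REM(\W_{N,i}))$ and $(\MAG(\W_{H,i}),\REM(\W_{H,i}))$ with the two-phase greedy-then-step traversal justified by Lemmas~\ref{lem:nose or hollow cycle} and~\ref{lem:greedy walks}, test acyclicity of $\Red$, and then either output $\Unit(\Circ,\Len)$ via a topological sweep or output a cycle of $\Red$ together with the nose cycle $\W_N$ as the negative certificate. Your added remarks on representing $\MAG$-values as $a+b\Ratio$ and on the bit-size of $\Circ,\Len$ are consistent with the paper's framework and do not change the argument.
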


\subsection{Logspace construction of UCA models}
\label{sec:logspace algorithm}

In the full version of~\cite{KoblerKuhnertVerbitsky2012}, Kobler et al.\ ask whether it is possible to solve \Rep in (deterministic) logspace.  In this section we provide an affirmative answer to this question by showing that the algorithm of the previous section can be implemented so as to run in logspace.  Before doing so, we briefly discuss the logspace recognition of UCA graphs, for the sake of completeness.

Kobler et al.~\cite{KoblerKuhnertVerbitsky2012} show a logspace algorithm for the recognition of PCA graphs.  As a byproduct, their algorithm outputs a PCA model $\M$ with arcs $A_1 < \ldots < A_n$; we implement the algorithm by Kaplan and Nussbaum so as to run in logspace when $\M$ is given.  Let $\Syn_N$ be the subgraph of $\Syn$ obtained by removing all its hollows and all its steps that go from a vertex in which a nose starts.  All the vertices of $\Syn_N$ have out-degree $1$.  So, the nose ratio of each cycle $\W$ of $\Syn_N$ can be obtained in logspace by traversing $\W$ from each of its vertices.  By Lemma~\ref{lem:greedy nose cycle}, $\Ratio$ is the maximum among such ratios.  Then, taking into account that $\Syn_N$ can be easily computed in logspace from $\M$, we conclude that $\Ratio$ is obtainable in logspace.  An analogous algorithm can be used to compute the hollow ratio $\RATIO$ of $\M$ in logspace.  By Theorem~\ref{thm:tucker}, $\M$ is equivalent to a UCA model if and only if $\Ratio < \RATIO$.  The algorithm can output, also in logspace, the nose and hollow cycles with ratios $\Ratio$ and $\RATIO$, respectively.  These cycles provide a negative certificate that $\M$ is equivalent to no UCA models when $\Ratio \geq \RATIO$.

The logspace representation algorithm can be divided in two phases.  In the first phase, $\Red$ is build, while, in the second phase, the UCA model is obtained from a topological sort of its vertices.  Clearly, the problem of computing $\Red$ can be logspace reduced to querying which of the edges of $\Syn$ are redundant.  By \redref{1} and \redref{2}, the problem of testing if $A_i \to A_j$ is redundant is logspace reduced to that of finding $\IDist{\MAG}(A_1, A_i)$ and $(\IDist{\MAG} \circ \IDist{\REM})(A_1, A_i)$.  As stated in the previous section, this problem is reduced to that of computing $\MAG(\W_{N,i})$, $\REM(\W_{N,i})$, $\MAG(\W_{H,i})$, and $\REM(\W_{H,i})$ where $\W_{N,i}$ is the greedy nose path from $A_1$ to $A_i$ defined by \gnref{1}~and~\gnref{2}, while $\W_{H,i}$ is defined analogously.  By Lemma~\ref{lem:greedy nose cycle}, $\W_{N,i} = \N_{a,b}$ for some $a, b < n$, where $\N_{a,b}$ is the walk obtained by first traversing $a$ edges of the unique maximal greedy path $\N$ that begins at $A_1$, and then traversing $b$ steps.  So, by keeping the counters $a$ and $b$, we can compute the maximum among the values of $\MAG(\N_{a,b})$ and $\REM(\N_{a,b})$ for the paths $\N_{a,b}$ ending at $A_i$ with $a, b < n$.  Such a computation requires logspace, thus $\Red$ can be obtained in logspace.

Once $\Red$ is build, we could compute $\Unit(\Circ, \Len)$ by finding $\IDist{\SEP}(\Red, A_1, A_i)$ for every $1 \leq i \leq n$.  There is a major inconvenience with this approach: finding the longest path between two vertices of an acyclic digraph is a complete problem for the class of non-deterministic logspace problems.  To deal with this problem we could observe that $\Red$ is not only an acyclic digraph but also one with a rather particular structure.

\begin{theorem}\label{thm:toroidal}
  If $\M$ is a PCA model, then $\Syn$ is a toroidal digraph.
\end{theorem}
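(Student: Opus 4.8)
The plan is to construct an explicit crossing-free drawing of $\Syn$ on the torus, guided by the layered structure of Section~\ref{sec:synthetic graph:separation}. Two elementary consequences of the definition of $\Height$ get the layout started: $\Height(A_i)=0$ exactly when $s(A_i)<t(A_1)$, and $\Height$ is non-decreasing along $A_1<\ldots<A_n$; hence the vertex set splits into consecutive blocks $B_0,B_1,\ldots,B_\Height$ with $B_t=\{A_i\mid\Height(A_i)=t\}$. I would model the torus as $(\mathbb{R}/\Circ\mathbb{Z})\times(\mathbb{R}/(\Height+1)\mathbb{Z})$ and place $v(A_i)$ near the point $(s(A_i),\Height(A_i))$, so that the horizontal coordinate records the position of the beginning point around $C$ and the vertical coordinate records the height, read cyclically through the $\Height+1$ levels $0,\ldots,\Height$. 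Call the circle $\{x=0\}$, which sits just before $s(A_1)$, the \emph{vertical seam}, and the circle running between level $\Height$ and level $0$ the \emph{horizontal seam}. With these coordinates the step cycle $A_1\to\cdots\to A_n\to A_1$ becomes a $(1,1)$-curve of the torus: the $\Height$ internal block-boundary steps climb one level each, and the external step $A_n\to A_1$ crosses the horizontal seam once.

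First I would draw the internal edges inside the open rectangle $(0,\Circ)\times(0,\Height)$, so that none of them touches either seam. This is the classical crossing-free ``staircase'' layout of Mitas~\cite{Mitas1994}, which rests on the fact that, the model being proper and inclusion-free, the sources of any two noses (and similarly of any two hollows) occur along $C$ in the same order as their targets, and that noses and hollows occupy opposite sides of the step path. The point is that internal steps and noses are $0$- or $1$-jumps and internal hollows are $0$- or $(-1)$-jumps, so all of them keep the vertical coordinate inside $[0,\Height]$; and by definition an internal edge does not wrap horizontally. Hence the internal drawing stays clear of both seams.

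The remaining edges are the external ones --- the external step $A_n\to A_1$, the $(-\Height)$- and $(1-\Height)$-noses, and the $\Height$- and $(\Height-1)$-hollows --- and routing them is where I expect the real difficulty to lie. Each of them crosses the vertical seam by definition, and each \emph{must} also be routed across the horizontal seam, since the only alternative would force it through the $y$-band $(0,\Height)$ already occupied by the internal drawing. This is possible because every external edge has one endpoint in $B_0$ (level $0$, just above the horizontal seam) and its other endpoint in $B_\Height\cup B_{\Height-1}$ (just below it), so on the torus it spans only a short vertical distance. Passing to the universal cover and using that a vertex of $B_0$ has beginning point $<t(A_1)$ while a vertex of $B_\Height\cup B_{\Height-1}$ has beginning point $>t(A_1)$, I would route the entire nose family (the external noses together with the external step) so that its lifted curves lie in the horizontal band on the $t(A_1)$-side of the vertical seam, and the entire hollow family on the other side, which keeps the two families apart. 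Within a single family no two edges interleave: if $A_i<A_{i'}$ are the sources of two external noses, then properness gives $t(A_i)<t(A_{i'})$ and hence their $B_0$-targets satisfy $A_j\le A_{j'}$, so the two lifted arcs run in parallel and do not meet.

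A bounded case analysis then settles the interactions among $(-\Height)$-noses, $(1-\Height)$-noses and the external step, and dually among the hollows; the local compatibility at the vertices of $B_0$, $B_{\Height-1}$ and $B_\Height$ that carry both internal and external edges (at each such vertex the internal edges leave into the rectangle and the external ones toward the seams, so the rotation systems agree); and the small-height degeneracies, notably $\Height=1$, where a $(\Height-1)$-hollow would be an external $0$-hollow and so cannot occur, and where one must take care which of the two gaps between levels $0$ and $1$ serves as the horizontal seam. Gluing the internal drawing to the external routing yields the required toroidal embedding.
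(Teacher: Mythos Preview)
Your overall strategy—place the vertices by $(\text{position},\Height)$, draw a planar ``inner'' part in a fundamental rectangle, and route the remaining edges across the seams—is the paper's strategy as well. The gap is in what you take to be the planar inner part.

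You invoke Mitas' staircase layout to conclude that all \emph{internal} edges can be drawn crossing-free in the rectangle, but Mitas' planarity statement (Theorem~\ref{thm:canonical drawing}) applies only to $\RedMitas$, which (i) uses the column function $\Column$, not $s(\cdot)$, as the horizontal coordinate, and (ii) explicitly \emph{excludes} the internal $1$-steps and $0$-hollows. With your $s$-coordinate and these edges included, the drawing is not crossing-free. In the PIG model with arcs $(-12,-8)$, $(-10,-5)$, $(0,5)$, $(1,7)$, $(2,8)$, $(6,11)$, $(7.5,12)$ the heights are $0,0,1,1,1,2,2$; the $1$-nose $A_4\to A_7$ runs from $(1,1)$ to $(7.5,2)$ while the $1$-step $A_5\to A_6$ runs from $(2,1)$ to $(6,2)$. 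The nose starts to the left of the step on the line $y=1$ and ends to its right on $y=2$, so any two curves realising them inside the band $1\le y\le 2$ must cross. (The $0$-hollows are just as bad: drawn at a fixed height they are leftward horizontal segments that pass through the intermediate vertices on the step path.)

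The paper's proof (Corollary~\ref{cor:toroidal}) handles exactly this by treating $1$-steps and $0$-hollows as a \emph{second} family of wrapping edges, distinct from the external ones: it lays the canonical $\Column$-drawing of $\RedMitas$ in the rectangle, then sends the $1$-steps and $0$-hollows through the east seam, the $\Height$-hollows and $(-\Height)$-noses through the north seam, and the $(-\Height)$-step, $(\Height{-}1)$-hollows and $(1{-}\Height)$-noses through both. With the $\Column$-coordinate this is natural, since a $1$-step goes from the largest column at its height to a small column at the next height; with your $s$-coordinate there is no seam through which $1$-steps and $0$-hollows wrap, so repairing the argument essentially forces you back to the paper's coordinate and its three-way routing.
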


\begin{proof}
  As the theorem is implied by~\cite{Mitas1994}, we defer its proof to Section~\ref{sec:minimal uig} (see Corollary~\ref{cor:toroidal}).
\end{proof}

Toroidal acyclic digraphs are much simpler than general acyclic digraphs.  Yet,  up to this date, the best algorithms to compute their longest paths run in unambiguous logspace~\cite{LimayeMahajanNimbhorkarCJTCS2010}.  For this reason, the UCA model computed in the second phase is a variation of $\Unit(\Circ, \Len)$.  The key idea is to observe that the reachability problem for toroidal digraphs that have a unique vertex with in-degree $0$ can be solved in logspace~\cite{StoleeVinodchandran2012}.  That is, for $A_i, A_j \in V(\Red)$, the \emph{reachability algorithm} in~\cite{StoleeVinodchandran2012} outputs YES when there is a path from $A_i$ to $A_j$ in $\Red$.  Then, we can compute $\REACH(A_j) = |\{A_i \in V(\Red) \mid \text{there is a path from $A_i$ to $A_j$ in $\Red$}\}|$ in logspace for any given $A_j \in V(\Red)$.  The representation algorithm takes advantage of this fact by replacing $\CONST$ with the easier-to-find $\REACH$.  That is, the constructed UCA model $\Unit_{\REACH}$ is the $(\Circ, \Len)$-CA model with arcs $U_1, \ldots, U_n$ such that $\Circ = (\Len+1)(\Height + \Ratio) + \Extra$ for $\Extra = 4n$, $\Len+1 = \Ratio_2\Extra^2$, and 
\begin{align}
 s(U_j) = (\Len+1)(\Height(A_j) + \IDist{\MAG}(A_1, A_j)) + \Extra(\IDist{\REM} \circ \IDist{\MAG})(A_1, A_j) + 2\REACH(A_j) \label{eq:unit-reach}
\end{align}
for every $1 \leq j \leq n$.  Here, as in Theorem~\ref{thm:tucker} (\ref{thm:tucker:positive-certificate}), $\Ratio = \Ratio_1/\Ratio_2$, thus $\Circ$, $\Len$, and $s(U_j)$ are integers. By the previous discussion, $\Unit_\REACH$ is obtainable in logspace.  The fact that $\Unit_\REACH$ is equivalent to $\M$ follows from the next theorem.

\begin{theorem}
  Let $\M$ be a PCA model.  Then, $\Red$ is acyclic if and only if $\Unit_{\REACH}$ is an integer UCA model equivalent to $\M$.
\end{theorem}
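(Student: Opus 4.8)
The plan is to derive the statement from Theorems~\ref{thm:separation constraints} and~\ref{thm:tucker} together with the decomposition~\eqref{eq:sep} of $\SEP$, by regarding $\Unit_{\REACH}$ as a bounded perturbation of the canonical model $\Unit(\M,\Circ,\Len)$. The backward implication is immediate: if $\Unit_{\REACH}$ is a UCA model equivalent to $\M$, then $\M$ admits a UCA model, and hence $\Red$ is acyclic by the equivalence $(\ref{thm:tucker:equivalence})\Leftrightarrow(\ref{thm:tucker:negative-certificate})$ of Theorem~\ref{thm:tucker}; the particular form of $\Unit_{\REACH}$ plays no role here.

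Assume now that $\Red$ is acyclic. Integrality of $\Unit_{\REACH}$ is routine: $\Circ$ and $\Len$ are integers because $\Len+1=\Ratio_2\Extra^2$ and $\Height+\Ratio=(\Height\Ratio_2+\Ratio_1)/\Ratio_2$, and each $s(U_j)$ in~\eqref{eq:unit-reach} is an integer since $\IDist{\MAG}(A_1,A_j)$ is a multiple of $1/\Ratio_2$ by~\eqref{eq:mag} (so that $(\Len+1)\IDist{\MAG}(A_1,A_j)=\Ratio_2\Extra^2\,\IDist{\MAG}(A_1,A_j)\in\mathbb{Z}$), while $(\IDist{\REM}\circ\IDist{\MAG})(A_1,A_j)$ and $\REACH(A_j)$ are integers; hence every $t(U_j)$ is an integer too. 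For the equivalence with $\M$, let $\Unit'=\Unit(\M,\Circ,\Len)$, which by the implication $(\ref{thm:tucker:negative-certificate})\Rightarrow(\ref{thm:tucker:model})$ of Theorem~\ref{thm:tucker} is an integer $(\Circ,\Len)$-CA model equivalent to $\M$ with beginning points $s'(U_j)=\Dist{\SEP}(\Syn,A_1,A_j)=\IDist{\SEP}(\Red,A_1,A_j)$. Fix a longest (hence simple) path $\W_j$ of $\Red$ from $A_1$ to $A_j$ realizing $s'(U_j)$; since no edge of $\W_j$ is redundant, an induction along $\W_j$ through the failure of \redref{1} and \redref{2} yields $\MAG(\W_j)=\IDist{\MAG}(A_1,A_j)$ and $\REM(\W_j)=(\IDist{\REM}\circ\IDist{\MAG})(A_1,A_j)$. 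Substituting this into~\eqref{eq:sep} and using $\Height(A_1)=0$ isolates the perturbation,
\[
 s(U_j)-s'(U_j)=2\REACH(A_j)-\CONST(\W_j),
\]
which is nonnegative (indeed $\ge 2$, as $2\REACH(A_j)\ge 2(|\W_j|+1)$ while $\CONST(\W_j)\le 2|\W_j|$), bounded in absolute value by $2n$, and nondecreasing along the edges of $\Red$, because $\REACH$ strictly increases along every edge of the acyclic digraph $\Red$.

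To finish I would re-run the proof of $(\ref{thm:separation constraints:cycles})\Rightarrow(\ref{thm:separation constraints:model})$ of Theorem~\ref{thm:separation constraints}: it suffices to check its conditions (a), (b), (c) for $\Unit_{\REACH}$, that is, $s(U_j)\ge s(U_i)+\SEP_{(\Circ,\Len)}(A_i\to A_j)$ for every step, hollow and nose $A_i\to A_j$ of $\Syn$ (with the usual modular reading of the external edges), together with the absence of coincident endpoints, which follows from the same estimates. Expanding $s(U_j)-s(U_i)$ by~\eqref{eq:unit-reach} and comparing with~\eqref{eq:sep}, one splits into the two cases of the proof of $(\ref{thm:tucker:negative-certificate})\Rightarrow(\ref{thm:tucker:positive-certificate})$. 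If $A_i\to A_j$ survives in $\Red$, then $\IDist{\MAG}$ and $\IDist{\REM}\circ\IDist{\MAG}$ increase along it by exactly $\MAG(A_i\to A_j)$ and $\REM(A_i\to A_j)$, so the $(\Len+1)$- and $\Extra$-weighted parts of $s(U_j)-s(U_i)$ reconstruct $\SEP(A_i\to A_j)-\CONST(A_i\to A_j)$ exactly; as $\REACH(A_j)\ge\REACH(A_i)+1$ and $\CONST(A_i\to A_j)\le 2$, the residual $2\bigl(\REACH(A_j)-\REACH(A_i)\bigr)-\CONST(A_i\to A_j)$ is nonnegative and makes up the difference. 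If $A_i\to A_j$ is redundant, then either \redref{1} holds, forcing an excess of at least $(\Len+1)/\Ratio_2=\Extra^2$ in the $(\Len+1)$-weighted part, or \redref{1} fails and \redref{2} holds, forcing an excess of at least $\Extra$ in the $\Extra$-weighted part; with $\Extra=4n$ these excesses dominate the remaining error, which is $O(n^2)$ in the first case and $O(n)$ in the second. Having verified (a), (b), (c), Theorem~\ref{thm:separation constraints} shows $\Unit_{\REACH}$ is a $(\Circ,\Len,\Thres',\BegDist')$-CA model equivalent to $\M$ for suitable $\Thres',\BegDist'>0$, i.e.\ an integer UCA model equivalent to $\M$.

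The main obstacle I anticipate is exactly this case analysis, which certifies that replacing the bookkeeping term $\CONST$ by the easier-to-compute $\REACH$ in~\eqref{eq:unit-reach} does not disturb the cyclic order of the $2n$ endpoints. The substitution is safe for two complementary reasons wired into the constants of Theorem~\ref{thm:tucker}(\ref{thm:tucker:positive-certificate}): off $\Red$, the scalings $\Len+1=\Ratio_2\Extra^2$ and $\Extra=4n$ separate distinct values of the length and extra factors by margins of order $\Extra^2$ and $\Extra$, which comfortably absorb the $O(n)$ discrepancy $2\REACH-\CONST$; along the edges of $\Red$ these factors leave no margin, but there $\REACH$ is itself strictly increasing and so takes over the separating role formerly played by $\CONST$.
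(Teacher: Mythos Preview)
Your proposal is correct and follows essentially the same approach as the paper's own proof. Both arguments verify the separation inequalities $s(U_j)\ge s(U_i)+\SEP(A_i\to A_j)$ edge by edge, splitting into the cases ``$A_i\to A_j$ lies in $\Red$'' (where the strict increase of $\REACH$ supplies the missing $\CONST$ term) and ``$A_i\to A_j$ is redundant'' (where the $(\Len+1)$- or $\Extra$-weighted excess inherited from \redref{1}/\redref{2} absorbs the bounded error), and both reduce the converse to Theorem~\ref{thm:tucker}. Your additional detour through the perturbation $s(U_j)-s'(U_j)=2\REACH(A_j)-\CONST(\W_j)$ is a pleasant conceptual gloss but is not actually used in your final verification; the paper proceeds directly to the edge-wise estimate and then dispatches the ten edge types via a small table, which you might find cleaner than invoking~\eqref{eq:sep} abstractly.
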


\begin{proof}
  Suppose first that $\Red$ is acyclic and let $U_1, \ldots, U_n$ be the arcs of $\Unit_\REACH$ as in its definition.  Then, it suffices to show that:
  \begin{enumerate}[(a)]
    \item $s(U_j) \geq s(U_i) + \Len+1 - \Circ q$ for every nose $U_i \to U_j$ of $\Syn$,
    \item $s(U_j) \geq s(U_i) - \Len+1 + \Circ q$ for every hollow $U_i \to U_j$ of $\Syn$, and
    \item $s(U_j) \geq s(U_i) + 1 - \Circ q$ for every step $U_i \to U_j$.
  \end{enumerate}
  where $q \in \{0,1\}$ equals $0$ if and only if $A_i \to A_j$ is internal.  Take any edge $A_i \to A_j$ and, for the sake of notation, let:
  \begin{itemize}
    \item $\Delta x(i, j) = x(A_j) - x(A_i)$ for $x \in \{\Height, \REACH\}$, 
    \item $\Delta \MAG(i, j) = \IDist{\MAG}(A_1, A_j) - \IDist{\MAG}(A_1, A_i) - \MAG(A_i \to A_j)$, and
    \item $\Delta \REM(i, j) = (\IDist{\REM} \circ \IDist{\MAG})(A_1, A_j) - (\IDist{\REM} \circ \IDist{\MAG})(A_1, A_i) - \REM(A_i \to A_j)$.
  \end{itemize}
  By definition~\eqref{eq:unit-reach},
  \begin{align*}
    s(U_j) =& (\Len+1)(\Height(A_j) + \IDist{\MAG}(A_1, A_j)) + \Extra(\IDist{\REM} \circ \IDist{\MAG})(A_1, A_j) + 2\REACH(A_j) \\ 
           =& (\Len+1)(\Height(A_i) + \Delta\Height(i, j) + \IDist{\MAG}(A_1, A_i) + \MAG(A_i \to A_j) + \Delta{\MAG}(i, j)) +\\
            & \Extra((\IDist{\REM} \circ \IDist{\MAG})(A_1, A_i) + \REM(A_i \to A_j) + \Delta \REM(i, j)) +\\
            & 2\REACH(A_i) + 2\Delta\REACH(i, j) \\
           =& s(U_i) + (\Len+1)(\Delta\Height(i, j) + \MAG(A_i \to A_j)) + \Extra(\REM(A_i \to A_j)) + \varepsilon \tag{i} \label{eq:unit-reach-i}    
  \end{align*}
  where $\varepsilon = (\Len+1)\Delta\MAG(i, j) + \Extra \Delta\REM(i, j) + 2\Delta\REACH(i, j)$.
  
  Note that $\varepsilon \geq 2$.  Indeed, if $A_i \to A_j$ is redundant, then either $\Delta\MAG(i, j) > 0$ or $\Delta\MAG(i, j) = 0$ and $\Delta\REM(i, j) > 0$; thus $\varepsilon \geq 2$ as in Theorem~\ref{thm:tucker} (\ref{thm:tucker:negative-certificate}) $\Rightarrow$ (\ref{thm:tucker:positive-certificate}).  If $A_i \to A_j$ is not redundant, then $A_i \to A_j$ is an edge of $\Red$ and $\Delta\REACH(i, j) > 0$, while $\Delta\MAG(i, j) = \Delta\REM(i, j) = 0$.  Consequently, $\varepsilon \geq 2$ regardless of whether $A_i \to A_j$ is redundant or not.  Then, (a)--(c) follow by inspection, considering the 10 possible values for the jump of $A_i \to A_j$. For the sake of completeness, the table below sums up all these cases; recall that $\Circ = \Height(\Len+1) + \Ratio(\Len+1) + \Extra$.
  
  \begin{center}
    \begin{tabular}{|l|c|c|c|c|c|}
      \hline
      Type $A_i \to A_j$   & $q$ & $\Delta\Height(A_i, A_j)$ & $\MAG(A_i \to A_j)$ & $\REM(A_i \to A_j)$ & ${\rm \eqref{eq:unit-reach-i}} - \varepsilon$ \\\hline
      $1$-nose             & $0$ & $1$                       & $0$                 & $0$                 & $s(U_i) + \Len+1$ \\
      $(1-\Height)$-nose   & $1$ & $-\Height+1$              & $-\Ratio$           & $-1$                & $s(U_i) + \Len+1 - \Circ$ \\
      $(-\Height)$-nose    & $1$ & $-\Height$                & $1 - \Ratio$        & $-1$                & $s(U_i) + \Len+1 - \Circ$ \\
      $(-1)$-hollow        & $0$ & $-1$                      & $0$                 & $0$                 & $s(U_i) - \Len-1$ \\
      $0$-hollow           & $0$ & $0$                       & $-1$                & $0$                 & $s(U_i) - \Len-1$ \\
      $(\Height-1)$-hollow & $1$ & $\Height-1$               & $\Ratio$            & $1$                 & $s(U_i) - \Len-1 + \Circ$ \\
      $\Height$-hollow     & $1$ & $\Height$                 & $\Ratio-1$          & $1$                 & $s(U_i) - \Len-1 + \Circ$ \\    
      $0$-step             & $0$ & $0$                       & $0$                 & $0$                 & $0$ \\
      $1$-step             & $0$ & $1$                       & $-1$                & $0$                 & $0$ \\
      $(-\Height)$-step    & $1$ & $-\Height$                & $-\Ratio$           & $-1$                & $s(U_i) - \Circ$ \\\hline
    \end{tabular}
  \end{center}

  The converse follow from Theorem~\ref{thm:tucker}  (\ref{thm:tucker:equivalence}) $\Rightarrow$ (\ref{thm:tucker:negative-certificate}).
\end{proof}

\section{\texorpdfstring{$(a,b)$}{({\it a}, {\it b})}-independents and \texorpdfstring{$(x,y)$}{({\it x}, {\it y})}-circuits}
\label{sec:circuits and independents}

As stated, our algorithm outputs two cycles when $\M$ is not equivalent to a UCA model: a nose cycle $\W_N$ with ratio $\Ratio(\M)$ and a cycle $\W_H$ of $\Red$.  
As in the proofs of implications (\ref{thm:tucker:hollow-cycles}) $\Leftrightarrow$ (\ref{thm:tucker:cycles}) $\Leftrightarrow$ (\ref{thm:tucker:negative-certificate}),  $\W_H$ is a hollow cycle with a nonnegative length factor.  
Moreover, as in implication (\ref{thm:tucker:bound}) $\Leftrightarrow$ (\ref{thm:tucker:hollow-cycles}), $\RATIO(\W_H) \leq \Ratio(\W_N)$.
Note that, in principle, this certificate needs not be equal to the one in Section~\ref{sec:kaplan and nussbaum}, because $\W_H$ needs not be the hollow cycle with minimum ratio.  
Nevertheless, this certificate is somehow analogous to the one provided by the algorithm by Kaplan and Nussbaum.

Rigorously speaking, the certificate of Section~\ref{sec:kaplan and nussbaum} is neither equal to the one given by the algorithm by Kaplan and Nussbaum.  The former is a pair of nose and hollow cycles while the latter is a pair of $(a,b)$-independent plus $(x,y)$-circuit.  Nose cycles can contain more vertices than the corresponding $(a,b)$-independents while hollow cycles can contain more vertices than the corresponding $(x,y)$-circuits.  These added vertices are, nevertheless, redundant and can be eliminated from the certificate so as to obtain a minimal forbidden induced submodel as the negative certificate.  The purpose of this section is to describe the equivalence between nose (resp.\ hollow) cycles and $(a,b)$-independents (resp.\ $(x,y)$-circuits) and how to transform one into the other and vice versa.  We begin describing what are the $(a,b)$-independents and $(x,y)$-circuits.

For two arcs $A_i, A_j$ of a PCA model $\M$, we define the \emph{$ss$ arc of $A_i, A_j$} to be the arc $(s(A_i), s(A_j))$.  
For a sequence of arcs $\A = B_1, \ldots, B_k$, the \emph{$ss$ traversal of $\A$} is the family of arcs $\T$ that contains the $ss$ arc of $B_i, B_{i+1}$ for every $1 \leq i \leq k$ (where $B_1 = B_{k+1}$).  
The number of \emph{turns} of $\T$ is the number of its arcs that contain the point $0$ of $C(\M)$.  
In simple terms, the $ss$ traversal of $\A$ is obtained by traversing $C(\M)$ from $s(B_1)$ to $s(B_2)$ to \ldots to $s(B_k)$ to $s(B_1)$, while its number of turns is the number of complete loops to the circle in such a traversal.

An $(a,b)$-independent of a PCA model $\M$ is a sequence of arcs $\A = B_1, \ldots, B_a$ such that $s(B_{i+1}) \not \in B_i$ for every $1 \leq i \leq a$ and whose $ss$ traversal takes $b$ turns.  Similarly, an $(x,y)$-circuit is a sequence of arcs $B_1, \ldots, B_x$ such that $s(B_{i+1}) \in B_i$ for every $1 \leq i \leq x$ and whose $ss$ traversal takes $y$ turns.  Note that $x > 2y$ as no pair of arcs of $\M$ cover the circle.  An $(a,b)$-independent is \emph{maximal} when $a/b$ is maximum and $a,b$ are relative primes, while an $(x,y)$-circuit is \emph{minimal} when $x/y$ is minimum and $x,y$ are relative primes.  As we shall shortly see, statement (\ref{thm:tucker:equivalence}) $\Leftrightarrow$ (\ref{thm:tucker:bound}) of Theorem~\ref{thm:tucker} is equivalent to the following theorem by Tucker.

\begin{theorem}[\cite{TuckerDM1974}]
  A PCA model $\M$ is equivalent to an UCA model if and only if $a/b < x/y$ for every maximal $(a,b)$-independent and every minimal $(x,y)$-circuit.
\end{theorem}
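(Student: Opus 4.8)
The plan is to derive Tucker's theorem from the equivalence (\ref{thm:tucker:equivalence}) $\Leftrightarrow$ (\ref{thm:tucker:bound}) of Theorem~\ref{thm:tucker}; for this it suffices to establish the two identities
\begin{equation*}
  \Ratio(\M) + \Height(\M) = \max\{a/b : (a,b)\text{-independent of }\M\}
  \qquad\text{and}\qquad
  \RATIO(\M) + \Height(\M) = \min\{x/y : (x,y)\text{-circuit of }\M\},
\end{equation*}
the maximum and minimum being over all such objects (equivalently, attained at the maximal $(a,b)$-independent and the minimal $(x,y)$-circuit, since the other ones have no larger, resp.\ no smaller, ratio). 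Once these hold, $\Ratio(\M) < \RATIO(\M)$ is equivalent to $\max\{a/b\} < \min\{x/y\}$, i.e.\ to $a/b < x/y$ for every maximal $(a,b)$-independent and every minimal $(x,y)$-circuit, and Theorem~\ref{thm:tucker} finishes the argument.

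For the first identity I would set up a correspondence between nose cycles of $\Syn$ and $(a,b)$-independents. Given a nose cycle $\W_N = C_1, \dots, C_k$, let $B_1, \dots, B_a$ be the subsequence of arcs $C_t$ at which a nose of $\W_N$ starts, taken in the cyclic order in which $\W_N$ visits them. Between two consecutive $B_i, B_{i+1}$ the cycle consists of one nose followed by a (possibly empty) path of steps; since a nose leaving $B_i$ reaches the first beginning point after $t(B_i)$ and steps advance the beginning point monotonically without passing $s(B_i)$, we get $s(B_{i+1}) \notin B_i$, so $B_1, \dots, B_a$ is an $(a,b)$-independent. Its number $b$ of turns equals the number of external edges of $\W_N$: each ``$B_i \to B_{i+1}$'' segment wraps past $0$ exactly as often as it contains external edges, and these cannot pile up inside one segment because every external edge of a nose walk (a $(-\Height)$- or $(1-\Height)$-nose, or the external step $A_n \to A_1$) wraps forward. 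Hence $b = \Noses_{-\Height}(\W_N) + \Noses_{1-\Height}(\W_N) + \Steps_{-\Height}(\W_N)$, the denominator of $\Ratio(\W_N)$. Substituting into $a = \Noses_1 + \Noses_{-\Height} + \Noses_{1-\Height}$ the value of $\Noses_1$ obtained from the purely combinatorial consequence of Equation~\eqref{eq:walk jump} for cycles (the same step that yields \eqref{eq:nose ratio}) gives, after identical algebra, $a = \Height(\M)\,b + \Noses_{-\Height}(\W_N) - \Steps_1(\W_N)$, i.e.\ $a/b = \Height(\M) + \Ratio(\W_N)$. Conversely, an $(a,b)$-independent $B_1, \dots, B_a$ becomes a nose cycle by replacing each pair $B_i, B_{i+1}$ with the nose leaving $B_i$ followed by the unique step path from its target to $B_{i+1}$ (legal precisely because $s(B_{i+1}) \notin B_i$), and the same counting preserves the ratio. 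Taking maxima over both sides yields the first identity; the only degenerate object, the all-steps cycle, has nonpositive ratio and no starting nose, which is harmless since $\M$ is non-trivial (so $\Height(\M) \geq 1$, $\Syn$ contains a nose, and an $(a,b)$-independent exists).

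The second identity is the step I expect to be the main obstacle. A hollow $A_i \to A_j$ of $\Syn$ encodes the relation $s(A_i) \in A_j$, which is the \emph{reverse} of the relation $s(B_{i+1}) \in B_i$ defining a circuit, so an $(x,y)$-circuit corresponds to a hollow cycle traversing its arcs in reverse order; and, unlike the nose case, a single ``$B_{i+1} \to B_i$'' segment can contain one external hollow (wrapping backward) together with a traversal of the external step $A_n \to A_1$ (wrapping forward), so the wrap count equals not the number of external edges but their signed count. Carrying this bookkeeping through, reversing the circuit $B_1, \dots, B_x$ and inserting the forced step paths produces a hollow cycle $\W_H$ with $x$ hollows whose net number of turns is $y = \Hollows_\Height(\W_H) + \Hollows_{\Height-1}(\W_H) - \Steps_{-\Height}(\W_H)$, exactly the denominator of $\RATIO(\W_H)$ (and the reason hollow walks are defined only when $\Hollows_h + \Hollows_{h-1} \geq \Steps_{-h}$). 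Substituting the value of $\Hollows_{-1}$ from \eqref{eq:walk jump} into $x = \Hollows_0 + \Hollows_{-1} + \Hollows_\Height + \Hollows_{\Height-1}$ then gives $x = \Height(\M)\,y + \Hollows_0(\W_H) + \Hollows_\Height(\W_H) + \Steps_1(\W_H)$, i.e.\ $x/y = \Height(\M) + \RATIO(\W_H)$, with the inequality $x > 2y$ matching the fact that no two arcs of $\M$ cover the circle. The reverse map (read off, reversed, the arcs at which the hollows of a hollow cycle start, then check $s(B_{i+1}) \in B_i$ and ratio preservation) is analogous. Taking minima over both sides gives the second identity, and the theorem follows. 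The case analysis needed here is of the same flavor as, but more delicate than, the tables in Lemmas~\ref{lem:greedy nose cycle}, \ref{lem:greedy hollow cycle}, and~\ref{lem:greedy walks}, so I would organize it the same way. Finally, I would remark that deleting the step-only detours from these cycles is exactly what turns the extracted object into a \emph{maximal} $(a,b)$-independent, resp.\ a \emph{minimal} $(x,y)$-circuit, although only the existence of the two extremal identities is needed for the statement.
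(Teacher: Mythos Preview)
Your approach is essentially the paper's: Section~\ref{sec:circuits and independents} establishes precisely the two identities $a/b = \Height + \Ratio(\W_N)$ and $x/y = \Height + \RATIO(\W_H)$ via the same nose/hollow correspondence, and then invokes Theorem~\ref{thm:tucker}~(\ref{thm:tucker:equivalence})$\Leftrightarrow$(\ref{thm:tucker:bound}).

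One minor gap to fix: your converse map ``the nose leaving $B_i$'' presupposes that every $B_i$ is the source of a nose, which fails whenever $t(B_i)$ is immediately followed by another ending point. The paper sidesteps this by taking the \emph{targets} of the noses as the independent and first passing to \emph{standard} $(a,b)$-independents (those in which each $s(B_i)$ is immediately preceded by an ending point), after observing that any $(a,b)$-independent can be shifted to a standard one with the same $(a,b)$. The hollow case is treated analogously (standard $(x,y)$-circuits) and is no more delicate than the nose case; the reversal and the signed turn count $y = \Hollows_\Height + \Hollows_{\Height-1} - \Steps_{-\Height}$ you anticipate are exactly what the paper records.
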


Say that an $(a,b)$-independent $\A = B_1, \ldots, B_a$ is \emph{standard} when $s(B_i)$ is immediately preceded by an ending point in $\M$, for every $1 \leq i \leq a$.  Note that if $s(B_i)$ is preceded by the beginning point of an arc $A$, then $B_1, \ldots, B_{i-1}, A, B_{i+1}, \ldots, B_{a}$ is also an $(a, b)$-independent of $\M$.  Consequently, $\M$ has an $(a,b)$-independent if and only if it has a standard $(a, b)$-independent.

There is a one-to-one correspondence between the standard $(a,b)$-independents of $\M$ and the nose circuits of $\Syn$, as follows.  Let $\A = B_1, \ldots, B_a$ be a standard $(a, b)$-independent and $\W_i$ be the step path of $\Syn$ that goes from $B_i$ to $B_i'$, where $B_i'$ is the arc whose ending point immediately precedes $s(B_{i+1})$.  Clearly, $B_i' \to B_{i+1}$ is a nose of $\Syn$, thus $\W(\A) = \W_1, \W_2, \ldots, \W_a$ is a nose circuit of $\Syn$.  Conversely, if $\W$ is a nose circuit, and $B_1' \to B_1, \ldots, B_a' \to B_a$ are its noses, then $\A(\W) = B_1, \ldots, B_a$ is an standard $(a, b)$-independent for some $b$.  It is not hard to see that $\A(\W(\A)) = \A$ and $\W(\A(\W)) = \W$, thus the correspondence is one-to-one.  

Observe that the number of turns $b$ in the $ss$ traversal of $\A$ is precisely the number of external noses and steps of $\W = \W(\A)$.  In other words, 
\[b = \Noses_{-\Height} + \Noses_{1-\Height} + \Steps_{-\Height}.\]  
Similarly, the number $a$ of arcs of $\A$ equals the number of noses of $\W$; by \eqref{eq:walk jump}, 
\[a = \Height(\Noses_{-\Height} + \Noses_{1-\Height} + \Steps_{-\Height}) + \Noses_{-\Height} - \Steps_1 =  \Height b + \Noses_{-\Height} - \Steps_1.\]
Hence, $a/b = \Height + \Ratio(\W)$.

A similar analysis holds for $(x, y)$-circuits.  Say that an $(x, y)$-circuit $\A = B_1, \ldots, A_x$ is \emph{standard} when $s(B_i)$ immediately precedes an ending point $t(B_i')$ in $\M$.  Note that $\M$ contains an $(x,y)$-circuit if and only if it contains an standard $(x,y)$-circuit; in such circuit, $B_i \to B_i'$ is a hollow of $\Syn$.  Then, $\A$ is in a one-to-one correspondence with the hollow circuit $\W = \W(\A)$ that goes through $\W_1, \W_2, \ldots, \W_x$ where each $\W_i$ is the step path going from $B_{i}'$ to $B_{i+1}$.  As before, the number of turns in the $ss$ traversal of $\A$ is the number of external hollows minus the number of external steps, i.e.,
\[y = \Hollows_\Height + \Hollows_{\Height-1} - \Steps_{-\Height},\] 
while the number $x$ of arcs in $\A$ is its number of hollows; by \eqref{eq:walk jump}
\[x = \Height(\Hollows_{\Height} + \Hollows_{\Height-1} - \Steps_{-\Height}) + \Hollows_{\Height} + \Hollows_0 + \Steps_1.\] Hence, $x/y = \Height + \RATIO(\W)$.

Clearly, we can obtain $\W(\A)$ in $O(n)$ time for any standard $(a,b)$-independent (resp.\ $(x,y)$-circuit) $\A$, and vice versa.  Moreover, note that, as stated, $a/b \geq x/y$ if and only if $\Ratio \geq \RATIO$.  We summarize this section in the next theorem.  

\begin{theorem}
  A PCA model $\M$ contains an $(a,b)$-independent (resp.\ $(x,y)$-circuit) $\A$ if and only if $\Syn$ contains a nose (resp.\ hollow) circuit $\W$ with ratio $a/b - \Height$ (resp.\ $x/y - \Height$).  Moreover, such a circuit $\W$ of $\Syn$ can be obtained in $O(n)$ time when $\A$ is given as input.  Conversely, $\A$ can be obtained in $O(n)$ time when $\W$ is given as input.
\end{theorem}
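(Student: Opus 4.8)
The plan is to turn the correspondence sketched in the paragraphs above into a proof, in three stages: (1) a standardization step reducing to \emph{standard} $(a,b)$-independents and \emph{standard} $(x,y)$-circuits; (2) the verification that $\A\mapsto\W(\A)$ is a bijection onto the nose circuits (resp.\ hollow circuits) of $\Syn$, with $\W\mapsto\A(\W)$ as its inverse; and (3) a bookkeeping computation identifying the parameters $a,b$ (resp.\ $x,y$) with the edge counts of $\W(\A)$, and hence with its ratio.

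For stage (1) I would prove the standardization lemma announced in the text. Given an $(a,b)$-independent $\A=B_1,\dots,B_a$ with $s(B_i)$ immediately preceded by a \emph{beginning} point $s(A)$, replace $B_i$ by $A$. Since $s(A)$ and $s(B_i)$ are consecutive extremes, $A$ and $B_i$ overlap and inclusion-freeness forces $t(A)<t(B_i)$ (clockwise), so $A\setminus B_i=(s(A),s(B_i)]$ contains no extreme other than $s(B_i)$; hence $s(B_{i+1})\notin B_i$ gives $s(B_{i+1})\notin A$, and a symmetric remark about $B_{i-1}$ shows the constraint at index $i-1$ is preserved. Moving the (extreme-free) segment $(s(A),s(B_i))$ from one arc of the $ss$ traversal to the adjacent one leaves the number of turns unchanged, so $(a,b)$ is preserved; iterating terminates, as each replacement strictly lowers the rank of some $s(B_i)$ among the extremes. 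The case of $(x,y)$-circuits is handled by the same argument with beginning and ending points interchanged, the inclusion $s(B_{i+1})\in B_i$ being maintained throughout; recall also that $x>2y$ is automatic because no two arcs of $\M$ cover $C$, which is exactly why the companion hollow circuits meet the defining requirement $\Hollows_\Height+\Hollows_{\Height-1}\ge\Steps_{-\Height}$.

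For stage (2) I would spell out the maps. Given a standard $(a,b)$-independent $\A$, let $B_i'$ be the arc whose ending point immediately precedes $s(B_{i+1})$; then $B_i'\to B_{i+1}$ is a nose, $B_i\le B_i'$, so the step path $\W_i$ of $\Syn$ from $B_i$ to $B_i'$ is defined, and $\W(\A)$ is the concatenation $\W_1,\dots,\W_a$, a nose circuit. Conversely, splitting a nose circuit $\W$ at its noses $B_1'\to B_1,\dots,B_a'\to B_a$ yields $B_1,\dots,B_a$; from $s(B_i)\le s(B_i')<t(B_i')$, the fact that $t(B_i')s(B_{i+1})$ are consecutive, and inclusion-freeness, one gets $s(B_{i+1})\notin B_i$, so $\A(\W)$ is a standard $(a,b)$-independent. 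Unwinding the definitions gives $\A(\W(\A))=\A$ and $\W(\A(\W))=\W$. The hollow case is identical after replacing noses by hollows and reversing the inclusion.

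For stage (3), note that $\W_i$ visits precisely the beginning points between $s(B_i)$ and $s(B_i')$, and its last edge is the nose $B_i'\to B_{i+1}$; consequently the sub-arc $(s(B_i),s(B_{i+1}))$ of the $ss$ traversal crosses the point $0$ exactly as often as $\W_i$ has external steps, plus one if the nose $B_i'\to B_{i+1}$ is external. Summing over $i$, the number of turns is $b=\Steps_{-\Height}(\W(\A))+\Noses_{-\Height}(\W(\A))+\Noses_{1-\Height}(\W(\A))$, and $a$ is the number of noses of $\W(\A)$. Substituting $\Hollows_\bullet=0$ and $\Height(A_j)-\Height(A_i)=0$ into \eqref{eq:walk jump} yields $a=\Height b+\Noses_{-\Height}-\Steps_1$, whence $a/b-\Height=\Ratio(\W(\A))$; the mirror computation gives $y=\Hollows_\Height+\Hollows_{\Height-1}-\Steps_{-\Height}$, $x=\Height y+\Hollows_\Height+\Hollows_0+\Steps_1$, and $x/y-\Height=\RATIO(\W(\A))$. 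For the running times, standardization, the concatenation producing $\W(\A)$, and the splitting producing $\A(\W)$ are each a single left-to-right scan over the $O(n)$ extremes or edges. The main obstacle I expect is precisely the turn-counting in stage (3): carefully matching one loop around $C$ in the $ss$ traversal with exactly one external edge of $\W(\A)$, and checking that standardization perturbs but never changes the turn count; once \eqref{eq:walk jump} is in hand the remaining manipulation is routine.
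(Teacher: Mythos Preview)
Your proposal is correct and follows exactly the route the paper takes: the paper's proof \emph{is} the prose of Section~\ref{sec:circuits and independents} preceding the theorem, and your three stages (standardization, the bijection $\A\leftrightarrow\W(\A)$, and the bookkeeping via \eqref{eq:walk jump}) reproduce that prose step for step, with the same formulas $b=\Noses_{-\Height}+\Noses_{1-\Height}+\Steps_{-\Height}$, $a=\Height b+\Noses_{-\Height}-\Steps_1$ and their hollow analogues. You supply more justification than the paper does in places (e.g.\ why standardization preserves both the independence constraint and the turn count, and why at most one of the external step and the external nose can occur within a single block $\W_i$), but the argument is the same.
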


\section{Minimal UCA and UIG models}
\label{sec:minimal uca}

Theorem~\ref{thm:separation constraints} gives us a procedure to check if $\M$ is equivalent to a $\Descriptor$-CA model, when $\Descriptor$ is given as input.  However, not much is known about the sets of feasible values $\Circ$ and $\Len$.  In this aspect, unit circular-arc models are much less studied than unit interval models.    In this section we prove that every UCA model admits an equivalent \emph{minimal} UCA model.  Minimal UCA models are a generalization of minimal UIG models, as defined by Pirlot~\cite{PirlotTaD1990}.  An $(\Len, \Thres, \BegDist)$-IG model with arcs $A_1 < \ldots < A_n$ is \emph{$(\Thres, \BegDist)$-minimal} when
\begin{enumerate}[({min-uig}$_1$)]
  \item $\Len \leq \Len'$, and \label{def:minuig-1}
  \item $s(A_i) < s(A_i')$ for every $1 \leq i \leq n$, \label{def:minuig-2}
\end{enumerate}
\newcommand{\minuigref}[1]{\itemref{\text{min-uig}}{minuig}{#1}}
for every equivalent $(\Len', \Thres, \BegDist)$-IG model.

Condition~\minuigref{2} as expressed above does not make much sense for UCA models, as there is not a natural left-to-right order of the arcs; the $0$ point of the circle is just a denotational tool.  However we can translate condition $s(A_n) < s(A_n')$ by asking the circumference of the circle to be minimized.  With this in mind, we say that a $(\Circ, \Len, \Thres, \BegDist)$-CA model $\M$ is \emph{$(\Thres, \BegDist)$-minimal} when
\begin{enumerate}[({min-uca}$_1$)]
  \item $\Len \leq \Len'$, and  \label{def:minuca-1}
  \item $\Circ \leq \Circ'$,\label{def:minuca-2}
\end{enumerate}
\newcommand{\minucaref}[1]{\itemref{\text{min-uca}}{minuca}{#1}}
for every equivalent $(\Circ', \Len', \Thres, \BegDist)$-CA model.  The fact that every UCA model is equivalent to a minimal UCA model follows from the next lemma.

\begin{lemma}\label{lem:minimal model}
 If $\M$ is equivalent to a $(\Circ, \Len + y, \Thres, \BegDist)$-CA model and a $(\Circ + x, \Len, \Thres, \BegDist)$-CA model for $x,y \geq 0$, then $\M$ is also equivalent to a $(\Circ + a, \Len +b, \Thres, \BegDist)$-CA model, for every $0 \leq a \leq x$ and $0 \leq b \leq y$.
\end{lemma}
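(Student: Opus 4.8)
The plan is to reduce everything to cycles of $\Syn$ via Theorem~\ref{thm:separation constraints}. All the descriptors appearing in the statement have $\LeftBound=\RightBound=0$, so $\M$ is equivalent to a $(\Circ',\Len',\Thres,\BegDist)$-CA model if and only if $\SEP_{(\Circ',\Len',\Thres,\BegDist)}(\W)\le 0$ for every cycle $\W$ of $\Syn$ (the cycles of $\BoundedSyn$ through $A_0$ add nothing, since any two vertices of $\Syn$ are joined by a step path and such a path winds around the circle at most once). The point I would exploit is that, for a \emph{fixed} cycle $\W$, the quantity $\SEP_{(\Circ',\Len',\Thres,\BegDist)}(\W)$ is an affine function of $(\Circ',\Len')$: summing \sepref{1}--\sepref{3} over the edges of $\W$, separating internal from external edges, and recalling \eqref{eq:rem} gives
\begin{equation*}
  \SEP_{(\Circ',\Len',\Thres,\BegDist)}(\W)=(\Noses(\W)-\Hollows(\W))\Len' + \REM(\W)\Circ' + \Thres|\W| + \BegDist\Steps(\W),
\end{equation*}
where $|\W|$, $\Noses(\W)$, $\Hollows(\W)$, $\Steps(\W)$ denote the numbers of edges, noses, hollows, and steps of $\W$. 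Consequently the set $Q$ of pairs $(\Circ',\Len')$ for which $\M$ admits an equivalent $(\Circ',\Len',\Thres,\BegDist)$-CA model is the intersection, over all cycles $\W$ of $\Syn$, of the halfplanes $H_\W=\{(\Circ',\Len') : \SEP_{(\Circ',\Len',\Thres,\BegDist)}(\W)\le 0\}$.

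By hypothesis $(\Circ,\Len+y)$ and $(\Circ+x,\Len)$ lie in $Q$, and we must show that the whole axis-parallel rectangle $[\Circ,\Circ+x]\times[\Len,\Len+y]$ lies in $Q$. As $H_\W$ is convex, $H_\W$ contains the rectangle iff it contains its four corners; and since it already contains the opposite corners $(\Circ,\Len+y)$ and $(\Circ+x,\Len)$, a short computation with the two known inequalities shows that $H_\W$ contains the remaining corners $(\Circ,\Len)$ and $(\Circ+x,\Len+y)$ exactly when the coefficients $p=\REM(\W)$ and $q=\Noses(\W)-\Hollows(\W)$ are not both strictly positive and not both strictly negative. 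So the lemma reduces to the following claim: for every cycle $\W$ of $\Syn$, the numbers $\REM(\W)$ and $\Noses(\W)-\Hollows(\W)$ are never both positive and never both negative.

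The ``never both positive'' half is immediate. Since $\M$ is equivalent to a UCA model (namely to the $(\Circ,\Len+y,\Thres,\BegDist)$-CA model of the hypothesis), Theorem~\ref{thm:tucker}(\ref{thm:tucker:cycles}) gives $\MAG(\W)\le 0$ for every cycle $\W$ of $\Syn$; combining \eqref{eq:mag} with the cycle case of \eqref{eq:walk jump} one gets $\MAG(\W)=(\Noses(\W)-\Hollows(\W))+(\Height+\Ratio)\REM(\W)$, and $\Height+\Ratio\ge\Height\ge1$, so $\REM(\W)>0$ forces $\Noses(\W)-\Hollows(\W)<0$. The ``never both negative'' half is where the work lies. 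It is trivial for nose cycles ($\Noses(\W)-\Hollows(\W)=\Noses(\W)\ge0$) and for hollow cycles ($\REM(\W)\ge0$ by definition), but for a general cycle it does not follow from Theorem~\ref{thm:tucker} applied to $\M$, and Lemma~\ref{lem:nose or hollow cycle} is of no direct help, since the surgery there controls $\MAG$ and $\REM$ of the resulting nose/hollow cycle $\W'$ but not $\Noses(\W)-\Hollows(\W)$ of $\W$ itself. I would settle this half directly: using the per-level flow-conservation identities that \eqref{eq:walk jump} yields level by level — together with $\Steps_{-\Height}\le 1$, as $A_n\to A_1$ is the unique external step — one expresses $\Noses(\W)-\Hollows(\W)$ through the ten edge counts and checks that $\Noses(\W)<\Hollows(\W)$ cannot coexist with $\REM(\W)<0$. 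With the two-sign claim in hand, Theorem~\ref{thm:separation constraints} gives $(\Circ+a,\Len+b)\in Q$, and $\Unit(\M,(\Circ+a,\Len+b,\Thres,\BegDist))$ is the required model. The main obstacle is precisely this last case analysis; everything else is routine bookkeeping.
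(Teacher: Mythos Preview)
Your halfplane reduction is correct and clean: writing $\SEP_{(\Circ',\Len',\Thres,\BegDist)}(\W)=q\Len'+p\Circ'+C$ with $q=\Noses(\W)-\Hollows(\W)$, $p=\REM(\W)$ and $C=\Thres|\W|+\BegDist\Steps(\W)\ge0$, and then checking the four corners of the rectangle, is exactly the right picture. Your ``never both positive'' half is fine (one only needs $\Height+\Ratio>0$, not $\Ratio\ge0$, and this follows from the $(a,b)$-independent interpretation in Section~\ref{sec:circuits and independents}).

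The gap is the ``never both negative'' half. The level-by-level identities you allude to do not yield $q\ge0$ from $p<0$: using~\eqref{eq:walk jump} on a cycle one gets $q=\MAG(\W)-(\Height+\Ratio)\,\REM(\W)$, so when $\REM(\W)<0$ the sign of $q$ is governed by the competition between $|\MAG(\W)|$ and $(\Height+\Ratio)|\REM(\W)|$, and nothing structural rules out $|\MAG(\W)|$ being the larger of the two. In fact the paper's own proof shows (via $(\mathrm{ii})-(\mathrm{iii})$) that any cycle violating the conclusion necessarily has \emph{both} $p<0$ and $q<0$; so this is precisely the case one must handle, not one that can be excluded a priori.

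The missing ingredient---and what the paper actually uses---is metric, not combinatorial: the hypothesis already forces $\Circ$ to be large. Since a $(\Circ,\Len+y,\Thres,\BegDist)$-CA model exists, its $2n$ extremes are pairwise $\Thres$-separated and its $n$ beginning points are pairwise $(\Thres+\BegDist)$-separated, whence $\Circ\ge\max\{2\Thres,\Thres+\BegDist\}\,n\ge \Thres|\W|+\BegDist\Steps(\W)=C$ for every cycle $\W$ (as $|\W|\le n$). Now in your framework the only corner at risk when $p<0$ and $q<0$ is $(\Circ,\Len)$, and there
\[
q\Len+p\Circ+C \;\le\; q\Len - \Circ + C \;\le\; q\Len \;<\; 0,
\]
since $p\le-1$ and $\Circ\ge C$. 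So the rectangle lies in every $H_\W$, and Theorem~\ref{thm:separation constraints} finishes the job. This is exactly the inequality $-( \Circ+a)+\CONST\le0$ that the paper reaches at the end of its argument; you were one observation away from it, but the flow-conservation route you sketched does not get there.
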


\begin{proof}
 For the sake of notation, write $\langle a', b'\rangle$ to denote the UCA descriptor $(\Circ + a, \Len + b, \Thres, \BegDist)$, for every $0 \leq a' \leq x$ and $0 \leq b' \leq y$.  Suppose, to obtain a contradiction, that $\M$ is equivalent to no $\langle a, b\rangle$-CA model for some $1 \leq a \leq x$ and $1 \leq b \leq y$.  Then, by Theorem~\ref{thm:separation constraints}, $\SEP_{\langle a, b\rangle}(\W) > 0$, $\SEP_{\langle x, 0 \rangle}(\W) \leq 0$, and $\SEP_{\langle 0, y\rangle}(\W) \leq 0$ for some cycle $\W$ of $\Syn$.  By~\eqref{eq:ratios}, there exists $\Extra \geq 0$ such that
 \begin{align*}
   \Circ + a' &= (\Len + y + \Thres)(\Height + \Ratio) + \Extra + a' = (\Len + b' + \Thres)(\Height + \Ratio) + (y-b')(\Height + \Ratio) + \Extra + a'  \tag{i}
 \end{align*}
 for every $0 \leq a' \leq x$ and $0 \leq b' \leq y$.  Thus, by~\eqref{eq:sep},
 \begin{align*}
  \SEP_{\langle a, b \rangle} &= (\Len+b+\Thres)\MAG + ((y-b)(\Height + \Ratio) + \Extra + a)\REM + \CONST  \tag{ii} > 0\\
  \SEP_{\langle 0, y \rangle} &= (\Len+y+\Thres)\MAG + \Extra\REM + \CONST  \leq 0 \tag{iii}\\
  \SEP_{\langle x, 0 \rangle} &= (\Len+\Thres)\MAG + (y(\Height+\Ratio)+\Extra+x)\REM + \CONST \leq 0 \tag{iv}.
 \end{align*}
 Recall that $\MAG < 0$ by Theorem~\ref{thm:tucker}.  Then, as $0 < {\rm (ii)} - {\rm (iv)}$, we obtain that (v) $\REM \leq -1$ (recall $\REM \in \mathbb{Z}$), while (vi) $(\Height+\Ratio)\REM > \MAG$ follows by $0 < {\rm (ii)} - {\rm (iii)}$ and (v).  Then,
 \begin{align}
   0 <& (\Len+b+\Thres)\MAG + ((y-b)(\Height + \Ratio) + \Extra + a)\REM + \CONST \tag{by (ii)} \\
     <&  (\Len+b+\Thres)(\Height+\Ratio)\REM + ((y-b)(\Height + \Ratio) + \Extra + a)\REM + \CONST \tag{by (vi)} \\
     =& (\Circ+a)\REM + \CONST \tag{by (i)} \\
     \leq&  -\Circ - a + \CONST \tag{by (v)}
 \end{align}
 This is impossible, because $\Circ \geq \max\{2\Thres, \Thres+\BegDist\}n$ as all the extremes of the $\langle 0, y\rangle$-CA model equivalent to $\M$ are separated by $\Thres$ and each of its $n$ beginning points is separated from the next by $\Thres + \BegDist$, while $a \geq 0$ and $\CONST \leq \max\{2\Thres, \Thres+\BegDist\}n$ by definition~\eqref{eq:const}. 
\end{proof}

\begin{theorem}
  Every UCA graph admits a $(\Thres, \BegDist)$-minimal UCA model for every $\Thres, \BegDist \in \mathbb{Q}$.
\end{theorem}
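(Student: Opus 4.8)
The plan is to produce a single $(\Circ_0,\Len_0,\Thres,\BegDist)$-CA model of $G$ in which $\Len_0$ is the smallest possible length and $\Circ_0$ is the smallest circumference \emph{among the models that already have length $\Len_0$}, and then use Lemma~\ref{lem:minimal model} to upgrade the second minimality so that it holds against \emph{all} equivalent models. Fix $\Thres,\BegDist$ (with $\Thres+\BegDist>0$, which is the meaningful range; a complete graph is represented by a trivial model and its unique minimal model is the obvious scaling of $\{(i,i+n+1)\mid 1\le i\le n\}$, so assume a non-trivial PCA model). Since $G$ is UCA it has a UCA model $\Unit$; take $\M=\Unit$, a non-trivial PCA model equivalent to a UCA model. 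By Theorem~\ref{thm:separation constraints}, $\M$ is equivalent to a $(\Circ,\Len,\Thres,\BegDist)$-CA model iff $\SEP_{(\Circ,\Len,\Thres,\BegDist)}(\W)\le 0$ for every cycle $\W$ of $\Syn$, and since a positive closed walk always contains a positive simple cycle, it suffices to range over the finitely many simple cycles. Writing $\Extra=\Circ-(\Len+\Thres)(\Height+\Ratio)$, Equation~\eqref{eq:sep} turns the condition on a cycle $\W$ into the linear inequality $(\Len+\Thres)\MAG(\W)+\Extra\,\REM(\W)+\CONST(\W,\Thres,\BegDist)\le 0$. Hence the set $F$ of feasible pairs $(\Circ,\Len)$ is a polyhedron intersected with $\{\Len>0\}$, and it is nonempty: by Theorem~\ref{thm:tucker}, $\MAG(\W)<0$, or $\MAG(\W)=0$ and $\REM(\W)<0$, for every cycle, so taking $\Len$ large and $\Extra$ a suitable constant works exactly as in Theorem~\ref{thm:tucker}(\ref{thm:tucker:positive-certificate}).

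Next I would show $\Len_0:=\inf\{\Len'\mid(\Circ',\Len')\in F\}$ is attained. Using $\MAG(\W)\le 0$, the cycles with $\MAG(\W)=0$ (among them the greedy nose cycle of ratio $\Ratio$, for which $\MAG=0$ and $\REM<0$) force $\Extra$ to be at least some fixed $\Extra^{*}>0$, while once $\Extra$ is chosen the remaining cycles ($\MAG(\W)<0$) only impose lower bounds of the form $\Len+\Thres\ge c_{\W}$. Moreover, since $\M$ is non-trivial it has an edge, i.e.\ adjacent arcs $A_j<A_i$, and then the arc from $s(A_i)$ to $s(A_j)$ in the appropriate direction lies inside an arc of length $\Len$ yet has length at least $\Thres+\BegDist$, so every feasible $\Len$ exceeds $\Thres+\BegDist>0$. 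Consequently the feasible lengths form a closed half-line $[\Len_0,\infty)$ with $\Len_0>0$, and $\Len=\Len_0$ together with the minimal admissible $\Extra=\Extra^{*}$ yields an explicit $(\Circ_1,\Len_0,\Thres,\BegDist)$-CA model equivalent to $\M$. Fixing $\Len=\Len_0$, the set of feasible circumferences is a nonempty closed subset of $(0,\infty)$ bounded below (e.g.\ the step cycle $A_1\to\cdots\to A_n\to A_1$ gives $\Circ\ge(\Thres+\BegDist)n$), so $\Circ_0:=\min\{\Circ'\mid(\Circ',\Len_0)\in F\}$ is attained by some $(\Circ_0,\Len_0,\Thres,\BegDist)$-CA model $\Unit_0$ equivalent to $\M$.

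Finally I would verify that $\Unit_0$ is $(\Thres,\BegDist)$-minimal. Condition~\minucaref{1} holds because $\Len_0$ was defined as the infimum of the length over \emph{all} $(\Circ',\Len',\Thres,\BegDist)$-CA models equivalent to $\M$. For~\minucaref{2}, suppose some equivalent $(\Circ',\Len',\Thres,\BegDist)$-CA model had $\Circ'<\Circ_0$. Then $\Len'\ge\Len_0$, so putting $y=\Len'-\Len_0\ge 0$ and $x=\Circ_0-\Circ'>0$, the model $\M$ is equivalent to a $(\Circ',\Len_0+y,\Thres,\BegDist)$-CA model (the given one) and to a $(\Circ'+x,\Len_0,\Thres,\BegDist)$-CA model ($\Unit_0$). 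Lemma~\ref{lem:minimal model} then produces a $(\Circ'+a,\Len_0+b,\Thres,\BegDist)$-CA model equivalent to $\M$ for all $0\le a\le x$ and $0\le b\le y$; with $a=b=0$ this is a $(\Circ',\Len_0,\Thres,\BegDist)$-CA model with $\Circ'<\Circ_0$, contradicting the choice of $\Circ_0$. Hence $\Circ_0\le\Circ'$ for every equivalent $(\Circ',\Len',\Thres,\BegDist)$-CA model, so $\Unit_0$ satisfies~\minucaref{2}; being equivalent to $\M=\Unit$ it represents $G$, proving the theorem.

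The step I expect to be the main obstacle is the attainment of $\Len_0$. Unlike $\Circ_0$, which is a one-parameter minimization over a bounded-below polyhedron and comes essentially for free, showing that the infimum over lengths is genuinely realized requires unwinding the shape of $\SEP$ on cycles (Equation~\eqref{eq:sep}): separating the cycles with $\MAG=0$, which pin $\Extra$ from below, from those with $\MAG<0$, which bound $\Len$ from below, and invoking both Theorem~\ref{thm:tucker} (to guarantee $\MAG(\W)\le 0$) and the non-triviality of $\M$ (to keep $\Len$ away from a non-attained value). Once $\Len_0$ is secured, Lemma~\ref{lem:minimal model} does the rest.
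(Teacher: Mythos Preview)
Your proof is correct and follows the same route as the paper: the paper states the theorem immediately after Lemma~\ref{lem:minimal model} and regards it as a direct consequence, leaving the attainment of the infima implicit, whereas you spell this step out. Your use of Lemma~\ref{lem:minimal model} in the last paragraph---showing that the minimum of $\Circ$ taken only over models with $\Len=\Len_0$ is in fact the global minimum---is exactly the content the paper intends.

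One remark on the part you flag as the main obstacle: the attainment of $\Len_0$ is more cheaply obtained than your cycle-by-cycle decomposition suggests. You already observed that $F=\{(\Circ,\Len):\SEP_{(\Circ,\Len,\Thres,\BegDist)}(\W)\le 0\text{ for every simple cycle }\W\}$ is a polyhedron (finitely many linear inequalities in $(\Circ,\Len)$), hence closed, and you gave the lower bound $\Len>\Thres+\BegDist$ from any pair of intersecting arcs; a bounded linear objective over a nonempty polyhedron attains its infimum, so $\Len_0$ is realized without separating the cases $\MAG=0$ versus $\MAG<0$. In particular, the side claim that $\Extra^{*}>0$ is not needed (and can fail when $\Thres=0$ and the maximizing nose cycle has no steps), while the sentence ``once $\Extra$ is chosen the remaining cycles only impose lower bounds $\Len+\Thres\ge c_{\W}$'' hides the dependence of $c_{\W}$ on $\Extra$; the polyhedral argument sidesteps both issues.
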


For the rest of this section, we restrict ourselves to the case in which $\Thres$ and $\BegDist$ are integers.  By~\eqref{eq:nose ratio}, $\Ratio = 0$ for every UIG model $\M$, thus, by~\eqref{eq:sep}, $\SEP(\W) \in \mathbb{N}$ if and only if $\Len$ and $\Circ$ are integers.  We obtain, therefore, the following corollary that was first proved by Pirlot~\cite{PirlotTaD1990}.

\begin{corollary}
  Every $(\Thres, \BegDist)$-minimal UIG model is integer for every $\Thres, \BegDist \in \mathbb{N}$
\end{corollary}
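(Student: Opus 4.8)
The plan is to reduce the statement to the single claim that the arc length $\Len$ of a $(\Thres,\BegDist)$-minimal UIG model $\Unit$ is an integer. This suffices: put $\Descriptor=(\infty,\Len,\Thres,\BegDist)$. Since $\M$ is a PIG model, every edge of $\Syn$ other than the step $A_n\to A_1$ is internal, so the weights of $\SEP=\SEP_\Descriptor$ are $\Thres+\BegDist$ (steps), $\Thres+\Len$ (noses) and $\Thres-\Len$ (hollows), all integers, while $\SEP(A_n\to A_1)=-\infty$. Hence every $\IDist{\SEP}(A_1,A_i)$ is an integer, and because $\Unit$ can be taken to be $\Unit(\M,\Descriptor)$ by Theorem~\ref{thm:separation constraints}, all its extremes are integers.

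To prove $\Len\in\mathbb{N}$, I would use that $\Ratio=0$ for a UIG model and that $\REM$ vanishes on every internal edge, so by~\eqref{eq:sep} each cycle $\W$ of $\Syn$ not using $A_n\to A_1$ satisfies $\SEP(\W)=(\Len+\Thres)\MAG(\W)+\CONST(\W,\Thres,\BegDist)$ with $\MAG(\W)=-(\Steps_1(\W)+\Hollows_0(\W))\in\mathbb{Z}$ and $\CONST(\W,\Thres,\BegDist)=2\Thres(\Hollows_{-1}(\W)+\Hollows_0(\W))+(\Thres+\BegDist)(\Steps_0(\W)+\Steps_1(\W))\in\mathbb{N}$. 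Moreover $\MAG(\W)<0$ for every nonempty cycle: as $\M$ is a UIG graph it admits some $(\infty,\Len_0,\Thres,\BegDist)$-IG model, so $\SEP(\W)\le 0$ by Theorem~\ref{thm:separation constraints}; if $\MAG(\W)=0$ this forces $\CONST(\W,\Thres,\BegDist)=0$, i.e.\ $\W$ consisting of $1$-noses only, which is impossible since $1$-noses go strictly forward. Since $\Syn$ does have cycles (any adjacency of $\M$ yields a step path followed by a hollow path back), Theorem~\ref{thm:separation constraints} together with the minimality of $\Unit$ gives that $\Len$ is the least value making $\SEP$ free of positive cycles, that is $\Len+\Thres=\max\{\CONST(\W,\Thres,\BegDist)/(-\MAG(\W)):\W\text{ is a cycle of }\Syn\}$.

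Finally I would show this maximum is an integer. Writing an arbitrary cycle as an edge-disjoint union of simple cycles $\W_1,\dots,\W_m$ and using additivity of $\MAG$ and $\CONST$ together with $\frac{a_1+\dots+a_m}{b_1+\dots+b_m}\le\max_t\frac{a_t}{b_t}$ (valid because $b_t=-\MAG(\W_t)>0$), the maximum is attained by a simple cycle $\W^*$. It therefore suffices to prove that every simple cycle $\W$ of $\Syn$ has $-\MAG(\W)=\Steps_1(\W)+\Hollows_0(\W)=1$; granting this, $\Len=\CONST(\W^*,\Thres,\BegDist)-\Thres\in\mathbb{N}$. This structural fact is a property of the synthetic graph of a PIG model, implicit in Mitas' matrix description of $\Syn$ (see Section~\ref{sec:minimal uig}): for a PIG model the redundant edges of $\Syn$ are exactly the $1$-steps and $0$-hollows, $\Red(\M)$ is acyclic by Theorem~\ref{thm:tucker}, and a short argument based on the planar/toroidal structure of $\Syn$ (cf.\ Theorem~\ref{thm:toroidal}) shows that a simple cycle of $\Syn$ uses exactly one of them.

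I expect this last step — that every simple cycle of $\Syn$ traverses exactly one $1$-step or $0$-hollow — to be the main obstacle; once the matrix description of $\Syn$ for PIG models is available it becomes routine, and the remainder of the argument above is straightforward. (If $\Thres$ or $\BegDist$ is allowed to be $0$, the same reasoning applies with trivial modifications.)
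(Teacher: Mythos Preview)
Your proposal is correct and follows essentially the same route as the paper. The paper's one-line justification preceding the corollary (that $\Ratio=0$ makes $\SEP(\W)$ integral exactly when $\Len$ is) is really a forward reference; the actual argument, carried out in Section~\ref{sec:minimal uig}, is precisely yours: Lemma~\ref{lem:one backedge} gives $\MAG(\W)=-1$ for every cycle of $\Syn$, whence $\Len^*+\Thres=\max_\W\CONST(\W,\Thres,\BegDist)\in\mathbb{N}$, and then the extremes of $\Unit(\M,\Descriptor)$ are integers because every internal edge weight of $\SEP_\Descriptor$ is. Your deferred ``main obstacle'' is exactly Lemma~\ref{lem:one backedge}, and your reduction to simple cycles via the mediant inequality is the natural way to isolate it.
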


The above corollary holds for UCA models with an integer nose ratio as well.  However, we were not able to prove or disprove the above corollary for the general case.  For this reason, we say that an integer $(\Circ, \Len, \Thres, \BegDist)$-CA model $\M$ is \emph{$(\mathbb{N}, \Thres, \BegDist)$-minimal} when if satisfies \minucaref{1} and \minucaref{2} for every integer $(\Circ', \Len', \Thres, \BegDist)$-CA model.  

A natural algorithmic problem is \MinUCA in which we ought to find a $(\mathbb{N}, \Thres, \BegDist)$-minimal UCA model equivalent to an input $(\Circ, \Len, \Thres, \BegDist)$-CA model $\M$.  A simple solution is to apply Theorem~\ref{thm:separation constraints} for every $1 \leq \Len^* \leq \Len$ and every $1 \leq \Circ^* \leq \Circ$ with a total cost of $O(\Len^*\Circ^* n^2)$ time.  We can easily improve this algorithm by replacing the linear search of $\Circ^*$ with a binary search.

\begin{corollary}\label{cor:circle range}
  Let $\M$ be a PCA model.  If $\SEP_{(\Circ,\Len,\Thres,\BegDist)}(\W) > 0$ for some cycle $\W$ of $\Syn$, then $\M$ is not equivalent to a $(\Circ + \sg . x, \Len - y, \Thres, \BegDist)$-CA model, for every $x,y \geq 0$, where $\sg$ is the sign of $\REM(\W)$.
\end{corollary}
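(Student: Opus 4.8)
The plan is to invoke Theorem~\ref{thm:separation constraints}: it suffices to produce a cycle $\W'$ of $\Syn$ with $\SEP_{(\Circ+\sg x,\Len-y,\Thres,\BegDist)}(\W')>0$, and I will build it from the given cycle $\W$. The starting point is to record how $\SEP$ behaves on a cycle. Since $\Height(A_i)=\Height(A_j)$ along a cycle, formula~\eqref{eq:sep} together with $\Extra=\Circ-(\Len+\Thres)(\Height+\Ratio)$ exhibits $\SEP_{(\Circ,\Len,\Thres,\BegDist)}(\W)$ as an affine function of $\Circ$ and $\Len$ whose coefficient of $\Circ$ is $\REM(\W)$ and whose coefficient of $\Len$ is $\MAG(\W)-(\Height+\Ratio)\REM(\W)$; by \eqref{eq:mag}, \eqref{eq:rem} and the cycle case of~\eqref{eq:walk jump} this last quantity equals $\Noses(\W)-\Hollows(\W)$. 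Writing $\nu=\Noses(\W)$, $\eta=\Hollows(\W)$ and using $\sg\,\REM(\W)=|\REM(\W)|$, this yields
\begin{equation*}
 \SEP_{(\Circ+\sg x,\Len-y,\Thres,\BegDist)}(\W)=\SEP_{(\Circ,\Len,\Thres,\BegDist)}(\W)+x\,|\REM(\W)|+y\,(\eta-\nu).
\end{equation*}

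Since $x\,|\REM(\W)|\ge 0$, shifting $\Circ$ by $\sg x$ never decreases the value of $\SEP$ on $\W$, so the whole difficulty lies in the drop of $\Len$. If $\eta\ge\nu$ — in particular when $\W$ is a hollow cycle — the term $y\,(\eta-\nu)$ is nonnegative as well and we may take $\W'=\W$, since the displayed identity then gives $\SEP_{(\Circ+\sg x,\Len-y,\Thres,\BegDist)}(\W)\ge\SEP_{(\Circ,\Len,\Thres,\BegDist)}(\W)>0$. Observe also that if $\M$ is equivalent to no UCA model then the conclusion is vacuous, so we may assume $\Ratio(\M)<\RATIO(\M)$ and, by Theorem~\ref{thm:tucker}, that $\MAG(\cdot)\le 0$ on every cycle of $\Syn$.

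The remaining case, which I expect to be the crux, is $\REM(\W)<0$ and $\nu>\eta$: then $\sg=-1$, we decrease both parameters, and $\W$ itself may cease to witness infeasibility once $\Len$ shrinks, so a new cycle has to be produced. My plan is to first apply Lemma~\ref{lem:nose or hollow cycle} to reduce to the case where $\W$ is a nose cycle — this requires tracking the $\CONST$ term through the replacements in that lemma's proof so as to preserve $\SEP_{(\Circ,\Len,\Thres,\BegDist)}(\W)>0$ — and then to write $\SEP_{(\Circ,\Len,\Thres,\BegDist)}(\W)=|\REM(\W)|\big((\Len+\Thres)(\Height+\Ratio(\W))-\Circ\big)+\CONST(\W)$. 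Assuming for contradiction that $\M$ is equivalent to a $(\Circ-x,\Len-y,\Thres,\BegDist)$-CA model, Theorem~\ref{thm:separation constraints} applied to $\W$ forces $\SEP_{(\Circ-x,\Len-y,\Thres,\BegDist)}(\W)\le 0$, which combined with $\SEP_{(\Circ,\Len,\Thres,\BegDist)}(\W)>0$ yields $x<y(\Height+\Ratio(\W))$; on the other hand \eqref{eq:ratios} for that same model gives $(\Len-y+\Thres)(\Height+\Ratio(\M))\le\Circ-x\le(\Len-y-\Thres)(\Height+\RATIO(\M))$. I expect these facts, together with $\Ratio(\W)\le\Ratio(\M)<\RATIO(\M)$ and the size bound $\Circ\ge\max\{2\Thres,\Thres+\BegDist\}\,n$ (the smaller model still has $n$ pairwise separated beginning points), to be incompatible, giving the desired contradiction. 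The real obstacle is to make this last step airtight: controlling the $\CONST$ term precisely through the reduction of Lemma~\ref{lem:nose or hollow cycle}, and checking the inequalities at the boundary, notably $y=0$ (where the argument collapses to the trivial $x\,|\REM(\W)|\ge 0$ estimate) and $\Len-y$ small.
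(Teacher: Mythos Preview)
Your displayed identity
\[
\SEP_{(\Circ+\sg x,\Len-y,\Thres,\BegDist)}(\W)=\SEP_{(\Circ,\Len,\Thres,\BegDist)}(\W)+x\,|\REM(\W)|+y\,(\eta-\nu)
\]
is correct (it follows immediately from \sepref{1}--\sepref{3}, which give $\partial_\Circ\SEP(\W)=\REM(\W)$ and $\partial_\Len\SEP(\W)=\nu-\eta$), and its specialisation to $y=0$ \emph{is} the paper's proof. The paper simply writes $\Circ=(\Len+\Thres)(\Height+\Ratio)+\Extra$, reads off from~\eqref{eq:sep} that
\[
\SEP_{(\Circ+\sg x,\Len,\Thres,\BegDist)}(\W)=\SEP_{(\Circ,\Len,\Thres,\BegDist)}(\W)+\sg\, x\cdot\REM(\W)>0,
\]
and invokes Theorem~\ref{thm:separation constraints}. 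That is the whole argument.

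In particular, the paper's own proof never touches the case $y>0$: the displayed line keeps $\Len$ fixed and the concluding sentence only rules out $(\Circ+\sg x,\Len,\Thres,\BegDist)$-CA models. This is consistent with how the corollary is used afterwards---the binary search runs over $\Circ$ with $\Len$ held constant---so the application needs only $y=0$. Your elaborate plan for the regime $\REM(\W)<0$, $\nu>\eta$ (reduce to a nose cycle via Lemma~\ref{lem:nose or hollow cycle}, control $\CONST$ through the replacement, and play the resulting inequality against~\eqref{eq:ratios}) is therefore attempting to establish strictly more than the paper proves. If your goal is to match the paper, restrict to $y=0$ and your two-line computation already suffices; if you want the full statement with $y>0$, be aware that you are going beyond the paper's argument, and the ``real obstacle'' you flag is genuine rather than a matter of bookkeeping.
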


\begin{proof}
  Let $\Extra$ be such that $\Circ = (\Len+\Thres)(\Height + \Ratio) + \Extra$; note that $\Extra$ needs not be positive.  By~\eqref{eq:sep}, 
  \[\SEP_{(\Circ+\sg x, \Len,\Thres,\BegDist)} = (\Len+\Thres)\MAG + (\Extra + \sg . x)\REM + \CONST = \sg . x . \REM + \SEP_{(\Circ, \Len,\Thres,\BegDist)} > 0.\]
  Then, by Theorem~\ref{thm:separation constraints}, $\M$ cannot be equivalent to a $(\Circ + \sg . x, \Len, \Thres,\BegDist)$-CA model.
\end{proof}

Corollary~\ref{cor:circle range} provides us with a somehow efficient algorithm to binary search the minimum $\Circ^* \in\mathbb{N}$ such that $\M$ is equivalent to a $(\Circ^*, \Len, \Thres, \BegDist)$-CA model, when $\Len, \Thres, \BegDist \in \mathbb{N}$ are given as input.  By definition $\Thres < \Len$, while we assume $\BegDist + \Thres < \Len$ as otherwise $G(\M)$ has no edges and the problem is trivial. The idea of the algorithm is simply to assume that such a value $\Circ^*$ exists and belongs to some range $[a, \Circ]$; initially $a = 0$ and $\Circ = n(\Len+1)$.  Then, we query if $\M$ is equivalent to some $(b,\Len, \Thres,\BegDist)$-CA model, where $b \in \mathbb{N}$ is the middle of $[a,\Circ]$.  If affirmative, then $\Circ^* \in [a,b]$ by definition.  Otherwise, we search some cycle $\W$ with $\SEP_{(b,\Len,\Thres,\BegDist)} > 0$.  By Corollary~\ref{cor:circle range}, $\Circ^* \in [0,b)$ if $\REM(\W) \geq 0$, while $\Circ^* \in (b,\Circ]$ otherwise.  Regardless of whether $\Circ^*$ exists or not, this algorithm requires $O(\log (n\Len))$ queries.

Every time we need to query if $\M$ is equivalent to a $(b,\Len,\Thres,\BegDist)$-CA model, we solve \uRep as in Section~\ref{sec:synthetic graph:bounded algorithm}.  Since $\Len^* = O((\Thres + \BegDist)n^2)$~\cite{LinSzwarcfiterSJDM2008}, we conclude that the total time required to obtain $(\Circ^*, \Len^*)$, thus solving \MinUCA, is $O((\Thres + \BegDist)n^4\log (n+\Thres+\BegDist))$.

\subsection{Minimal UIG models}
\label{sec:minimal uig}

Pirlot proved in~\cite{PirlotTaD1990} that every UIG model $\M$ is equivalent to a $(1,0)$-minimal UIG model.  However, it was Mitas who showed that such a model can be found in linear time by transforming $\M$ into an equivalent UIG model $\M^*$~\cite{Mitas1994}.  Unfortunately, her proof has a flaw that invalidates the minimality arguments.  Though $\M^*$ is equivalent to $\M$, it needs not be $(1,0)$-minimal.  On the other hand, the algorithm in the previous section can be implemented so as to run in $O(n^2\log n)$ time when applied to $\M$.  (Below we discuss condition why \minuigref{2} is satisfied by the unit interval model so obtained.)  In this section we briefly describe Mitas' algorithm and its counterexample, and propose a patch.  The obtained algorithm runs in $O(n^2)$ time,  and its a compromise between Mitas' algorithm and the algorithm in the previous section.

Let $\M$ be a PIG model with arcs $A_1 < \ldots < A_n$.  By definition, no arc of $\M$ crosses $0$, thus $\Syn$ has no external hollows.  Similarly, external noses and steps are redundant in $\Syn$ for testing if $\M$ is equivalent to a $\Len$-IG model, as $\Circ = \infty$.  Therefore, we assume that $\Syn$ has only five types of edges, namely $1$-noses, $0$- and $1$-steps, and $0$- and $(-1)$-hollows.  Mitas identifies two special vertices of $\Syn$ for each height value.  A \emph{leftmost} vertex is a vertex $A_i$ such that either $i = 1$ or $\Height(A_i) = \Height(A_{i-1}) + 1$, while a \emph{rightmost} vertex is a vertex $A_j$ such that either $j = n$ or $\Height(A_{j+1}) = \Height(A_j) + 1$.  

\begin{figure}[th]
  \begin{minipage}{\linewidth}\centering
    \includegraphics{\figurepath fig-mitas-uig}
    \subcaption{A $13$-UIG model $\M$.}
  \end{minipage}
  
  \begin{minipage}{\linewidth}\centering
    \includegraphics{\figurepath fig-mitas}
    \subcaption{The canonical drawing of $\RedMitas(\M)$; $0$-steps are not shown for the sake of exposition.}
  \end{minipage}

  \caption{Counterexample to \eqref{eq:mitas bug}: $\IDist{\SEP}(\RedMitas, A_1, A_{19}) - \IDist{\SEP}(\RedMitas, A_1, A_{15}) = 14$ while in $\Syn$ the maximum cycle has length $13$.}\label{fig:mitas counterexample}
\end{figure}

Suppose $\M$ is equivalent to some $\Len$-IG model and let $\Unit = \Unit(\infty, \Len)$ be as defined in Theorem~\ref{thm:separation constraints}, but replacing $\BoundedSyn$ and $A_0$ with $\Syn$ and $A_1$, respectively.  That is, the arc $U_i$ corresponding to $A_i$ begins at $s(U_i) = \IDist{\SEP}(A_1, A_i)$ for every $1 \leq i \leq n$.  
For $x \in \{1, \ldots, \Height\}$, let $A_i$ and $A_j$ be leftmost and rightmost vertices with $\Height(A_i) = \Height(A_j) = x$, and $\W_x$ be a path from $A_i$ to $A_j$.  By definition of $\U$, it follows that $s(U_j) - s(U_i) \geq \SEP(\W_x)$.  That is,
\begin{equation}
\IDist{\SEP}(A_1, A_j) - \IDist{\SEP}(A_1, A_i) \geq \SEP(\W_x). \label{eq:mitas correct}
\end{equation}
Mitas' key idea is to take $\Len$ so that \eqref{eq:mitas correct} holds by equality when $\SEP(\W_x)$ is maximum (recall $\SEP$ depends on $\Len$).  The flaw, however, is that she discards $0$-hollows and $1$-steps before solving~\eqref{eq:mitas correct}.  

To make the above statement more precise, let $\RedMitas(\M)$ be the digraph obtained from $\Syn(\M)$ by removing all the $0$-hollows and $1$-steps, and $\W_x^\RedMitas$ be a path from $A_i$ to $A_j$ in $\RedMitas$ (as usual we drop the parameter $\M$ from $\RedMitas$).  Mitas claims (Theorem~5 in~\cite{Mitas1994}, although using a different terminology) that 
\begin{equation}
 \IDist{\SEP}(\RedMitas, A_1, A_j) - \IDist{\SEP}(\RedMitas, A_1, A_i) = \SEP(\W_x^\RedMitas). \label{eq:mitas bug}
\end{equation}
when $\SEP(\W_x^\RedMitas)$ is maximum and $\Len$ is minimum.  Figure~\ref{fig:mitas counterexample} shows a counterexample to this fact.  The inconvenient is that $\IDist{\SEP}(\Syn, A_1, A_i)$ is greater than $\IDist{\SEP}(\RedMitas, A_1, A_i)$ when every maximum path from $A_1$ to $A_i$ contains the $0$-hollow or $1$-step ending at $A_i$. 

Equation~\eqref{eq:mitas bug} is fundamental for keeping the time and space complexity low.  The main observation is that $\RedMitas$ is acyclic whereas $\Syn$ is not.

\begin{lemma}[\cite{Mitas1994}]\label{lem:RedMitas acyclic}
  Digraph $\RedMitas(\M)$ is acyclic, for any UIG model $\M$.
\end{lemma}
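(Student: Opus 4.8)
The plan is to show that the vertex indices strictly increase along every walk of $\RedMitas(\M)$ that starts and ends at its own minimum height; acyclicity then follows at once, since a cycle traversed from a minimum‑height vertex is such a walk. The main tool is a ``leftmost‑overlap'' function. Recall that in this section $\Syn$ carries only the five edge types listed above, so $\RedMitas(\M)$ contains only $1$-noses, $0$-steps and $(-1)$-hollows; in particular $\RedMitas$ has no wrap‑around step $A_n\to A_1$, since that step is a $(-\Height)$-step (external) and $\Height(\M)\ge 1$ because $\M$ is not trivial. For an arc $A_i$ let $\lambda(A_i)$ be the least index $m$ with $t(A_m)>s(A_i)$; equivalently $\lambda(A_i)-1$ is the index of the rightmost arc lying entirely to the left of $A_i$ (and $\lambda(A_i)=1$ exactly when $\Height(A_i)=0$). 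Two facts will be used repeatedly. First, $\lambda$ is non‑decreasing in the index: if $i\le i'$ then every arc ending to the right of $s(A_{i'})$ also ends to the right of $s(A_i)$, hence $\lambda(A_i)\le\lambda(A_{i'})$. Second, reading the definitions of noses and hollows in terms of consecutive extremes: if $A_i\to A_j$ is a $1$-nose then $t(A_i)$ immediately precedes $s(A_j)$, which forces $A_i$ to be the last arc disjoint from and to the left of $A_j$ and $A_{i+1}$ to meet $A_j$, so $\lambda(A_j)=i+1$; symmetrically, if $A_i\to A_j$ is a $(-1)$-hollow then $s(A_i)$ immediately precedes $t(A_j)$, which forces $A_j$ to be the first arc meeting $A_i$, so $\lambda(A_i)=j$.

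The heart of the argument is the claim $(\star)$: if $W=A_{q_0},\dots,A_{q_k}$ is a walk of $\RedMitas$ with $\Height(A_{q_i})\ge\Height(A_{q_0})$ for all $i$ and $\Height(A_{q_k})=\Height(A_{q_0})$, then $q_k\ge q_0$, and $q_k>q_0$ whenever $k\ge 1$. I would prove $(\star)$ by induction on the number of distinct heights occurring in $W$. Write $y=\Height(A_{q_0})$ and let $q_{i_0}<\dots<q_{i_r}$, with $i_0=0$ and $i_r=k$, mark the positions of the height‑$y$ vertices; it suffices to show $q_{i_{s+1}}>q_{i_s}$ for every $s$. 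Between two consecutive height‑$y$ vertices the walk is of one of two types. Either it takes a single $0$-step (the only $\RedMitas$-edge between equal‑height vertices), and then $q_{i_{s+1}}=q_{i_s}+1$. Or it leaves level $y$; since $y$ is minimal it must first rise, so its first edge is a $1$-nose $A_{q_{i_s}}\to A_{q_{i_s+1}}$ with $\Height(A_{q_{i_s+1}})=y+1$ and $\lambda(A_{q_{i_s+1}})=q_{i_s}+1$, and it must later fall back, so its last edge is a $(-1)$-hollow $A_{q_{i_{s+1}-1}}\to A_{q_{i_{s+1}}}$ with $\Height(A_{q_{i_{s+1}-1}})=y+1$ and $\lambda(A_{q_{i_{s+1}-1}})=q_{i_{s+1}}$. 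The intermediate sub‑walk $A_{q_{i_s+1}},\dots,A_{q_{i_{s+1}-1}}$ stays at heights $\ge y+1$, has both endpoints at height $y+1$, and uses strictly fewer distinct heights than $W$, so the induction hypothesis gives $q_{i_{s+1}-1}\ge q_{i_s+1}$; combining this with the monotonicity of $\lambda$ yields $q_{i_{s+1}}=\lambda(A_{q_{i_{s+1}-1}})\ge\lambda(A_{q_{i_s+1}})=q_{i_s}+1>q_{i_s}$. The base case, a walk confined to a single level, consists only of $0$-steps, for which $q_k=q_0+k$. Given $(\star)$, acyclicity is immediate: traversing a putative cycle of $\RedMitas$ from any vertex of minimum height produces a walk as in $(\star)$ with $k\ge 1$ and $A_{q_k}=A_{q_0}$, so $q_k>q_0=q_k$, a contradiction.

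I expect the only genuinely fiddly parts to be (i) the two identities $\lambda(A_j)=i+1$ and $\lambda(A_i)=j$, which need a short case check ruling out an extreme lying strictly between the two consecutive ones in the nose/hollow definition, and (ii) pinning down the induction parameter for $(\star)$ cleanly — counting distinct heights in the walk does the job because every excursion above level $y$ misses level $y$ and so has at least one fewer. The remaining ingredients are routine: the decomposition of a level‑$y$‑bounded walk into $0$-step stretches at level $y$ alternating with excursions above it; the fact that such a walk never uses the wrap‑around step (absent from $\RedMitas$); and the degenerate excursions that reduce to a single vertex carrying an incoming $1$-nose and an outgoing $(-1)$-hollow, for which $(\star)$ holds with equality.
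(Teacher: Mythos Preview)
The paper does not actually prove this lemma; it is quoted from \cite{Mitas1994} and used as a black box (the column function~\eqref{eq:column} and the canonical drawing of Theorem~\ref{thm:canonical drawing} are both \emph{built on top of} acyclicity, so neither can be invoked to establish it). Your argument is therefore not competing with a proof in the paper, and it is correct as written: the two identities $\lambda(A_j)=i+1$ for a $1$-nose and $\lambda(A_i)=j$ for a $(-1)$-hollow follow immediately from ``$t(A_i)s(A_j)$ consecutive'' and ``$s(A_i)t(A_j)$ consecutive'' together with the fact that in a PIG model the ending points are ordered the same way as the beginning points; the monotonicity of $\lambda$ is clear; and the induction on the number of distinct heights goes through because every excursion above level $y$ omits level $y$ entirely. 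The only cosmetic point is the phrase ``$(\star)$ holds with equality'' for the degenerate excursion: what you mean (and what your computation shows) is $q_{i_{s+1}}=q_{i_s}+1$, not $q_k=q_0$.

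It may be worth noting that your proof is more elementary than the machinery surrounding it in the paper: rather than relying on a global coordinate such as the column, you use the local invariant $\lambda$, which lets you handle each level-$y$ excursion independently and avoids any appeal to planarity.
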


Given that $\RedMitas$ is acyclic, we can compute the \emph{column} that every arc of $\M$ occupies in a pictorial description of $\RedMitas$.  The \emph{column} of $A_1$ is $\Column(A_1) = 0$, while, for every $1 < i \leq n$, the \emph{column} of $A_i$ is:
\begin{align}
  \Column(A_i) = \max\left\{
    \begin{array}{r}
       \Column(N) + \varepsilon \\ \Column(H) + 1 \\ \Column(S) + 1
    \end{array} \middle| 
    \begin{array}{l}
      N \to A_i \text{ is a $1$-nose } \\
      H \to A_i \text{ is a $0$-step } \\
      S \to A_i \text{ is a $-1$-hollow } \\
    \end{array}\right\}\label{eq:column}
\end{align}
for a small enough $\varepsilon$ (say $\varepsilon \ll 1/n$); obviously, if $A_i$ is not the end of a nose (resp.\ hollow, step), then the corresponding value in the above equation is $0$.  It is easy to see, by the existence of $0$-steps, that $\Column(I_i) \leq \Column(I_k) \leq \Column(I_j)$ when $A_i$ and $A_j$ are the leftmost and rightmost with height $x$, for every $A_i < A_k < A_j$.  In Figure~\ref{fig:mitas counterexample}, each vertex $A$ of $\Syn$ occupies the coordinate $(\Column(A), \Height(A))$ on the plane, for some imperceptible $\varepsilon$, while each directed edge is a straight arrow.  This pictorial description, which we call the \emph{canonical drawing} of $\RedMitas$, was proposed by Mitas and it is quite useful for simplifying some geometrical arguments.  The reason is that this drawing is a plane digraph; we include a proof of this fact as it is not completely explicit in~\cite{Mitas1994}.

\begin{theorem}[see~\cite{Mitas1994,PirlotVincke1997}]\label{thm:canonical drawing}
  The canonical drawing of $\RedMitas$ is a plane digraph.
\end{theorem}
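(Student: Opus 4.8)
The plan is to verify directly that, with $\varepsilon$ small enough, no two of the straight segments drawn for the edges of $\RedMitas$ cross, using only the combinatorial structure of $\Syn$. First I would record the facts that drive the argument. Since every step of $\Syn$ is a $0$-step or a $1$-step, the heights are nondecreasing along $A_1 < \dots < A_n$; hence each \emph{row} $\{A_i \mid \Height(A_i) = x\}$ is a block of consecutive indices, and every arc of row $x$ precedes every arc of row $x+1$. The edges surviving in $\RedMitas$ are the $1$-noses (from row $x$ up to row $x+1$), the $0$-steps (inside one row, necessarily of the form $A_{i-1}\to A_i$, drawn horizontally) and the $(-1)$-hollows (from row $x+1$ down to row $x$). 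The noses between two consecutive rows form a partial matching, and so do the hollows, because the consecutiveness condition $t(A_i)s(A_j)$ (resp.\ $s(A_i)t(A_j)$) defining a nose (resp.\ hollow) determines either endpoint from the other. Finally, by the recursion~\eqref{eq:column}, $\Column$ increases strictly along every edge of $\RedMitas$, so, iterating the $0$-steps, $\Column$ is strictly increasing with the index inside each row: inside a row the left-to-right order of the drawing coincides with the index order.

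Next I would dispose of all pairs involving a $0$-step: two $0$-steps lie on horizontal lines, and two in the same row have disjoint column ranges apart from a possibly shared endpoint, while a nose or a hollow meets the line $y = x$ only at one of its endpoints — an arc of row $x$, which cannot lie strictly between the two index-consecutive endpoints of a $0$-step of row $x$. So it remains to treat pairs of noses and hollows. Each such edge occupies exactly the strip $x \le y \le x+1$ of its lower row $x$, hence two of them are disjoint unless they share a vertex (when they plainly do not cross) or share an open strip $x < y < x+1$; I would fix such a strip. Inside it the noses form an order-preserving partial matching from row $x$ to row $x+1$ — the nose at $A_k$ goes to the arc with the least beginning point exceeding $t(A_k)$, and $t(\cdot)$ is index-monotone, while strictness follows from uniqueness of heads — so, by the column-order property, distinct noses are order-preserving segments between the parallel lines $y = x$ and $y = x+1$ and do not cross; likewise for hollows.

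The remaining case — a nose $A_k\to A_i$ and a hollow $A_p\to A_q$ in one strip, with $A_k, A_q$ in row $x$ and $A_i, A_p$ in row $x+1$ — is the main obstacle. A one-line analysis of the two linear segments shows that they meet in the open strip only if $\Column(A_k)-\Column(A_q)$ and $\Column(A_i)-\Column(A_p)$ are nonzero of opposite sign (if one of them vanishes the segments share a strip endpoint, which is then a common vertex). Since the rows are ordered and the column order inside a row is the index order, the two sign patterns translate respectively into the index orders $A_q < A_k < A_i < A_p$ and $A_k < A_q < A_p < A_i$. In the first, monotonicity of $s$ and $t$ together with $t(A_k) < s(A_i)$ (the nose is internal) give $t(A_q) < t(A_k) < s(A_i) < s(A_p)$, contradicting $s(A_p) < t(A_q)$, which holds because $\M$ is a PIG model and so the hollow is internal. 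In the second, $s(A_p) < t(A_k)$ and $s(A_i) < t(A_q)$ — otherwise $s(A_p)$ or $t(A_q)$ would fall strictly between the consecutive extremes $t(A_k), s(A_i)$ — so $s(A_p) < t(A_k) < s(A_i) < t(A_q)$ and $t(A_k)$ falls strictly between the consecutive extremes $s(A_p), t(A_q)$ of the hollow, which is again impossible. Hence no two edges cross. The perturbation $\varepsilon$ plays no part in these combinatorial contradictions; it is used only to put the vertices in general position, so that the endpoint incidences noted above are never tangencies — this merely requires avoiding finitely many linear equalities.
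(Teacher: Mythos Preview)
Your proof is correct and follows the same strategy as the paper's: reduce any crossing to a nose--hollow pair sharing a strip and then contradict the consecutiveness of extremes in $\M$. The paper's version is much terser, jumping straight to the single order $A_i < A_y < A_x < A_j$ (your Case~2); your Case~1 is in fact vacuous, since your own observation that $\Column$ increases along every edge of $\RedMitas$ already gives $\Column(A_q) > \Column(A_p)$ and $\Column(A_i) > \Column(A_k)$, which together with $\Column(A_q) < \Column(A_k)$ and $\Column(A_i) < \Column(A_p)$ is an immediate contradiction.
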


\begin{proof}
  Suppose, to obtain a contradiction, that the canonical drawing of $\RedMitas$ it is not a plane graph.  Then, there are two crossing straight lines that correspond to the edges $A_i \to A_j$ and $A_x \to A_y$ with $\Height(A_i) \leq \Height(A_x)$.  By definition, $\RedMitas$ has only $1$-noses, $(-1)$-hollows and $0$-steps, while every vertex $A$ is positioned in $(\Column(A), \Height(A))$.  Hence, it follows that $A_i \to A_j$ is a $1$-nose, $A_x \to A_y$ is a $(-1)$-hollow, and $A_i < A_y < A_x < A_j$.  But this configuration is impossible because it implies that $t(A_i)s(A_j)$ and $s(A_x)t(A_y)$ are consecutive, while $t(A_i) < t(A_y)$ and $s(A_x) < s(A_j)$.
\end{proof}

\begin{corollary}[Theorem~\ref{thm:toroidal}]\label{cor:toroidal}
  If $\M$ is a PCA graph, then $\Syn$ is a toroidal digraph.
\end{corollary}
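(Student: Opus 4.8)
The plan is to prove the statement constructively, by upgrading Mitas' canonical drawing (Theorem~\ref{thm:canonical drawing}) from $\RedMitas$ of a proper interval model to the \emph{full} synthetic graph $\Syn(\M)$ of an arbitrary PCA model, now drawn on a torus. The starting point is that the sub-digraph $\Syn_0$ of $\Syn(\M)$ consisting of the internal $1$-noses, internal $0$-steps and internal $(-1)$-hollows has exactly the edge types handled by Theorem~\ref{thm:canonical drawing}, whose proof is purely a statement about the cyclic order of the extremes of $\M$; hence it applies unchanged to PCA models, provided the column function is available. For that one first checks that $\Syn_0$ is acyclic, as in Lemma~\ref{lem:RedMitas acyclic} — along any walk internal noses and steps strictly increase the index while internal $(-1)$-hollows strictly decrease it, so by \eqref{eq:walk jump} a closed walk would need equally many $1$-noses and $(-1)$-hollows, which the consecutiveness constraints forbid — so that \eqref{eq:column} is well defined. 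This gives a plane drawing of $\Syn_0$ inside the rectangle $R=[0,W]\times[0,\Height+1]$, with all vertices lying in the strip $[0,\Height]$.

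The next step is to read off which boundary identifications of $R$ the remaining edges require. Every external nose has jump $-\Height$ or $1-\Height$ and every external hollow jump $\Height$ or $\Height-1$, always between a vertex of height $0$ and a vertex of height $\Height$ or $\Height-1$; the unique external step $A_n\to A_1$ joins the rightmost height-$\Height$ vertex to the column-$0$, height-$0$ vertex $A_1$. So I would glue the bottom side of $R$ to the top side (making $R$ a cylinder) and then its left side to its right side (making it a torus): each external nose then becomes a short arc that leaves a height-$\Height$ or height-$(\Height-1)$ vertex, crosses the horizontal seam, and lands on a height-$0$ vertex; each external hollow is a short arc in the reverse direction; and the external step is a short arc across the vertical seam. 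The $1$-steps (one per consecutive pair of heights, joining the rightmost vertex of a level to the leftmost vertex of the next) and the $0$-hollows (each joining the two ends of a maximal run of equal-height consecutive arcs) are then routed as arcs running just under the relevant level strip — these are precisely the edges Mitas discards.

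The main obstacle, and the bulk of the work, is showing that all these added arcs can be realized \emph{simultaneously without crossings}. For the $1$-steps and $0$-hollows this is a finite case analysis on the heights and positions of the four arcs defining a potential crossing, in the spirit of the proof of Theorem~\ref{thm:canonical drawing}; for the $0$-hollows one uses that the run $A_j,\dots,A_i$ witnessing a $0$-hollow $A_i\to A_j$ lies at one height with strictly increasing columns, so these edges nest cleanly below the level. Since there may be $\Theta(n)$ external noses and hollows, one must also argue they need only a single handle: because all external noses point to, and all external hollows emanate from, height $0$, the inclusion-freeness of $\M$ together with the consecutiveness constraints force them into a nested circular order along the horizontal seam — for instance, external hollows $A_{i_1}\to A_{j_1}$, $A_{i_2}\to A_{j_2}$ with $i_1<i_2$ cannot have $j_1>j_2$, else $s(A_{i_2})$ would be an extreme strictly between $s(A_{i_1})$ and $t(A_{j_1})$, contradicting that $s(A_{i_1})t(A_{j_1})$ are consecutive. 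I expect the fiddliest subcase to be the external $(1-\Height)$-noses and $(\Height-1)$-hollows, whose height-$(\Height-1)$ endpoint forces the arc one level into the populated part of $R$, where a crossing-free channel between levels $\Height-1$ and $\Height$ must be exhibited; the rest parallels the interval case in~\cite{Mitas1994}.

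A shorter alternative I would keep in reserve is to ``unroll'' $\M$ around its circle into a doubly infinite, $n$-periodic proper interval model $\M^{\infty}$ (which has no external edges), apply the canonical drawing to $\Syn(\M^{\infty})$, observe that it is invariant under the translation shifting the index by $n$ and the height by a fixed amount, and quotient by that translation; this yields a drawing of $\RedMitas(\M)$ on a cylinder, to which only the $1$-steps and $0$-hollows remain to be added. But making the column function of $\M^{\infty}$ genuinely translation periodic needs essentially the same analysis as above, so I do not expect it to be simpler.
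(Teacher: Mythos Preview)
Your approach is essentially the paper's: start from the canonical plane drawing of $\RedMitas$ in a rectangle, identify opposite sides to obtain a torus, and route the remaining edge types across the seams. The one point where the paper does it differently---and more cleanly---is that it sends the $0$-hollows and $1$-steps through the \emph{east} seam (rather than threading them under their level inside the rectangle), and sends the $(-\Height)$-step together with all $(\Height-1)$-hollows and $(1-\Height)$-noses through \emph{both} seams (north first, then east); this sidesteps exactly the ``fiddliest subcase'' you flag, since those edges never have to pass through the populated strip between levels $\Height-1$ and $\Height$. The paper then simply asserts that ``it is not hard to see that such a drawing is always possible,'' so the nesting arguments you sketch are more detail than the paper provides.
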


\begin{proof}
  A torus can be obtained from a rectangle by first pasting its north and south borders together, and then pasting the east end of the obtained cylinder with its west end.  Thus, it suffices to show how to draw $\Syn$ into a rectangle allowing some edges to escape from the north (resp.\ east) into the south (resp.\ west).  Let $\RedMitas$ be obtained from $\Syn$ by removing all the external edges, plus $1$-steps and $0$-hollows.  To draw $\Syn$, first copy the canonical drawing of $\RedMitas$ into the rectangle.  Then, draw all the $0$-hollows and $1$-steps so that they escape through the east, all the $\Height$-hollows and $(-\Height)$-noses so as to run through the north, and the $(-\Height)$-step and all the $(\Height-1)$-hollows and $(1-\Height)$-noses by going through the north first and then through the east.  It is not hard to see that such a drawing is always possible. 
\end{proof}

In the next lemma we take advantage of the canonical drawing to prove that every cycle of $\Syn$ contains exactly one $0$-hollow or $1$-step.  Pirlot also studies the shape of the cycles of $\Syn$~\cite{PirlotTaD1990}, but without taking advantage of Mitas' canonical drawing.  For the next lemma, recall that $\MAG(\W) = -\Hollows_0(\W) - \Steps_1(\W)$ for any cycle $\W$.

\begin{lemma}[{\cite[Proposition~2.11]{PirlotTaD1990}}]\label{lem:one backedge}
  If $\M$ is a PIG model, then $\MAG(\W) = -1$ for any cycle $\W$ of $\Syn$. 
\end{lemma}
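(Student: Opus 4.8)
The proof plan is as follows. Since $\M$ is a PIG model, $\Syn$ has only five kinds of edges — $1$-noses, $0$-steps, $1$-steps, $0$-hollows and $(-1)$-hollows — and $\RedMitas$ is exactly $\Syn$ with the $0$-hollows and the $1$-steps removed, so the edges of $\RedMitas$ are precisely the $1$-noses, $0$-steps and $(-1)$-hollows. Recalling that $\MAG(\W) = -\Hollows_0(\W) - \Steps_1(\W)$ for every cycle $\W$, the statement is equivalent to saying that every cycle of $\Syn$ contains exactly one $0$-hollow or $1$-step, and I would prove the two inequalities separately.

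First, $\MAG(\W)\le -1$: if a cycle $\W$ contained no $0$-hollow and no $1$-step, then all of its edges would lie in $\RedMitas$, so $\W$ would be a cycle of $\RedMitas$, contradicting Lemma~\ref{lem:RedMitas acyclic}. Hence $\Hollows_0(\W)+\Steps_1(\W)\ge 1$.

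Second — and this is the heart of the matter — $\MAG(\W)\ge -1$, i.e. $\Hollows_0(\W)+\Steps_1(\W)\le 1$. The plan is to use the embedding of $\Syn$ produced in the proof of Corollary~\ref{cor:toroidal}, specialized to the PIG case. When $\M$ is PIG there are no external edges, so that construction never uses the north--south identification: the canonical drawing of $\RedMitas$, which is plane by Theorem~\ref{thm:canonical drawing}, is placed inside a fundamental rectangle, and only the $0$-hollows and the $1$-steps are routed off the east side and back in from the west, each crossing the east seam exactly once and all in the same direction. Thus $\Syn$ embeds without crossings on a cylinder $\Gamma$ so that every $\RedMitas$-edge is disjoint from a fixed meridian of $\Gamma$, while every $0$-hollow and every $1$-step crosses that meridian once, all with the same sign. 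Now let $\W$ be a cycle of $\Syn$; its vertices are distinct and, by the embedding, no two of its edges cross, so $\W$ is realized as a simple closed curve on $\Gamma$. On the one hand, a simple closed curve on an annulus has winding number in $\{-1,0,1\}$ around the core; on the other hand, since every $\RedMitas$-edge avoids the meridian and every $0$-hollow and $1$-step crosses it once with the same sign, the absolute value of this winding number is exactly $\Hollows_0(\W)+\Steps_1(\W)$. Combining with the first part gives $\Hollows_0(\W)+\Steps_1(\W)=1$, i.e. $\MAG(\W)=-1$.

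The step I expect to be the main obstacle is making precise that the drawing of Corollary~\ref{cor:toroidal}, restricted to PIG, can indeed be taken so that each $0$-hollow and each $1$-step crosses one fixed meridian exactly once and all crossings have the same sign; once that is granted, the rest reduces to the elementary topological fact that a simple closed curve cannot wind twice around an annulus. (A more combinatorial alternative, avoiding surface topology, would be to decompose $\W$ into its maximal $\RedMitas$-subpaths separated by its $0$-hollows and $1$-steps and to derive a contradiction from planarity of the canonical drawing together with the strict monotonicity of $\Column$ along $\RedMitas$-edges if two such ``back edges'' occurred; but the winding-number argument is cleaner.)
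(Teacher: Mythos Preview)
Your proof is correct, and it takes a genuinely different route from the paper's. Both arguments share the easy direction ($\MAG(\W)\le -1$ via Lemma~\ref{lem:RedMitas acyclic}), but for the hard inequality the paper stays purely in the plane: it decomposes $\W$ into its maximal $\RedMitas$-subpaths, picks one $B_1,\ldots,B_j$ whose starting height $\Height(B_1)$ is maximum, locates a second such subpath $B_{-k},\ldots,B_0$ ending at the predecessor $B_0$ of $B_1$, and then uses an intermediate-value argument on the two corresponding curves in the canonical drawing of $\RedMitas$ (Theorem~\ref{thm:canonical drawing}) to force them to intersect --- hence, by planarity, to share a vertex, contradicting that $\W$ is a cycle. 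This is precisely the ``more combinatorial alternative'' you sketch at the end.

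Your topological argument trades that height bookkeeping for a single invocation of the standard fact that a simple closed curve on a cylinder winds at most once; it is conceptually cleaner, and the only point needing care --- that the construction in Corollary~\ref{cor:toroidal} can route every $0$-hollow and $1$-step once across the east seam, all with the same sign --- is exactly what that corollary asserts (and, in the PIG case, the north--south identification is indeed unused). The paper's approach, in exchange, is entirely self-contained within the canonical drawing and avoids appealing to surface topology.
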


\begin{proof}
  Note that $\MAG = -\Hollows_0 - \Steps_1$ \eqref{eq:mag}, hence, by Lemma~\ref{lem:RedMitas acyclic}, $\MAG < 0$.  Suppose, to obtain a contradiction, that $\MAG < -1$.  Then, $\W$ has a subpath $B_1, \ldots, B_j$ with no $0$-hollows nor $1$-steps such that $B_0 \to B_1$ and $B_j \to B_{j+1}$ each is either a $0$-hollow or a $1$-step of $\W$.  Among all such possible paths, take $\W$ so that $\Height(B_1)$ is maximum.  Note that $B_0 \neq B_j$, thus $\W$ has another path $B_{-k}, \ldots, B_{0}$ such that $B_{-k}$ is its unique leftmost vertex.  By the maximality of $\Height(B_1)$, it follows that $\Height(B_0) \geq \Height(B_1)-1 \geq \Height(B_{-k})$ while, since $\Height(B_j) \leq \Height(B_{j+1}) \leq \Height(B_1) - 1 \leq \Height(B_0)$ and $B_0 \neq B_j$, it follows that $\Height(B_0) > \Height(B_j)$.
  
  Call ${\rm Gr}^+$ to the curve that results by traversing $B_1, \ldots, B_j$ in the canonical drawing of $\RedMitas$.  Note that ${\rm Gr}^+$ is indeed the graph of a continuous function on $\mathbb{R} \to \mathbb{R}$ because $\Column(B_{i+1}) >  \Column(B_{i})$ for every $1 \leq i < j$ by \eqref{eq:column}.  Similarly, the curve ${\rm Gr}^-$ that results by traversing $B_{-k}, \ldots, B_0$ in the canonical drawing of $\RedMitas$ is also the graph of a continuous function.  Since $\Height(B_{i+1}) = \Height(B_i) \pm 1$ for every $i \in \{-k, \ldots, j\} \setminus 0$, it follows that ${\rm Gr}^+$ contains a vertex with height $x$ for every $\Height(B_j) \leq x \leq \Height(B_1)$ and ${\rm Gr}^-$ contains a vertex with height $x$ for every $\Height(B_{-k}) \leq x \leq \Height(B_0)$.
  Then, taking into account that $B_1$ and $B_{-k}$ are leftmost vertices with $\Height(B_1) > \Height(B_{-k})$ and $B_j$ and $B_0$ are rightmost vertices with $\Height(B_0) > \Height(B_j)$, we obtain that ${\rm Gr}^+$ and ${\rm Gr}^-$ intersect.  Hence, by Theorem~\ref{thm:canonical drawing}, $B_1, \ldots, B_j$ and $B_{-k}, \ldots, B_{0}$ have a nonempty intersection, which implies that $\W$ is not a cycle.
\end{proof}

By Lemma~\ref{lem:one backedge} and \eqref{eq:sep}, $\SEP_\Descriptor(\W) = \CONST(\W,\Thres,\BegDist) - \Len - 1$ for every cycle $\W$.  Then, by Theorem~\ref{thm:separation constraints} and Lemma~\ref{lem:one backedge}, the minimum $\Len^*$ such that $\M$ is equivalent to an $(\Len^*, \Thres, \BegDist)$-IG model is 
\begin{align*}
 \Len^*+1 =& \max\{\CONST(\W,\Thres, \BegDist)  \mid \W \text{ is a cycle of } \Syn\} \\
          =& \max\{\CONST(\W,\Thres, \BegDist)  \mid \W \text{ is a path } B_1, \ldots, B_j \text{ of $\RedMitas$ for a $1$-step or $0$-hollow $B_j \to B_1$}\} \\
          =& \max\{\IDist{\CONST_{\Thres, \BegDist}}(\RedMitas, A_i, A_j)  \mid A_j \to A_i \text{ is either a $1$-step or $0$-hollow}\}.
\end{align*}
Since $\RedMitas$ is acyclic, we can compute $\IDist{\CONST_{\Thres, \BegDist}}(\RedMitas, A_i, A_j)$ in $O(n)$ time and space for any given $1$-step or $0$-hollow $A_j \to A_i$ of $\Syn$.  Then, $\Len^*$ is obtained in $O(\Height n)$ time.

Once $\Len^*$ has been obtained, $\Unit = \Unit(\infty, \Len^*, \Thres, \BegDist)$ can be constructed in $O(n^2)$ time and linear space as in Section~\ref{sec:synthetic graph:bounded algorithm}.  We claim that $\Unit$ is a $(\Thres, \BegDist)$-minimal UIG model.  Indeed, $\Unit$ satisfies \minuigref{1} by the minimality of $\Len^*$.  To see that $\Unit$ satisfies \minuigref{2}, consider any path $\W$ of $\Syn$ from $A_1$ to $A_j$.  Note that $\Height(A_j) \geq \Steps_{1} + \Hollows_{0} = -\MAG$ because no leftmost vertex is traversed twice by $\W$.  Therefore, by \eqref{eq:sep}, 
\[\SEP_{(\infty, \Len, \Thres, \BegDist)} = (\Len+1)(\Height(A_j)+\MAG) + \CONST \geq (\Len^*+1)(\Height(A_j)+\MAG) + \CONST = \SEP_{(\infty, \Len^*, \Thres, \BegDist)}\]
for any $\Len \geq \Len^*$.  Consequently, since $s(A_j) \geq \IDist{\SEP_{(\infty, \Len, \Thres, \BegDist)}}(A_1, A_j)$ in any $(\Len, \Thres, \BegDist)$-UIG model equivalent to $\Unit$, it follows that $\Unit$ satisfies \minuigref{2} as well.  We conclude that $O(n^2)$ time and linear space suffices to solve the \emph{minimal UIG representation} (\MinUIG) problem in which $\M$ and $\Thres, \BegDist \in \mathbb{Q}_{\geq 0}$ are given and a $(\Thres, \BegDist)$-minimal UIG model equivalent to $\M$ must be generated.

\begin{theorem}\label{thm:minimal algorithm}
  \MinUIG can be solved in $O(n^2)$ time and linear space, for any $\Thres, \BegDist \in \mathbb{Q}$.
\end{theorem}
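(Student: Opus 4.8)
The plan is to carry out the two-phase algorithm foreshadowed above: first compute the minimum feasible arc length $\Len^{*}$, then build the model $\Unit=\Unit(\infty,\Len^{*},\Thres,\BegDist)$ and verify it is $(\Thres,\BegDist)$-minimal. Throughout I would work with the reduced synthetic graph of the input PIG model $\M$, whose only edges are $1$-noses, $0$- and $1$-steps, and $0$- and $(-1)$-hollows, and I would lean on two structural facts already proved: $\RedMitas(\M)$ is acyclic (Lemma~\ref{lem:RedMitas acyclic}), whereas every cycle $\W$ of $\Syn$ has $\MAG(\W)=-1$, i.e.\ contains exactly one $0$-hollow or $1$-step (Lemma~\ref{lem:one backedge}).

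For the first phase I would argue that, by Lemma~\ref{lem:one backedge} together with~\eqref{eq:sep}, $\SEP_{(\infty,\Len,\Thres,\BegDist)}(\W)=\CONST(\W,\Thres,\BegDist)-\Len-1$ for every cycle $\W$ of $\Syn$; hence by Theorem~\ref{thm:separation constraints}, $\M$ is equivalent to an $(\Len,\Thres,\BegDist)$-IG model exactly when $\Len+1\ge\CONST(\W,\Thres,\BegDist)$ for every such cycle, so $\Len^{*}+1$ equals the maximum $\CONST$-weight of a cycle of $\Syn$. Since by Lemma~\ref{lem:one backedge} every cycle is a path of $\RedMitas$ closed by a single $1$-step or $0$-hollow, this maximum is read off from the longest-$\CONST$ paths of the acyclic $\RedMitas$ ending at the tails of those closing edges, exactly as in the display preceding the theorem. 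As $\RedMitas$ is acyclic, these values are obtained by topological-order sweeps in $O(\Height n)=O(n^{2})$ time and linear space; when $\Thres,\BegDist$ are rational one keeps each $\CONST$-value as a pair of small integer coefficients of $\Thres$ and $\BegDist$, as with the distance tuples of Section~\ref{sec:synthetic graph:bounded algorithm}, so the bounds are unaffected.

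For the second phase I would set $s(U_i)=\IDist{\SEP_{(\infty,\Len^{*},\Thres,\BegDist)}}(A_1,A_i)$ and compute these distances by the Bellman--Ford pass of Section~\ref{sec:synthetic graph:bounded algorithm}, with $A_1$ taking the role of $A_0$ (legitimate because $\LeftBound=\RightBound=0$, the reduced $\Syn$ has no external edges, and the choice of $\Len^{*}$ makes every cycle of $\Syn$ nonpositive); with the same distance-tuple bookkeeping this costs $O(n^{2})$ time and linear space. Then $\Unit$ satisfies \minuigref{1} by the minimality of $\Len^{*}$, and for \minuigref{2} I would show that a larger arc length cannot push any beginning point to the left: for a path $\W$ of $\Syn$ from $A_1$ to $A_j$ no leftmost vertex is visited twice, so $\Height(A_j)\ge\Steps_1(\W)+\Hollows_0(\W)=-\MAG(\W)$; thus $\Height(A_j)+\MAG(\W)\ge 0$ and, by~\eqref{eq:sep}, $\SEP_{(\infty,\Len,\Thres,\BegDist)}(\W)=(\Len+1)(\Height(A_j)+\MAG(\W))+\CONST(\W,\Thres,\BegDist)$ is nondecreasing in $\Len$, whence $\IDist{\SEP_{(\infty,\Len,\Thres,\BegDist)}}(A_1,A_j)\ge s(U_j)$ for every $\Len\ge\Len^{*}$; since in any $(\Len,\Thres,\BegDist)$-IG model equivalent to $\Unit$ the point $s(A_j)$ is at least $\IDist{\SEP_{(\infty,\Len,\Thres,\BegDist)}}(A_1,A_j)$, condition \minuigref{2} follows. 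Combining the two phases yields the $O(n^{2})$-time, linear-space bound.

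The step I expect to be the main obstacle is making the minimality claim and the complexity coexist. The delicate point is that the beginning points $s(U_i)$ \emph{cannot} be computed from the acyclic $\RedMitas$: equation~\eqref{eq:mitas bug} is false, because although $0$-hollows and $1$-steps are irrelevant to $\Len^{*}$ (each merely closes a cycle), they still contribute to $\IDist{\SEP}(\Syn,A_1,A_i)$, so the longest-path computation must run on the full, cyclic $\Syn$ --- which is exactly why a Bellman--Ford pass of cost $O(n^{2})$ replaces Mitas' $O(n)$ topological sweep. One must also take care to invoke $\Height(A_j)\ge-\MAG(\W)$ for \emph{paths} from $A_1$ (so that the quantity controlling $s(A_j)$ is $\IDist{\SEP}$, not $\Dist{\SEP}$), and to make the comparison between the $\Len$- and $\Len^{*}$-models hold uniformly over all arcs; this uniform monotonicity is precisely where Mitas' minimality argument breaks.
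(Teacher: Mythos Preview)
Your proposal is correct and follows essentially the same approach as the paper: the same two-phase algorithm (compute $\Len^{*}$ via longest $\CONST$-paths in the acyclic $\RedMitas$, then build $\Unit$ by Bellman--Ford on the full $\Syn$), the same minimality argument for \minuigref{2} via $\Height(A_j)\ge -\MAG(\W)$ on paths from $A_1$, and the same complexity accounting. Your added commentary on why the second phase cannot be shortcut through $\RedMitas$ (the failure of~\eqref{eq:mitas bug}) is also in line with the paper's discussion.
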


\section{Powers of paths and cycles}
\label{sec:powers}

Powers of paths and cycles are intimately related to UIG and UCA graphs, respectively.  For any graph $G$, its \emph{$k$-th power} $G^k$ is the graph obtained from $G$ by adding an edge between $v$ and $w$ whenever there is a path in of length at most $k$ joining them.  In this section we write $P_q$ and $C_q$ to denote the path and cycle graphs with $q$ vertices, respectively.  Lin et al.~\cite{LinRautenbachSoulignacSzwarcfiterDAM2011} noted that $G$ is a UCA (resp.\ UIG) graph if and only if $G$ is an induced subgraph of $C_q^k$ (resp.\ $P_q^k$) for some $q,k$ (see also~\cite{FineHarropJSL1957} for UIG graphs and \cite{GolumbicHammerJA1988} for UCA graphs).  

In~\cite{CostaDantasSankoffXuJBCS2012}, Costa et al.\ propose a specialized $O(n^2)$ time and space algorithm whose purpose is to find the minimum values $k$ and $q(k)$ such that a UIG graph $G$ is an induced subgraph of $P_{q(k)}^k$.  The reason for writing $q(k)$ dependent on $k$ is to be as truthful to~\cite{CostaDantasSankoffXuJBCS2012} as we can; they always write the number of vertices as a function on the power.  This is not important, though, as we know that $q$ in independent of $k$ by Pirlot's minimality Theorem~\cite{PirlotTaD1990}.  That is, $q$ is the minimum such that $G$ is an induced subgraph of $P_q^k$ for every possible $k$.  Mitas' algorithm could have been applied to obtain $k$ and $q$ in $O(n)$ time and space, under the assumption that it is correct.  Interestingly, Pirlot's Theorem and Mitas' algorithm predate~\cite{CostaDantasSankoffXuJBCS2012} for  at least fifteen years.  Moreover, \cite[Section 9]{Soulignac2010}, which is referenced within~\cite{CostaDantasSankoffXuJBCS2012}, mentions that Mitas' algorithm could be adapted to work when the input is a PIG model. The purpose of this section is to apply the minimization algorithms so as to find powers of paths and cycles supergraphs.

Let $\C_q^k$ (resp.\ $\MP_q^k$) be the $(2q, 2k+1)$-CA (resp.\ $(2k+1)$-IG) model that has an arc with beginning point $2i$ for every $0 \leq i < q$.  It is not hard to see that $\C_q^k$ (resp. $\MP_q^k$) is a $(1,0)$- and $(1,1)$-minimal model representing $C_q^k$ (resp.\ $P_q^k$).  We say that a $(\Circ, \Len)$-CA (resp.\ $\Len$-IG) model $\M^*$ is \emph{completable} when $\M^*$ can be obtained by removing arcs from $\C_q^k$ (resp.\ $\MP_q^k$) for some $k, q \geq 0$.  In such case, $\C_q^k$ (resp. $\MP_q^k$) is referred to as the \emph{completion} of $\M^*$, while $\M^*$ is said to be a \emph{$(k,q)$-extension} of $\M$ for every UCA (resp.\ UIG) model $\M$ equivalent to $\M^*$.  Note that $\M^*$ is completable if and only if:
\begin{enumerate}[({ext}$_1$)]
  \item $\Len$ is odd, \label{def:ext-1}
  \item $\Circ$ is even, and \label{def:ext-2}
  \item all its beginning points are even (thus $\M^*$ is a $(\Circ, \Len, 1,1)$-CA model).\label{def:ext-3}
\end{enumerate}
\newcommand{\extref}[1]{\itemref{ext}{ext}{#1}}

Under this new terminology, the result by Lin et al.~\cite{LinRautenbachSoulignacSzwarcfiterDAM2011} states that every UCA (resp.\ UIG) model $\M$ admits a $(k,q)$-extension $\M^*$ for some $k, q \geq 0$.  In analogy to minimal models, we say that $\M^*$ is a \emph{minimal extension} of $\M$ when $q \leq q'$ and $k \leq k'$ for every $(k',q')$-extension of $\M$.  The \emph{minimal power of a cycle (resp.\ path)} \MinPC (resp.\ \MinPP) problem consists of finding $\M^*$ when the UCA (resp.\ UIG) model $\M$ is given as input.  A priori, $\M$ could have no minimal extensions.  But, if $\M^*$ is the minimal extension of $\M$, then, clearly, $k$ and $q$ are the minimum values such that $G(\M)$ is an induced subgraph of $C_q^k$ (resp.\ $P_q^k$).  We now discuss how to solve \MinPC and \MinPP.

The fact that $\M$ admits a minimal extension follows by Lemma~\ref{lem:minimal model} and Theorem~\ref{thm:separation constraints}.  Indeed, if $\Len^*$ is the minimum odd number such that $\M$ is equivalent to a $(\Circ, \Len^*, 2)$-CA model, and $\Circ^*$ is the minimum even number such that $\M$ is equivalent to a $(\Circ^*, \Len, 1,1)$-CA model, then, by Lemma~\ref{lem:minimal model}, $\M$ is equivalent to a $(\Circ^*, \Len^*, 1,1)$-CA model.  Furthermore, $\SEP_{\Circ^*, \Len^*, 1,1}(A_i \to A_j)$ is even for every edge $A_i \to A_j$ of $\Syn$, by \sepref{1}--\sepref{4}.  Thus, all the beginning points of $\M^* = \U(\M, \Circ^*, \Len^*,  1,1)$ are even.  Then, $\M^*$ is completable by \extref{1}--\extref{3}, while it is equivalent to $\M$ by Theorem~\ref{thm:separation constraints}.  That is, $\M^*$ is the minimal extension of $\M$ and, thus, the solution to \MinPC.  The values $\Len^*$ and $\Circ^*$ can be found $O(n^4\log n)$ time with an algorithm similar to the one in Section~\ref{sec:minimal uca}.  

For the special case in which $\M$ is a UIG model, we observe that any $(1,1)$-minimal model $\M^*$ equivalent to $\M$ is a minimal extension of $\M$.  Just recall that the length $\Len^*$ of the arcs in $\M^*$ is equal to $\CONST(\W,1,1)-1$ for some path $\W$ of $\Syn(\M^*)$.  Since $\CONST(\W,1,1)$ is even, it follows that $\Len^*$ is odd and, thus, $\SEP_{\infty, \Len^*, 1,1}(A_i \to A_j)$ is even for every edge $A_i \to A_j$ of $\Syn(\M^*)$.  By \extref{1}--\extref{3}, this implies that $\M^*$ is an extension of $\M$ which, of course, is minimal by \minuigref{1}~and~\minuigref{2}.  By Theorem~\ref{thm:minimal algorithm}, \MinPP is solvable in $O(n^2)$ time and linear space.

\section{Further remarks}
\label{sec:remarks}

Synthetic graphs proved to be an important tool for studying how do the UIG representations of PIG graphs look like.  The generalization to PCA models is direct; the fact that some arcs wrap around the circle is not important for defining the synthetic graph. To represent the separation constraints that an equivalent UCA model must satisfy, all we had to include to Pirlot's original formulation was the variable $\Circ$ representing the circumference of the circle.  Generalizations of simple ideas from PIG to PCA graphs are not always as easy to obtain.  Unfortunately, Pirlot's ideas were introduced in the context of semiorders and were not exploited in the context of PCA graphs; the recognition problem of UCA graphs in polynomial time could have been solved more than a decade earlier.  In this closing section we provide some remarks and discuss some open problems.

Our definition of UCA descriptors states that every pair of beginning points should be separated by $\Thres + \BegDist$ distance.  An obvious generalization to \uRep and \BoundRepBoth is to replace $\BegDist$ with a function $\BegDist \colon \A \to \mathbb{Q}_{\geq 0}$ that indicates, for each arc $A_i$, the separation between $s(A_i)$ and the next beginning point $s(A_{i+1})$.  The reader can check that Theorem~\ref{thm:separation constraints} holds for this generalization as well.  All we need to do is to replace the value $\BegDist$ with $\BegDist(A_i)$ for each step $A_i \to A_{i+1}$.  Moreover, we can use similar functions to further separate $t(A_i)$ from $s(A_j)$ for every nose $A_i \to A_j$, and $s(A_i)$ from $t(A_j)$ for any hollow $A_i \to A_j$.  We did not consider these generalization for the sake of simplicity and notation.

In Section~\ref{sec:minimal uca} we gave a simple polynomial algorithm to transform a UCA model $\M$ into a minimal $(\Circ^*, \Len^*, \Thres, \BegDist)$-CA model.  The algorithm works by performing a linear search on $\Len^*$ and a binary search on $\Circ^*$.  An obvious idea to improve its running time is to replace the linear search on $\Len^*$ with a binary search.  Unfortunately, this idea is not feasible at first sight because we cannot claim 
\[L = \{\Len \in \mathbb{N} \mid \M \text{ is equivalent to a $(\Circ, \Len, \Thres, \BegDist)$-CA model for some $\Circ\in\mathbb{N}$}\}\]
to be a range.  For instance, $C^{4}_{11}$ admits a $(22, 9)$-CA model, but it admits no $(\Circ, 10)$-CA model, whatever value of $\Circ$ is.  This is just one more example of a property that is lost when the linear structure of PIG models is replaced by the circular structure of PCA graphs as $L = [\Len^*, \infty)$ when $\M$ is PIG.

As calculated in Section~\ref{sec:minimal uca}, the running time of the minimization algorithm is $O((\Thres+\BegDist)n^4\log (n + \Thres + \BegDist))$.  This bound is not tight, as the actual running time is $O(\Len^* n^2\log (n\Len^*))$, and $\Len^*$ could be much lower than $(\Thres+\BegDist)n^2$.  As a matter of fact, we developed a simple program for testing if a UCA model is equivalent to some $(\Circ, 2n)$-CA model.  We tested it on many input UCA models and, in all cases, the program was successful.

In Section~\ref{sec:minimal uig} we fixed Mitas' algorithm so as to solve the minimization problem for UIG models.  Unfortunately, the running time of the patched algorithm is $O(n^2)$.  There are two bottlenecks in this algorithm.  First, we have to compute the minimum length value $\Len^*$.  Then, we have to apply the Bellman-Ford algorithm to compute the actual model.  With respect to the space complexity, it is not hard to observe that $\Len^*$ can be computed in unambiguous logspace.  Indeed, all we have to do is to find the distance between the leftmost and rightmost arcs for every height.  As the canonical drawing is a plane graph with $O(1)$ vertices with $0$ in-degree, this problem requires unambiguous logspace~\cite{LimayeMahajanNimbhorkarCJTCS2010}.  Finding logspace algorithms to compute $\Len^*$ and the minimal model remain as open problems.

Finally, it should be noted that a UCA graph may admit many \emph{non-equivalent} minimal UCA models.  Indeed, a UCA graph may admit an exponential number of non-equivalent models, each of which is equivalent to a minimal UCA model.  It makes sense, then, to say that a model $\M$ is \emph{minimum} when it satisfies \minucaref{1} and \minucaref{2} for every model $\M'$ such that $G(\M)$ is isomorphic to $G(\M')$.  As it was noted by Huang~\cite{HuangJCTSB1995}, every connected and co-connected PCA graph admits a unique PCA model, up to equivalence and full reversal.  Thus, any minimal model of a connected and co-connected PCA graph is minimum.  Similarly, every disconnected PCA graph is PIG, and all its models can be obtained from a model $\M$ by exchanging the order in which their components appear from $0$, and reversing some of the components.  Thus, again, any minimal model of $G(\M)$ is minimum.  Co-disconnected PCA graphs share a similar property: all their PCA models can be obtained from a PCA model $\M$ by exchanging the order in which its co-components appear, plus reversing some co-components~\cite{HuangJCTSB1995}.  Thus, one is tempted to think that all the minimal PCA models are minimum, yet this is not the case.  Figure~\ref{fig:minimum model} shows two $1$-minimal $(18, 7)$-CA and $(20, 8)$-CA models that represent the graph whose co-components are $P_2 \cup P_1$ and $P_4$.  We leave as open the problem of computing the minimum UCA model.

\begin{figure}
  \hfil  \includegraphics{\figurepath fig-pca-minimal1} \hfil \includegraphics{\figurepath fig-pca-minimal2} \hfil
  
  \caption{Two minimal UCA models representing the same graph.}\label{fig:minimum model}
\end{figure}

\small

\end{document}